\newcommand{\E}{\mathbb{E}}
\newcommand{\p}{\partial}
\newcommand{\Var}{\ensuremath{\textrm{Var}}}
\newtheorem{Assumption}{Assumption}
\newtheorem{Example}{Example}
\newtheorem{Corollary}{Corollary}
\newtheorem{Lemma}{Lemma}
\newtheorem*{Proof*}{Proof}
\newtheorem{Proposition}{Proposition}
\definecolor{snow}{rgb}{1.000,0.980,0.980}
\definecolor{snow1}{rgb}{1.000,0.980,0.980}
\definecolor{snow2}{rgb}{0.933,0.914,0.914}
\definecolor{snow3}{rgb}{0.804,0.788,0.788}
\definecolor{snow4}{rgb}{0.545,0.537,0.537}
\definecolor{GhostWhite}{rgb}{0.973,0.973,1.000}
\definecolor{WhiteSmoke}{rgb}{0.961,0.961,0.961}
\definecolor{gainsboro}{rgb}{0.863,0.863,0.863}
\definecolor{FloralWhite}{rgb}{1.000,0.980,0.941}
\definecolor{OldLace}{rgb}{0.992,0.961,0.902}
\definecolor{linen}{rgb}{0.980,0.941,0.902}
\definecolor{AntiqueWhite}{rgb}{0.980,0.922,0.843}
\definecolor{PapayaWhip}{rgb}{1.000,0.937,0.835}
\definecolor{BlanchedAlmond}{rgb}{1.000,0.922,0.804}
\definecolor{bisque}{rgb}{1.000,0.894,0.769}
\definecolor{PeachPuff}{rgb}{1.000,0.855,0.725}
\definecolor{NavajoWhite}{rgb}{1.000,0.871,0.678}
\definecolor{moccasin}{rgb}{1.000,0.894,0.710}
\definecolor{cornsilk}{rgb}{1.000,0.973,0.863}
\definecolor{ivory}{rgb}{1.000,1.000,0.941}
\definecolor{LemonChiffon}{rgb}{1.000,0.980,0.804}
\definecolor{seashell}{rgb}{1.000,0.961,0.933}
\definecolor{honeydew}{rgb}{0.941,1.000,0.941}
\definecolor{MintCream}{rgb}{0.961,1.000,0.980}
\definecolor{azure}{rgb}{0.941,1.000,1.000}
\definecolor{AliceBlue}{rgb}{0.941,0.973,1.000}
\definecolor{lavender}{rgb}{0.902,0.902,0.980}
\definecolor{LavenderBlush}{rgb}{1.000,0.941,0.961}
\definecolor{MistyRose}{rgb}{1.000,0.894,0.882}
\definecolor{white}{rgb}{1.000,1.000,1.000}
\definecolor{black}{rgb}{0.000,0.000,0.000}
\definecolor{DarkSlateGray}{rgb}{0.184,0.310,0.310}
\definecolor{DimGray}{rgb}{0.412,0.412,0.412}
\definecolor{SlateGray}{rgb}{0.439,0.502,0.565}
\definecolor{LightSlateGray}{rgb}{0.467,0.533,0.600}
\definecolor{gray}{rgb}{0.745,0.745,0.745}
\definecolor{LightGray}{rgb}{0.827,0.827,0.827}
\definecolor{MidnightBlue}{rgb}{0.098,0.098,0.439}
\definecolor{navy}{rgb}{0.000,0.000,0.502}
\definecolor{NavyBlue}{rgb}{0.000,0.000,0.502}
\definecolor{CornflowerBlue}{rgb}{0.392,0.584,0.929}
\definecolor{DarkSlateBlue}{rgb}{0.282,0.239,0.545}
\definecolor{SlateBlue}{rgb}{0.416,0.353,0.804}
\definecolor{MediumSlateBlue}{rgb}{0.482,0.408,0.933}
\definecolor{LightSlateBlue}{rgb}{0.518,0.439,1.000}
\definecolor{MediumBlue}{rgb}{0.000,0.000,0.804}
\definecolor{RoyalBlue}{rgb}{0.255,0.412,0.882}
\definecolor{blue}{rgb}{0.000,0.000,1.000}
\definecolor{DodgerBlue}{rgb}{0.118,0.565,1.000}
\definecolor{DeepSkyBlue}{rgb}{0.000,0.749,1.000}
\definecolor{SkyBlue}{rgb}{0.529,0.808,0.922}
\definecolor{LightSkyBlue}{rgb}{0.529,0.808,0.980}
\definecolor{SteelBlue}{rgb}{0.275,0.510,0.706}
\definecolor{LightSteelBlue}{rgb}{0.690,0.769,0.871}
\definecolor{LightBlue}{rgb}{0.678,0.847,0.902}
\definecolor{PowderBlue}{rgb}{0.690,0.878,0.902}
\definecolor{PaleTurquoise}{rgb}{0.686,0.933,0.933}
\definecolor{DarkTurquoise}{rgb}{0.000,0.808,0.820}
\definecolor{MediumTurquoise}{rgb}{0.282,0.820,0.800}
\definecolor{turquoise}{rgb}{0.251,0.878,0.816}
\definecolor{cyan}{rgb}{0.000,1.000,1.000}
\definecolor{LightCyan}{rgb}{0.878,1.000,1.000}
\definecolor{CadetBlue}{rgb}{0.373,0.620,0.627}
\definecolor{MediumAquamarine}{rgb}{0.400,0.804,0.667}
\definecolor{aquamarine}{rgb}{0.498,1.000,0.831}
\definecolor{DarkGreen}{rgb}{0.000,0.392,0.000}
\definecolor{DarkOliveGreen}{rgb}{0.333,0.420,0.184}
\definecolor{DarkSeaGreen}{rgb}{0.561,0.737,0.561}
\definecolor{SeaGreen}{rgb}{0.180,0.545,0.341}
\definecolor{MediumSeaGreen}{rgb}{0.235,0.702,0.443}
\definecolor{LightSeaGreen}{rgb}{0.125,0.698,0.667}
\definecolor{PaleGreen}{rgb}{0.596,0.984,0.596}
\definecolor{SpringGreen}{rgb}{0.000,1.000,0.498}
\definecolor{LawnGreen}{rgb}{0.486,0.988,0.000}
\definecolor{green}{rgb}{0.000,1.000,0.000}
\definecolor{chartreuse}{rgb}{0.498,1.000,0.000}
\definecolor{MediumSpringGreen}{rgb}{0.000,0.980,0.604}
\definecolor{GreenYellow}{rgb}{0.678,1.000,0.184}
\definecolor{LimeGreen}{rgb}{0.196,0.804,0.196}
\definecolor{YellowGreen}{rgb}{0.604,0.804,0.196}
\definecolor{ForestGreen}{rgb}{0.133,0.545,0.133}
\definecolor{OliveDrab}{rgb}{0.420,0.557,0.137}
\definecolor{DarkKhaki}{rgb}{0.741,0.718,0.420}
\definecolor{khaki}{rgb}{0.941,0.902,0.549}
\definecolor{PaleGoldenrod}{rgb}{0.933,0.910,0.667}
\definecolor{LightGoldenrodYellow}{rgb}{0.980,0.980,0.824}
\definecolor{LightYellow}{rgb}{1.000,1.000,0.878}
\definecolor{yellow}{rgb}{1.000,1.000,0.000}
\definecolor{gold}{rgb}{1.000,0.843,0.000}
\definecolor{LightGoldenrod}{rgb}{0.933,0.867,0.510}
\definecolor{goldenrod}{rgb}{0.855,0.647,0.125}
\definecolor{DarkGoldenrod}{rgb}{0.722,0.525,0.043}
\definecolor{RosyBrown}{rgb}{0.737,0.561,0.561}
\definecolor{IndianRed}{rgb}{0.804,0.361,0.361}
\definecolor{SaddleBrown}{rgb}{0.545,0.271,0.075}
\definecolor{sienna}{rgb}{0.627,0.322,0.176}
\definecolor{peru}{rgb}{0.804,0.522,0.247}
\definecolor{burlywood}{rgb}{0.871,0.722,0.529}
\definecolor{beige}{rgb}{0.961,0.961,0.863}
\definecolor{wheat}{rgb}{0.961,0.871,0.702}
\definecolor{SandyBrown}{rgb}{0.957,0.643,0.376}
\definecolor{tan}{rgb}{0.824,0.706,0.549}
\definecolor{chocolate}{rgb}{0.824,0.412,0.118}
\definecolor{firebrick}{rgb}{0.698,0.133,0.133}
\definecolor{brown}{rgb}{0.647,0.165,0.165}
\definecolor{DarkSalmon}{rgb}{0.914,0.588,0.478}
\definecolor{salmon}{rgb}{0.980,0.502,0.447}
\definecolor{LightSalmon}{rgb}{1.000,0.627,0.478}
\definecolor{orange}{rgb}{1.000,0.647,0.000}
\definecolor{DarkOrange}{rgb}{1.000,0.549,0.000}
\definecolor{coral}{rgb}{1.000,0.498,0.314}
\definecolor{LightCoral}{rgb}{0.941,0.502,0.502}
\definecolor{tomato}{rgb}{1.000,0.388,0.278}
\definecolor{OrangeRed}{rgb}{1.000,0.271,0.000}
\definecolor{red}{rgb}{1.000,0.000,0.000}
\definecolor{HotPink}{rgb}{1.000,0.412,0.706}
\definecolor{DeepPink}{rgb}{1.000,0.078,0.576}
\definecolor{pink}{rgb}{1.000,0.753,0.796}
\definecolor{LightPink}{rgb}{1.000,0.714,0.757}
\definecolor{PaleVioletRed}{rgb}{0.859,0.439,0.576}
\definecolor{maroon}{rgb}{0.690,0.188,0.376}
\definecolor{MediumVioletRed}{rgb}{0.780,0.082,0.522}
\definecolor{VioletRed}{rgb}{0.816,0.125,0.565}
\definecolor{magenta}{rgb}{1.000,0.000,1.000}
\definecolor{violet}{rgb}{0.933,0.510,0.933}
\definecolor{plum}{rgb}{0.867,0.627,0.867}
\definecolor{orchid}{rgb}{0.855,0.439,0.839}
\definecolor{MediumOrchid}{rgb}{0.729,0.333,0.827}
\definecolor{DarkOrchid}{rgb}{0.600,0.196,0.800}
\definecolor{DarkViolet}{rgb}{0.580,0.000,0.827}
\definecolor{BlueViolet}{rgb}{0.541,0.169,0.886}
\definecolor{purple}{rgb}{0.627,0.125,0.941}
\definecolor{MediumPurple}{rgb}{0.576,0.439,0.859}
\definecolor{thistle}{rgb}{0.847,0.749,0.847}
\definecolor{seashell1}{rgb}{1.000,0.961,0.933}
\definecolor{seashell2}{rgb}{0.933,0.898,0.871}
\definecolor{seashell3}{rgb}{0.804,0.773,0.749}
\definecolor{seashell4}{rgb}{0.545,0.525,0.510}
\definecolor{AntiqueWhite1}{rgb}{1.000,0.937,0.859}
\definecolor{AntiqueWhite2}{rgb}{0.933,0.875,0.800}
\definecolor{AntiqueWhite3}{rgb}{0.804,0.753,0.690}
\definecolor{AntiqueWhite4}{rgb}{0.545,0.514,0.471}
\definecolor{bisque1}{rgb}{1.000,0.894,0.769}
\definecolor{bisque2}{rgb}{0.933,0.835,0.718}
\definecolor{bisque3}{rgb}{0.804,0.718,0.620}
\definecolor{bisque4}{rgb}{0.545,0.490,0.420}
\definecolor{PeachPuff1}{rgb}{1.000,0.855,0.725}
\definecolor{PeachPuff2}{rgb}{0.933,0.796,0.678}
\definecolor{PeachPuff3}{rgb}{0.804,0.686,0.584}
\definecolor{PeachPuff4}{rgb}{0.545,0.467,0.396}
\definecolor{NavajoWhite1}{rgb}{1.000,0.871,0.678}
\definecolor{NavajoWhite2}{rgb}{0.933,0.812,0.631}
\definecolor{NavajoWhite3}{rgb}{0.804,0.702,0.545}
\definecolor{NavajoWhite4}{rgb}{0.545,0.475,0.369}
\definecolor{LemonChiffon1}{rgb}{1.000,0.980,0.804}
\definecolor{LemonChiffon2}{rgb}{0.933,0.914,0.749}
\definecolor{LemonChiffon3}{rgb}{0.804,0.788,0.647}
\definecolor{LemonChiffon4}{rgb}{0.545,0.537,0.439}
\definecolor{cornsilk1}{rgb}{1.000,0.973,0.863}
\definecolor{cornsilk2}{rgb}{0.933,0.910,0.804}
\definecolor{cornsilk3}{rgb}{0.804,0.784,0.694}
\definecolor{cornsilk4}{rgb}{0.545,0.533,0.471}
\definecolor{ivory1}{rgb}{1.000,1.000,0.941}
\definecolor{ivory2}{rgb}{0.933,0.933,0.878}
\definecolor{ivory3}{rgb}{0.804,0.804,0.757}
\definecolor{ivory4}{rgb}{0.545,0.545,0.514}
\definecolor{honeydew1}{rgb}{0.941,1.000,0.941}
\definecolor{honeydew2}{rgb}{0.878,0.933,0.878}
\definecolor{honeydew3}{rgb}{0.757,0.804,0.757}
\definecolor{honeydew4}{rgb}{0.514,0.545,0.514}
\definecolor{LavenderBlush1}{rgb}{1.000,0.941,0.961}
\definecolor{LavenderBlush2}{rgb}{0.933,0.878,0.898}
\definecolor{LavenderBlush3}{rgb}{0.804,0.757,0.773}
\definecolor{LavenderBlush4}{rgb}{0.545,0.514,0.525}
\definecolor{MistyRose1}{rgb}{1.000,0.894,0.882}
\definecolor{MistyRose2}{rgb}{0.933,0.835,0.824}
\definecolor{MistyRose3}{rgb}{0.804,0.718,0.710}
\definecolor{MistyRose4}{rgb}{0.545,0.490,0.482}
\definecolor{azure1}{rgb}{0.941,1.000,1.000}
\definecolor{azure2}{rgb}{0.878,0.933,0.933}
\definecolor{azure3}{rgb}{0.757,0.804,0.804}
\definecolor{azure4}{rgb}{0.514,0.545,0.545}
\definecolor{SlateBlue1}{rgb}{0.514,0.435,1.000}
\definecolor{SlateBlue2}{rgb}{0.478,0.404,0.933}
\definecolor{SlateBlue3}{rgb}{0.412,0.349,0.804}
\definecolor{SlateBlue4}{rgb}{0.278,0.235,0.545}
\definecolor{RoyalBlue1}{rgb}{0.282,0.463,1.000}
\definecolor{RoyalBlue2}{rgb}{0.263,0.431,0.933}
\definecolor{RoyalBlue3}{rgb}{0.227,0.373,0.804}
\definecolor{RoyalBlue4}{rgb}{0.153,0.251,0.545}
\definecolor{blue1}{rgb}{0.000,0.000,1.000}
\definecolor{blue2}{rgb}{0.000,0.000,0.933}
\definecolor{blue3}{rgb}{0.000,0.000,0.804}
\definecolor{blue4}{rgb}{0.000,0.000,0.545}
\definecolor{DodgerBlue1}{rgb}{0.118,0.565,1.000}
\definecolor{DodgerBlue2}{rgb}{0.110,0.525,0.933}
\definecolor{DodgerBlue3}{rgb}{0.094,0.455,0.804}
\definecolor{DodgerBlue4}{rgb}{0.063,0.306,0.545}
\definecolor{SteelBlue1}{rgb}{0.388,0.722,1.000}
\definecolor{SteelBlue2}{rgb}{0.361,0.675,0.933}
\definecolor{SteelBlue3}{rgb}{0.310,0.580,0.804}
\definecolor{SteelBlue4}{rgb}{0.212,0.392,0.545}
\definecolor{DeepSkyBlue1}{rgb}{0.000,0.749,1.000}
\definecolor{DeepSkyBlue2}{rgb}{0.000,0.698,0.933}
\definecolor{DeepSkyBlue3}{rgb}{0.000,0.604,0.804}
\definecolor{DeepSkyBlue4}{rgb}{0.000,0.408,0.545}
\definecolor{SkyBlue1}{rgb}{0.529,0.808,1.000}
\definecolor{SkyBlue2}{rgb}{0.494,0.753,0.933}
\definecolor{SkyBlue3}{rgb}{0.424,0.651,0.804}
\definecolor{SkyBlue4}{rgb}{0.290,0.439,0.545}
\definecolor{LightSkyBlue1}{rgb}{0.690,0.886,1.000}
\definecolor{LightSkyBlue2}{rgb}{0.643,0.827,0.933}
\definecolor{LightSkyBlue3}{rgb}{0.553,0.714,0.804}
\definecolor{LightSkyBlue4}{rgb}{0.376,0.482,0.545}
\definecolor{SlateGray1}{rgb}{0.776,0.886,1.000}
\definecolor{SlateGray2}{rgb}{0.725,0.827,0.933}
\definecolor{SlateGray3}{rgb}{0.624,0.714,0.804}
\definecolor{SlateGray4}{rgb}{0.424,0.482,0.545}
\definecolor{LightSteelBlue1}{rgb}{0.792,0.882,1.000}
\definecolor{LightSteelBlue2}{rgb}{0.737,0.824,0.933}
\definecolor{LightSteelBlue3}{rgb}{0.635,0.710,0.804}
\definecolor{LightSteelBlue4}{rgb}{0.431,0.482,0.545}
\definecolor{LightBlue1}{rgb}{0.749,0.937,1.000}
\definecolor{LightBlue2}{rgb}{0.698,0.875,0.933}
\definecolor{LightBlue3}{rgb}{0.604,0.753,0.804}
\definecolor{LightBlue4}{rgb}{0.408,0.514,0.545}
\definecolor{LightCyan1}{rgb}{0.878,1.000,1.000}
\definecolor{LightCyan2}{rgb}{0.820,0.933,0.933}
\definecolor{LightCyan3}{rgb}{0.706,0.804,0.804}
\definecolor{LightCyan4}{rgb}{0.478,0.545,0.545}
\definecolor{PaleTurquoise1}{rgb}{0.733,1.000,1.000}
\definecolor{PaleTurquoise2}{rgb}{0.682,0.933,0.933}
\definecolor{PaleTurquoise3}{rgb}{0.588,0.804,0.804}
\definecolor{PaleTurquoise4}{rgb}{0.400,0.545,0.545}
\definecolor{CadetBlue1}{rgb}{0.596,0.961,1.000}
\definecolor{CadetBlue2}{rgb}{0.557,0.898,0.933}
\definecolor{CadetBlue3}{rgb}{0.478,0.773,0.804}
\definecolor{CadetBlue4}{rgb}{0.325,0.525,0.545}
\definecolor{turquoise1}{rgb}{0.000,0.961,1.000}
\definecolor{turquoise2}{rgb}{0.000,0.898,0.933}
\definecolor{turquoise3}{rgb}{0.000,0.773,0.804}
\definecolor{turquoise4}{rgb}{0.000,0.525,0.545}
\definecolor{cyan1}{rgb}{0.000,1.000,1.000}
\definecolor{cyan2}{rgb}{0.000,0.933,0.933}
\definecolor{cyan3}{rgb}{0.000,0.804,0.804}
\definecolor{cyan4}{rgb}{0.000,0.545,0.545}
\definecolor{DarkSlateGray1}{rgb}{0.592,1.000,1.000}
\definecolor{DarkSlateGray2}{rgb}{0.553,0.933,0.933}
\definecolor{DarkSlateGray3}{rgb}{0.475,0.804,0.804}
\definecolor{DarkSlateGray4}{rgb}{0.322,0.545,0.545}
\definecolor{aquamarine1}{rgb}{0.498,1.000,0.831}
\definecolor{aquamarine2}{rgb}{0.463,0.933,0.776}
\definecolor{aquamarine3}{rgb}{0.400,0.804,0.667}
\definecolor{aquamarine4}{rgb}{0.271,0.545,0.455}
\definecolor{DarkSeaGreen1}{rgb}{0.757,1.000,0.757}
\definecolor{DarkSeaGreen2}{rgb}{0.706,0.933,0.706}
\definecolor{DarkSeaGreen3}{rgb}{0.608,0.804,0.608}
\definecolor{DarkSeaGreen4}{rgb}{0.412,0.545,0.412}
\definecolor{SeaGreen1}{rgb}{0.329,1.000,0.624}
\definecolor{SeaGreen2}{rgb}{0.306,0.933,0.580}
\definecolor{SeaGreen3}{rgb}{0.263,0.804,0.502}
\definecolor{SeaGreen4}{rgb}{0.180,0.545,0.341}
\definecolor{PaleGreen1}{rgb}{0.604,1.000,0.604}
\definecolor{PaleGreen2}{rgb}{0.565,0.933,0.565}
\definecolor{PaleGreen3}{rgb}{0.486,0.804,0.486}
\definecolor{PaleGreen4}{rgb}{0.329,0.545,0.329}
\definecolor{SpringGreen1}{rgb}{0.000,1.000,0.498}
\definecolor{SpringGreen2}{rgb}{0.000,0.933,0.463}
\definecolor{SpringGreen3}{rgb}{0.000,0.804,0.400}
\definecolor{SpringGreen4}{rgb}{0.000,0.545,0.271}
\definecolor{green1}{rgb}{0.000,1.000,0.000}
\definecolor{green2}{rgb}{0.000,0.933,0.000}
\definecolor{green3}{rgb}{0.000,0.804,0.000}
\definecolor{green4}{rgb}{0.000,0.545,0.000}
\definecolor{chartreuse1}{rgb}{0.498,1.000,0.000}
\definecolor{chartreuse2}{rgb}{0.463,0.933,0.000}
\definecolor{chartreuse3}{rgb}{0.400,0.804,0.000}
\definecolor{chartreuse4}{rgb}{0.271,0.545,0.000}
\definecolor{OliveDrab1}{rgb}{0.753,1.000,0.243}
\definecolor{OliveDrab2}{rgb}{0.702,0.933,0.227}
\definecolor{OliveDrab3}{rgb}{0.604,0.804,0.196}
\definecolor{OliveDrab4}{rgb}{0.412,0.545,0.133}
\definecolor{DarkOliveGreen1}{rgb}{0.792,1.000,0.439}
\definecolor{DarkOliveGreen2}{rgb}{0.737,0.933,0.408}
\definecolor{DarkOliveGreen3}{rgb}{0.635,0.804,0.353}
\definecolor{DarkOliveGreen4}{rgb}{0.431,0.545,0.239}
\definecolor{khaki1}{rgb}{1.000,0.965,0.561}
\definecolor{khaki2}{rgb}{0.933,0.902,0.522}
\definecolor{khaki3}{rgb}{0.804,0.776,0.451}
\definecolor{khaki4}{rgb}{0.545,0.525,0.306}
\definecolor{LightGoldenrod1}{rgb}{1.000,0.925,0.545}
\definecolor{LightGoldenrod2}{rgb}{0.933,0.863,0.510}
\definecolor{LightGoldenrod3}{rgb}{0.804,0.745,0.439}
\definecolor{LightGoldenrod4}{rgb}{0.545,0.506,0.298}
\definecolor{LightYellow1}{rgb}{1.000,1.000,0.878}
\definecolor{LightYellow2}{rgb}{0.933,0.933,0.820}
\definecolor{LightYellow3}{rgb}{0.804,0.804,0.706}
\definecolor{LightYellow4}{rgb}{0.545,0.545,0.478}
\definecolor{yellow1}{rgb}{1.000,1.000,0.000}
\definecolor{yellow2}{rgb}{0.933,0.933,0.000}
\definecolor{yellow3}{rgb}{0.804,0.804,0.000}
\definecolor{yellow4}{rgb}{0.545,0.545,0.000}
\definecolor{gold1}{rgb}{1.000,0.843,0.000}
\definecolor{gold2}{rgb}{0.933,0.788,0.000}
\definecolor{gold3}{rgb}{0.804,0.678,0.000}
\definecolor{gold4}{rgb}{0.545,0.459,0.000}
\definecolor{goldenrod1}{rgb}{1.000,0.757,0.145}
\definecolor{goldenrod2}{rgb}{0.933,0.706,0.133}
\definecolor{goldenrod3}{rgb}{0.804,0.608,0.114}
\definecolor{goldenrod4}{rgb}{0.545,0.412,0.078}
\definecolor{DarkGoldenrod1}{rgb}{1.000,0.725,0.059}
\definecolor{DarkGoldenrod2}{rgb}{0.933,0.678,0.055}
\definecolor{DarkGoldenrod3}{rgb}{0.804,0.584,0.047}
\definecolor{DarkGoldenrod4}{rgb}{0.545,0.396,0.031}
\definecolor{RosyBrown1}{rgb}{1.000,0.757,0.757}
\definecolor{RosyBrown2}{rgb}{0.933,0.706,0.706}
\definecolor{RosyBrown3}{rgb}{0.804,0.608,0.608}
\definecolor{RosyBrown4}{rgb}{0.545,0.412,0.412}
\definecolor{IndianRed1}{rgb}{1.000,0.416,0.416}
\definecolor{IndianRed2}{rgb}{0.933,0.388,0.388}
\definecolor{IndianRed3}{rgb}{0.804,0.333,0.333}
\definecolor{IndianRed4}{rgb}{0.545,0.227,0.227}
\definecolor{sienna1}{rgb}{1.000,0.510,0.278}
\definecolor{sienna2}{rgb}{0.933,0.475,0.259}
\definecolor{sienna3}{rgb}{0.804,0.408,0.224}
\definecolor{sienna4}{rgb}{0.545,0.278,0.149}
\definecolor{burlywood1}{rgb}{1.000,0.827,0.608}
\definecolor{burlywood2}{rgb}{0.933,0.773,0.569}
\definecolor{burlywood3}{rgb}{0.804,0.667,0.490}
\definecolor{burlywood4}{rgb}{0.545,0.451,0.333}
\definecolor{wheat1}{rgb}{1.000,0.906,0.729}
\definecolor{wheat2}{rgb}{0.933,0.847,0.682}
\definecolor{wheat3}{rgb}{0.804,0.729,0.588}
\definecolor{wheat4}{rgb}{0.545,0.494,0.400}
\definecolor{tan1}{rgb}{1.000,0.647,0.310}
\definecolor{tan2}{rgb}{0.933,0.604,0.286}
\definecolor{tan3}{rgb}{0.804,0.522,0.247}
\definecolor{tan4}{rgb}{0.545,0.353,0.169}
\definecolor{chocolate1}{rgb}{1.000,0.498,0.141}
\definecolor{chocolate2}{rgb}{0.933,0.463,0.129}
\definecolor{chocolate3}{rgb}{0.804,0.400,0.114}
\definecolor{chocolate4}{rgb}{0.545,0.271,0.075}
\definecolor{firebrick1}{rgb}{1.000,0.188,0.188}
\definecolor{firebrick2}{rgb}{0.933,0.173,0.173}
\definecolor{firebrick3}{rgb}{0.804,0.149,0.149}
\definecolor{firebrick4}{rgb}{0.545,0.102,0.102}
\definecolor{brown1}{rgb}{1.000,0.251,0.251}
\definecolor{brown2}{rgb}{0.933,0.231,0.231}
\definecolor{brown3}{rgb}{0.804,0.200,0.200}
\definecolor{brown4}{rgb}{0.545,0.137,0.137}
\definecolor{salmon1}{rgb}{1.000,0.549,0.412}
\definecolor{salmon2}{rgb}{0.933,0.510,0.384}
\definecolor{salmon3}{rgb}{0.804,0.439,0.329}
\definecolor{salmon4}{rgb}{0.545,0.298,0.224}
\definecolor{LightSalmon1}{rgb}{1.000,0.627,0.478}
\definecolor{LightSalmon2}{rgb}{0.933,0.584,0.447}
\definecolor{LightSalmon3}{rgb}{0.804,0.506,0.384}
\definecolor{LightSalmon4}{rgb}{0.545,0.341,0.259}
\definecolor{orange1}{rgb}{1.000,0.647,0.000}
\definecolor{orange2}{rgb}{0.933,0.604,0.000}
\definecolor{orange3}{rgb}{0.804,0.522,0.000}
\definecolor{orange4}{rgb}{0.545,0.353,0.000}
\definecolor{DarkOrange1}{rgb}{1.000,0.498,0.000}
\definecolor{DarkOrange2}{rgb}{0.933,0.463,0.000}
\definecolor{DarkOrange3}{rgb}{0.804,0.400,0.000}
\definecolor{DarkOrange4}{rgb}{0.545,0.271,0.000}
\definecolor{coral1}{rgb}{1.000,0.447,0.337}
\definecolor{coral2}{rgb}{0.933,0.416,0.314}
\definecolor{coral3}{rgb}{0.804,0.357,0.271}
\definecolor{coral4}{rgb}{0.545,0.243,0.184}
\definecolor{tomato1}{rgb}{1.000,0.388,0.278}
\definecolor{tomato2}{rgb}{0.933,0.361,0.259}
\definecolor{tomato3}{rgb}{0.804,0.310,0.224}
\definecolor{tomato4}{rgb}{0.545,0.212,0.149}
\definecolor{OrangeRed1}{rgb}{1.000,0.271,0.000}
\definecolor{OrangeRed2}{rgb}{0.933,0.251,0.000}
\definecolor{OrangeRed3}{rgb}{0.804,0.216,0.000}
\definecolor{OrangeRed4}{rgb}{0.545,0.145,0.000}
\definecolor{red1}{rgb}{1.000,0.000,0.000}
\definecolor{red2}{rgb}{0.933,0.000,0.000}
\definecolor{red3}{rgb}{0.804,0.000,0.000}
\definecolor{red4}{rgb}{0.545,0.000,0.000}
\definecolor{DeepPink1}{rgb}{1.000,0.078,0.576}
\definecolor{DeepPink2}{rgb}{0.933,0.071,0.537}
\definecolor{DeepPink3}{rgb}{0.804,0.063,0.463}
\definecolor{DeepPink4}{rgb}{0.545,0.039,0.314}
\definecolor{HotPink1}{rgb}{1.000,0.431,0.706}
\definecolor{HotPink2}{rgb}{0.933,0.416,0.655}
\definecolor{HotPink3}{rgb}{0.804,0.376,0.565}
\definecolor{HotPink4}{rgb}{0.545,0.227,0.384}
\definecolor{pink1}{rgb}{1.000,0.710,0.773}
\definecolor{pink2}{rgb}{0.933,0.663,0.722}
\definecolor{pink3}{rgb}{0.804,0.569,0.620}
\definecolor{pink4}{rgb}{0.545,0.388,0.424}
\definecolor{LightPink1}{rgb}{1.000,0.682,0.725}
\definecolor{LightPink2}{rgb}{0.933,0.635,0.678}
\definecolor{LightPink3}{rgb}{0.804,0.549,0.584}
\definecolor{LightPink4}{rgb}{0.545,0.373,0.396}
\definecolor{PaleVioletRed1}{rgb}{1.000,0.510,0.671}
\definecolor{PaleVioletRed2}{rgb}{0.933,0.475,0.624}
\definecolor{PaleVioletRed3}{rgb}{0.804,0.408,0.537}
\definecolor{PaleVioletRed4}{rgb}{0.545,0.278,0.365}
\definecolor{maroon1}{rgb}{1.000,0.204,0.702}
\definecolor{maroon2}{rgb}{0.933,0.188,0.655}
\definecolor{maroon3}{rgb}{0.804,0.161,0.565}
\definecolor{maroon4}{rgb}{0.545,0.110,0.384}
\definecolor{VioletRed1}{rgb}{1.000,0.243,0.588}
\definecolor{VioletRed2}{rgb}{0.933,0.227,0.549}
\definecolor{VioletRed3}{rgb}{0.804,0.196,0.471}
\definecolor{VioletRed4}{rgb}{0.545,0.133,0.322}
\definecolor{magenta1}{rgb}{1.000,0.000,1.000}
\definecolor{magenta2}{rgb}{0.933,0.000,0.933}
\definecolor{magenta3}{rgb}{0.804,0.000,0.804}
\definecolor{magenta4}{rgb}{0.545,0.000,0.545}
\definecolor{orchid1}{rgb}{1.000,0.514,0.980}
\definecolor{orchid2}{rgb}{0.933,0.478,0.914}
\definecolor{orchid3}{rgb}{0.804,0.412,0.788}
\definecolor{orchid4}{rgb}{0.545,0.278,0.537}
\definecolor{plum1}{rgb}{1.000,0.733,1.000}
\definecolor{plum2}{rgb}{0.933,0.682,0.933}
\definecolor{plum3}{rgb}{0.804,0.588,0.804}
\definecolor{plum4}{rgb}{0.545,0.400,0.545}
\definecolor{MediumOrchid1}{rgb}{0.878,0.400,1.000}
\definecolor{MediumOrchid2}{rgb}{0.820,0.373,0.933}
\definecolor{MediumOrchid3}{rgb}{0.706,0.322,0.804}
\definecolor{MediumOrchid4}{rgb}{0.478,0.216,0.545}
\definecolor{DarkOrchid1}{rgb}{0.749,0.243,1.000}
\definecolor{DarkOrchid2}{rgb}{0.698,0.227,0.933}
\definecolor{DarkOrchid3}{rgb}{0.604,0.196,0.804}
\definecolor{DarkOrchid4}{rgb}{0.408,0.133,0.545}
\definecolor{purple1}{rgb}{0.608,0.188,1.000}
\definecolor{purple2}{rgb}{0.569,0.173,0.933}
\definecolor{purple3}{rgb}{0.490,0.149,0.804}
\definecolor{purple4}{rgb}{0.333,0.102,0.545}
\definecolor{MediumPurple1}{rgb}{0.671,0.510,1.000}
\definecolor{MediumPurple2}{rgb}{0.624,0.475,0.933}
\definecolor{MediumPurple3}{rgb}{0.537,0.408,0.804}
\definecolor{MediumPurple4}{rgb}{0.365,0.278,0.545}
\definecolor{thistle1}{rgb}{1.000,0.882,1.000}
\definecolor{thistle2}{rgb}{0.933,0.824,0.933}
\definecolor{thistle3}{rgb}{0.804,0.710,0.804}
\definecolor{thistle4}{rgb}{0.545,0.482,0.545}
\definecolor{gray5}{rgb}{0.051,0.051,0.051}
\definecolor{gray10}{rgb}{0.102,0.102,0.102}
\definecolor{gray15}{rgb}{0.149,0.149,0.149}
\definecolor{gray20}{rgb}{0.200,0.200,0.200}
\definecolor{gray25}{rgb}{0.251,0.251,0.251}
\definecolor{gray30}{rgb}{0.302,0.302,0.302}
\definecolor{gray35}{rgb}{0.349,0.349,0.349}
\definecolor{gray40}{rgb}{0.400,0.400,0.400}
\definecolor{gray45}{rgb}{0.451,0.451,0.451}
\definecolor{gray50}{rgb}{0.498,0.498,0.498}
\definecolor{gray55}{rgb}{0.549,0.549,0.549}
\definecolor{gray60}{rgb}{0.600,0.600,0.600}
\definecolor{gray65}{rgb}{0.651,0.651,0.651}
\definecolor{gray70}{rgb}{0.702,0.702,0.702}
\definecolor{gray75}{rgb}{0.749,0.749,0.749}
\definecolor{gray80}{rgb}{0.800,0.800,0.800}
\definecolor{gray85}{rgb}{0.851,0.851,0.851}
\definecolor{gray90}{rgb}{0.898,0.898,0.898}
\definecolor{gray95}{rgb}{0.949,0.949,0.949}
\definecolor{gray100}{rgb}{1.000,1.000,1.000}
\definecolor{DarkGray}{rgb}{0.663,0.663,0.663}
\definecolor{DarkBlue}{rgb}{0.000,0.000,0.545}
\definecolor{DarkCyan}{rgb}{0.000,0.545,0.545}
\definecolor{DarkMagenta}{rgb}{0.545,0.000,0.545}
\definecolor{DarkRed}{rgb}{0.545,0.000,0.000}
\definecolor{LightGreen}{rgb}{0.565,0.933,0.565}
\newcommand{\ubar}[1]{\underaccent{\bar}{#1}}
\newcommand{\highlight}[1]{#1} 
\renewcommand{\vec}[1]{\mathbf{#1}}
\newcommand{\highlighta}[1]{#1}
\begin{document}
\begin{titlepage}

\title{\bf Selling Wind}

\author{Ali Kakhbod\\ MIT \and Asuman 
Ozdaglar\\ MIT \and  Ian Schneider\thanks{Kakhbod: Department of Economics, Massachusetts Institute of Technology (MIT),
Cambridge, MA 02139, USA. Email: \url{akakhbod@mit.edu}.
Ozdaglar: Electrical Engineering and Computer Science (EECS), Laboratory for Information and Decision Systems (LIDS), Institute for Data, Systems, and Society (IDSS), Massachusetts
Institute of Technology (MIT), Cambridge, MA 02139, USA. Email: %
\url{asuman@mit.edu}.
Schneider: Institute for Data, Systems, and Society (IDSS), Massachusetts Institute of Technology (MIT), Cambridge, MA 02139, USA. Email: \url{ischneid@mit.edu }. We are grateful to Daron Acemoglu for valuable comments and suggestions. We also thank Matt Butner for sharing his data on wind availability in MISO and participants at various seminars and conferences for useful feedback and suggestions.
 }\\ MIT}

\date{December 2018}

\maketitle
\thispagestyle{empty}

\begin{abstract}
We offer a parsimonious model to investigate how strategic wind producers sell energy under stochastic production constraints, where the extent of heterogeneity of wind energy availability varies according to wind farm locations.
The main insight of our analysis is that increasing heterogeneity in resource availability improves social welfare, as a function of its effects both on improving diversification and on reducing withholding by firms. We show that this insight is quite robust for any concave and downward-sloping inverse demand function. 
The model is also used to analyze the effect of heterogeneity on firm profits and opportunities for collusion. Finally, we analyze the impacts of improving public information and weather forecasting; enhanced public forecasting increases welfare, but it is not always in the best interests of strategic producers.

\bigskip

\noindent{\bfseries Keywords:} Energy finance, Commodity market, Market power, Pricing wind, Diversification, Investment,  Public forecasting, Collusion.
\bigskip

\noindent{\textbf{JEL Classification}}: D6, D62.

\end{abstract}



\end{titlepage}

\newpage
\pagestyle{plain}
\setcounter{page}{2}



\section{Introduction} \label{Sec::Introduction}

The market share and total production of renewable electricity is growing rapidly. In March 2017, wind energy was responsible for 8.6\% of U.S. electricity generation, doubling its market share and total production from five years prior. Renewable electricity is a critical component of global efforts to reduce carbon dioxide emissions, and its cost is rapidly declining.

Prominent sources of renewable electricity - wind and solar energy - have stochastic resource availability: it is not possible to perfectly predict when there will be wind or sunshine available for energy production. The spatial and temporal variability of renewable energy resources has a significant impact on their value to society \citep{Joskow2006, hirth2013market, hirth2016wind}. 
Furthermore, since wind production reduces local prices due to the merit order effect, highly correlated local wind energy availability reduces the average value of wind energy produced \citep{woo2011impact, Ketterer2014}.

\highlighta{Existing literature focuses on strategic behavior in electricity markets without substantial amounts of renewable energy. Research on market power in the electricity sector \citep{joskow1988markets} provided important insight for electricity system deregulation. Electricity system market power research does not traditionally focus on stochasticity because fossil-fuel generators do not have significant resource uncertainty. Instead, it focuses on other key features of the electricity sector that impact market power, like transmission constraints \citep{cardell1997market}, financial transmission rights \citep{joskow2000transmission}, and market price caps \citep{Joskow2007}. \cite{Acemoglu2017} establish that diverse ownership portfolios of renewable and thermal generation by strategic firms may be welfare reducing, because they can reduce (or even neutralize) the merit order effect. \cite{butner2018gone} provides empirical evidence of these effects.}  
	
	\highlighta{We are interested in how a particular characteristic of renewable energy resources--the stochastic dependence of resource availability across firms--impacts strategic behavior, market power, and welfare.} The link between stochastic heterogeneity\footnote{Stochastic heterogeneity refers to the level of stochastic dependence. Throughout the main body of the paper, we use a term ``dispersion'' to succinctly refer to the extent of stochastic dependence. In the linear case, high (low) stochastic dependence is equivalent to highly correlated (uncorrelated) stochastic resource availability across different producers.} of resource availability and welfare is an important area for research because various policies impact the investment strategies of wind producers and therefore the stochastic characteristics of the wind energy portfolio in a given region \citep{Kok2016, schneider2017wind}. Common subsidy forms for renewable energy, like the production tax credit (PTC) and state-level renewable portfolio standards (RPS), impact renewable energy investments \citep{fischer2010renewable}. 
	Figure \ref{wind_compare} shows probability distributions for two different wind farms in the MISO region, conditional on the output of a third wind farm $i$ in the MISO region. The nature of stochastic dependence is very different for each pair of wind farms. Wind farm $i$'s output is highly correlated to the output of the wind farm displayed on the right, and essentially uncorrelated with the output of the wind farm displayed on the left.\footnote{For the purposes of this graph, we use measured production as a proxy for resource availability. Here and throughout the paper we stylize resource availability as discrete, i.e. the resource availability at $i$ $w_i = L$ or $w_i = H$. Since real world availability is continuous, for this figure, we say that $w_i = L$ when the wind availability is less than 3\% of its maximum availability (bottom 27\% of periods) and $w_i = H$ when the wind availability is greater than 67\% of its maximum availability (top 20\% of periods).}

Clearly, policy changes can impact investment strategies for renewable energy and the characteristics of system-wide resource uncertainty. This begs the questions: \emph{Is it important to encourage policies that increase the heterogeneity of stochastic resources? Will investment in wind energy naturally lead to the level of resource heterogeneity that maximizes social welfare?} Just as policy makers seek to limit market concentration in certain industries, they might support policies to increase stochastic heterogeneity of renewable resources in the electric power industry.
These efforts have growing import because existing strategies for market power monitoring in electricity markets will be challenged by an influx of renewable generation, since regulators have imperfect information regarding resource availability and risk preferences for firms that own stochastic generation.\footnote{Additionally, \citet{munoz2018economic} discuss challenges associated with auditing the opportunity costs of traditional generators in markets with physical inflexibilities and non-convex costs. These issues have growing import in systems with high levels of renewable energy.}

\begin{figure}[h]
	\centering
	\includegraphics[scale = 0.48]{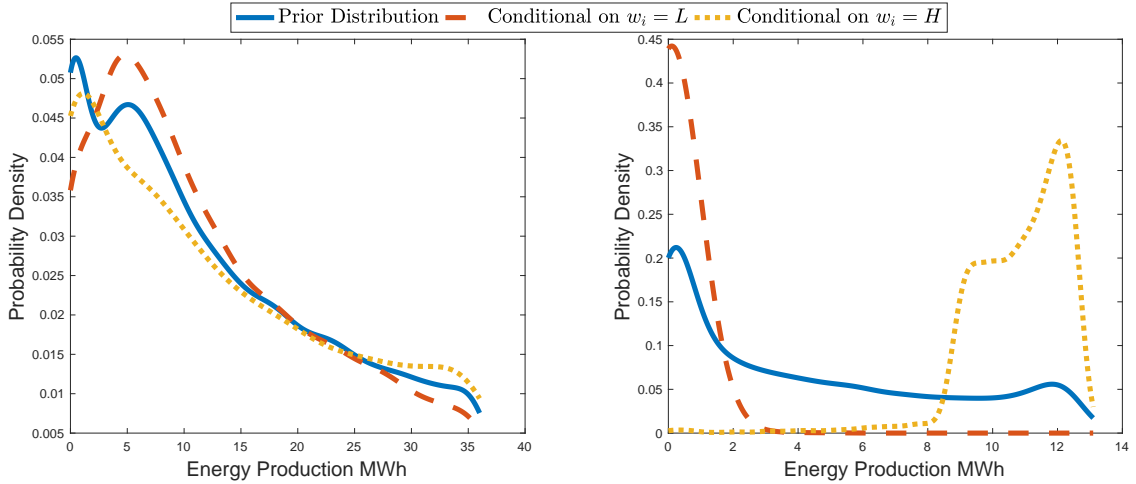}
	\caption{Prior and conditional empirical distributions for resource availability from two wind farms in MISO, based on hourly energy production from 2014 through 2016. The conditional distributions are conditioned on the resource availability at a third wind farm $i$ in MISO. The right plot shows an example of highly correlated resource availability from one pair of wind farms, and the left plot shows an example of uncorrelated resource availability from another pair of wind farms.}
	\label{wind_compare}
\end{figure}

\highlighta{We study strategic firms with private information regarding their realized energy availability, or ``state.'' This energy availability is equivalent to a production constraint, because it limits the extent of production by the firm in any given period.  Since the resource availability of wind energy is uncertain, from an individual firm's perspective its competitors' production constraints are stochastic. However, the resource availability of wind energy has a high degree of stochastic dependence; firms can gain important information about their competitors' production constraints from the realization of their own resource availability. As such, the extent of stochastic dependence regarding firms' resource availability becomes an important factor that impacts strategic behavior, market power, and welfare.} For clarity, we focus on wind energy, but the insights can be extended directly to solar energy or any other resource with stochastic availability and negligible marginal costs. 



\highlighta{We model producer competition as an incomplete information Cournot game with correlated types, where the type refers to the stochastic resource availability (production constraint) that is private information for each individual producer.} The base model uses a Cournot duopoly market. We utilize a parameter $d$ to represent the level of heterogeneity amongst wind producers; throughout, we refer to $d$ as the level of dispersion.  This notion is similar to geographic distance; research has shown that correlation in wind availability across pairs of wind producers is generally decreasing in geographic distance \citep{Sinden2007}. 

The results provide clear insight to explain how stochastic resource heterogeneity can impact welfare in imperfect electricity markets. Increasing heterogeneity in wind resource availability is beneficial for two distinct reasons: it increases the diversification of resources, but it also reduces strategic withholding because it changes the information that a producer's own energy availability provides about the likely energy availability of the other firms in the market. The results of our model imply that imperfect competition in energy markets can affect investment in renewable energy, resulting in a system with sub-optimal levels of resource heterogeneity.\footnote{Other features of electricity markets might also reduce heterogeneity of wind resources or distance between wind producers, including the presence of existing transmission, variance in state-level renewable policies, and quantity-based subsidies. However, we focus specifically on market failures due to imperfect competition.}    

\highlighta{We extend the benchmark model to an arbitrary number of producers and examine the social welfare of the resulting Cournot-Nash equilibria. We define general probability distributions on the players' types (their stochastic production constraints). We establish sufficient conditions for a pair of probability distributions such that one distribution will dominate another in terms of the social welfare of their associated Cournot-Nash equilibria, for any inverse demand function that is decreasing and concave. This defines a partial ordering of joint distributions in terms of the social welfare of their resulting equilibria. The model is also extended to examine the impacts of traditional generation. In an imperfect market with traditional generation, welfare is again increasing in resource heterogeneity.}  

Next, the model is utilized to examine the effects of heterogeneity on collusion and on policies to prevent collusion. 
\highlighta{If they do not face potential penalties for collusion, firms with stochastic availability will always choose to collude because they benefit from sharing information and from sharing monopolistic profits. Increasing heterogeneity of wind production has a range of impacts on collusion, impacting its value to producers, the costs of collusion on social welfare, and the level of enforcement required to prevent collusion.} 

Finally, we investigate the effects of public sharing of high-quality weather forecasts, using the limiting case whereby the true realized energy availability of firms is monitored and shared. The results suggest an important contradiction: information sharing through improved forecasting is socially beneficial, but it does not always improve producer profits. As such, it will not necessarily be undertaken by producers acting in their own best interest. Since firms have stochastic availability, high quality weather forecasting can be undertaken publicly in order to maximize social welfare.  

\highlighta{These results provide a framework for evaluating policies that impact investment and information-provision in imperfectly competitive markets, like electricity markets. The results can help us understand how policies that impact the dispersion of renewable energy resources, and thus the characteristics of stochastic energy availability, ultimately impact welfare in imperfectly competitive electricity markets.}
 
\subsection{Literature Review}

\highlighta{The focus of literature has been} on the impacts of wind diversification or total wind variance on average electricity prices or on the cost of wind integration. Increased heterogeneity of wind resources has at least three impacts on social welfare and the cost of electricity: 
\begin{itemize}
	\item[1. ] Balancing costs for managing wind production. \emph{This impact is well-studied in the literature and not covered in the model in this paper.}
	\item[2. ] The average benefit of wind production. \emph{This is discussed in some existing literature, but we provide a new formal model that provides insights on its impact on welfare.}
	\item[3. ] Strategic curtailment by wind and traditional power producers.  \emph{This impact has not been proposed in previous literature. This paper formalizes the concept and explains its impact on welfare.}
\end{itemize} 

First, \highlighta{increased wind} heterogeneity decreases balancing costs because it reduces hour-to-hour fluctuations in total wind energy production and because it reduces net uncertainty of availability in a given hour. This impact is well-studied in existing literature. \citet{Fertig2012} show that increasing diversification reduces the average hourly fluctuation in total power output and increases the equivalent firm power production. 

These short-term time-dependent impacts are not covered specifically in our model, which ignores the complexities associated with sequential market clearings\footnote{Prices in sequential markets are also impacted by market power, which helps explain why prices in sequential markets sometimes diverge \citep{ito2016sequential}.}. However, the benefits of increased heterogeneity likely have a net positive impact on welfare due to a reduction of balancing costs, so they do not change the general direction of the main welfare results. 

Second, increasing heterogeneity increases the average benefit of wind production. Increasing levels of wind generation have been shown to reduce prices in Germany and in West Texas \citep{Ketterer2014, woo2011impact}. In general, wind has declining marginal benefits because of the convexity of the electricity supply curve from traditional generators, which serve as strategic substitutes for wind energy. We model this effect by assuming that utility obtained from consuming wind energy is concave. Our model mirrors the basic empirical result; when more wind is produced, in a single period or on average, lower prices result. While existing research focuses on the price impacts of additional wind penetration, the stylized model described herein allows us to extend the results by focusing specifically on the impacts of heterogeneity on social welfare.   
 
Finally, the third impact of higher wind heterogeneity is its effect on the ability of wind generators to strategically curtail their energy production, when these generators have market power but some uncertainty regarding their competitors' production constraints. To our knowledge, this paper represents the first time that this effect of resource heterogeneity has been proposed and analyzed.

This paper uses a Cournot model to analyze the affects of dispersion on bidding behavior and welfare in a market with stochastic energy availability and private information. The Cournot assumption provides a simple model of imperfect competition, which is an important feature of markets with renewable generation: as firms operate more wind and solar generation, it will become increasingly difficult to prevent the exercise of market power, due to the uncertainty in underlying resource availability. The Cournot model is a useful simplification. In practice, firms submit a supply function that specifies how much energy they are willing to offer at any given price. \footnote{The Cournot setup is considered a good approximation to real-world electricity markets \citep{Hogan1997, Oren1997, BorensteinBushnellKnittel.2002, Willems.Rumiantseva.Weigt2009}. There are other ways to model producer offers in electricity markets, including supply function offers \citep{Anderson2002, holmberg2007supply}. \citet{wolfram1998strategic} and \citet{hortacsu2008understanding} offer empirical analyses of strategic bidding in multi-unit electricity auctions. \citet{Willems.Rumiantseva.Weigt2009}  and \citet{Ventosa.Baill.Rivier2005} discuss the comparative benefits of Cournot models versus the full supply function model.} 

\highlighta{Our model for strategic firms is a Cournot-Nash game with incomplete information. There is a substantial economics literature on this subject. \cite{einy2010existence} survey the literature and explain the conditions for equilibria existence and uniqueness in Cournot-Nash games with incomplete information and correlated types. Nearly all of this literature treat firms' objective functions as stochastic; we focus on the case where production constraints are stochastic. \cite{richter2013incomplete} also study the topic of Cournot games with stochastic production constraints and incomplete information, but they focus on firms that are stochastically independent. Stochastic dependence of the firms' production constraints has major impacts on the results, including the value of information sharing. Our research does not focus on conditions for existence; reasonable assumptions for the electricity sector generally lead to the existence of equilibria. Instead, we focus on developing new results to link the extent of stochastic dependence to strategic behavior and welfare in the equilibria.}

The main idea of this research is to formalize game-theoretic equilibria where producers have stochastic and dependent production constraints, in order to examine the effects of correlation in resource availability on the resulting equilibria. We consider the case of multiple wind producers offering their energy into markets, when their maximum availability is stochastic and correlated amongst producers. Existing research studies energy market equilibria in other ways. For instance, \citet{Hobbs2007} examine the effects of joint constraints and non-smooth demand functions. \citet{Downward2010} and \citet{Yao.Adler.Oren2008} study Cournot equilibria in markets with transmission constraints. \citet{de2016comparison} study the effects of Cournot competition on the efficacy and impacts of various renewable energy incentive methods. \citet{gilotte2006investments} and \citet{pineau2011dynamic} study investment in energy markets with Cournot competition. Compared to the aforementioned literature, we abstract many important notions of real-world electricity systems in order to clearly focus on our question of interest, which is not addressed in the existing literature: \emph{how does heterogeneity in stochastic renewable energy availability impact market power and social welfare?} 

Section \ref{Sec::Model} introduces the benchmark duopoly model, and Section \ref{sec :: equil} describes the features of the Cournot equilibrium. Section \ref{Sec::impacts} describes the impacts of wind heterogeneity on the diversification of wind energy and on strategic curtailment by wind producers. These impacts drive many of the main results presented herein. Sections \ref{Sec::welfare} and \ref{Sec::price} describe the effects of heterogeneity on social welfare, price, and profits in the duopoly model. Sections \ref{Sec::multw} and \ref{sec :: trad} extend the results to the case of an oligopoly market with multiple wind generators and with both wind and traditional generators, respectively. Section \ref{Sec:collusion} examines the effect of heterogeneity on collusion and on the cost of efforts to prevent collusion. Section \ref{sec :: infosharing} describes how the level of heterogeneity impacts the likelihood that firms will choose to publicly share information, and shows that public information sharing is always socially beneficial. Section \ref{Sec::conclusion} concludes.  

\section{Benchmark Model} \label{Sec::Model}
%

Consider two wind energy producers in a market with imperfect competition, operating two locally separate wind farms to generate energy. For each producer $i$, the maximum available  wind energy, $w_i$, is \textbf{stochastic} and might be either $H$ (high) or $L$ (low), with $H>L$ and with probability $\Pr\{w_i=H\}=\beta=1-\Pr\{w_i=L\} > 0, \ i\in\{1,2\}$. When $w_i = H$ $(w_i = L)$, we say that producer $i$ is in the high (low) state. Let $d\in[0, 1]$ be the \textbf{dispersion}  between the two wind producers, where the maximum dispersion is normalized to $1$. The parameter $d$  captures the extent of {\bf heterogeneity} in terms of wind energy availability for these wind producers. When $d$ is small, the availability of wind energy is highly correlated amongst wind producers. When one wind producer in the high state, the other wind producer is likely to be in the high state as well (similarly for the low state). However, when $d$ is high these locations become highly heterogeneous in terms of wind availability, so that extent of wind energy available to one producer does not reveal much information about the other wind producer's availability. In the case of high heterogeneity, the extent of wind energy availability is nearly or (in the limit) fully independent across wind producers. 

\highlighta{This section models the joint probability distribution of the available wind energy in a simple parameterized form.}  Precisely, for $i,j\in \{1,2\}$, the conditional probability of high wind availability is given by \eqref{prob}.
\begin{equation} \label{prob}
	\begin{aligned}
		\Pr\{w_i=H|w_j=H\}&=\frac{\beta}{\beta+{\color{Maroon}{d}}(1-\beta)}  \\
		\Pr\{w_i=H|w_j=L\}&=\frac{{\color{Maroon}{d}} \beta}{\beta+{\color{Maroon}{d}}(1-\beta)}
	\end{aligned}
\end{equation} 

When the wind producers are ``far'' from each other, $d = 1$,  we are in the limiting case of independent production; $\Pr\{w_i=H|w_j=H\} = \Pr\{w_i=H\} = \beta$ and $\Pr\{w_i=H|w_j=L\} = \beta$. On the other hand, when they are locally ``close'', $d = 0$,  we are in the full information case and $\Pr\{w_i=H|w_j=H\} = 1$. \highlighta{Section \ref{Sec::multw} extends the results in this section for arbitrary joint probability distributions for wind availability from multiple producers.}\footnote{\highlighta{In this context, we can think of $\beta$ as a forecast of the energy availability for each firm, using public information or information with negligible cost. When a firm realizes its own (private) energy availability, this new information changes its forecast of the energy available to its competitors, as shown in \eqref{prob}.}} 

 \begin{figure}[h]
 \begin{center}
\includegraphics[width=1\textwidth]{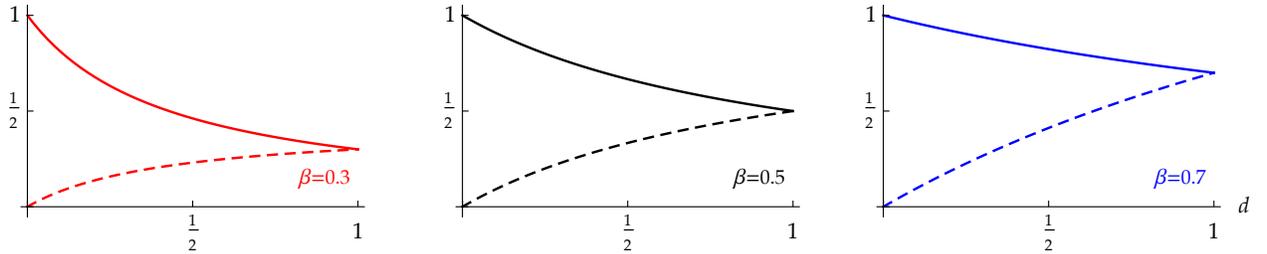}
\end{center}
\caption{The conditional probability distributions from \eqref{prob} for $d \in (0,1)$, for different values of the prior $\beta$. For each graph, the solid line represents $\Pr\{H|H\}$ and the dashed line represents $\Pr\{H|L\}$.}
\end{figure} \label{fig :: posterior}

Note that the extent of the difference between the high and low states corresponds to the extent of the variance in wind availability for each individual producer. For instance, if we fix the value of $H$ (e.g. as the capacity value of the wind producer), then variance of wind energy availability $\text{Var}(w_i)$ is monotonically decreasing in $L$.  

Let $q_i$ denote the amount of wind energy generated by producer $i\in\{1,2\}$. \highlighta{We assume the inverse demand $P:\mathbb{R}\to \mathbb{R}$ as a function of total supply $Q = q_1 + q_2$ is concave and downward, i.e., $P'(Q)<0, P''(Q) \leq 0$ for all $Q$. The marginal cost of production via wind energy is negligible.} Our model simplifies the electricity market model by focusing on a single real-time market with inverse demand $P(Q)$. We ignore the impacts of short- and long-term forward markets, e.g. day-ahead markets and capacity markets. While these markets are undoubtedly important, we focus on the real-time spot market because planned real-time bidding behavior ultimately informs strategy in forward markets.\footnote{While we refer to $w_i$ as the wind energy availability for firm $i$, readers can also think of $w_i$ as the realized error, the difference between energy availability in real-time and the day-ahead offer or prior forecast.}

The  producers compete \highlighta{in Cournot fashion}. According to its \textbf{private information} about its maximum available wind, $w_i\in \{H,L\}$, producer $i$ chooses $q_i(w_i)$ \highlighta{maximizing the expected value of its profit $\pi_i$, conditional on its realization of $w_i$:}
\begin{align*}
\E_{w_j}[\pi_i|w_i]&=\E\left[q_i \ P(q_i(w_i)+q_j(w_j)) \ | w_i\right], \\
 \quad \text{s.t.} &   \quad q_i(w_i)\in [0, w_i] 
\end{align*}

\section{Equilibrium} \label{sec :: equil}
To ensure that wind producers produce at full capacity in the low state (i.e. no curtailment when $w_i = L$), \highlighta{and to avoid equilibria where wind producers produce at full capacity in both states} we adopt the following assumption:
\begin{Assumption}\label{assumption-11}
Let $P(\cdot)$ be the inverse demand. Then $P(2L)+LP'(2L)>0$ and $P(H)+HP'(H)<0$.
\end{Assumption}

\highlighta{This assumption allows us to focus on equilibria where producers exercise strategic withholding in one state but not in the other. This represents the case of interest where the stochastic nature of the wind resource has important impacts on the equilibrium strategy. Under the above assumption the equilibrium is characterized as follows. }

\begin{Proposition}\label{pro0} Let $P'<0, P''\leq0$. Then, 
there exists a unique symmetric Bayesian Nash Equilibrium (BNE) such that 
\begin{align*}
q_i(w_i)=q(w_i)=\min\{w_i,\phi\} \quad \quad w_i\in \{L,H\}, \ i=1,2
\end{align*}
 where $\phi>L$ is the unique root of the following equation
 \begin{align*}
 \Pr\{L|H\}\big[P(L+\phi)+\phi P'(\phi+L)\big]+\Pr\{H|H\}\big[P(2\phi)+\phi P'(2\phi)\big]=0.
 \end{align*} 

\end{Proposition}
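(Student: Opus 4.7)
The plan is to adopt a guess-and-verify strategy: conjecture that in any symmetric BNE the low type produces at its capacity, $q(L)=L$, while the high type plays an interior level $q(H)=\phi\in (L,H)$ pinned down by its first-order condition, and then show that exactly one such $\phi$ exists and that the candidate profile is indeed a best response.

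First I would write out the conditional expected profit for a type-$w_i$ firm against a symmetric opponent with strategy $q(\cdot)$, namely
\[
\Pi_i(q_i;w_i)=\Pr\{L\mid w_i\}\,q_i\,P(q_i+q(L))+\Pr\{H\mid w_i\}\,q_i\,P(q_i+q(H)).
\]
Because $P'<0$ and $P''\le 0$, the map $q_i\mapsto q_i P(q_i+x)$ has derivative $P(q_i+x)+q_i P'(q_i+x)$ whose own derivative $2P'(q_i+x)+q_i P''(q_i+x)<0$; so $\Pi_i$ is strictly concave in $q_i$, and first-order conditions (with complementary slackness against the capacity $w_i$) are both necessary and sufficient for optimality. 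Next I would use Assumption~1: the inequality $P(H)+HP'(H)<0$ forces $q(H)<H$ (since at $q_i=H$ the marginal expected profit is dominated by $P(H)+HP'(H)$ plus terms that are even more negative by concavity and $P'<0$), so the high-type best response must satisfy the interior FOC, which, under the conjecture $q(L)=L$ and symmetry $q_i=\phi$, is exactly the displayed equation
\[
F(\phi):=\Pr\{L\mid H\}\bigl[P(L+\phi)+\phi P'(L+\phi)\bigr]+\Pr\{H\mid H\}\bigl[P(2\phi)+\phi P'(2\phi)\bigr]=0.
\]

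For existence and uniqueness of $\phi\in(L,H)$, I would apply the intermediate value theorem to $F$. At $\phi=L$, both bracketed terms collapse to $P(2L)+LP'(2L)$, which is strictly positive by Assumption~1; at $\phi=H$, concavity ($P'$ decreasing, $P$ bounded above by its tangent at $H$) combined with $P(H)+HP'(H)<0$ makes each bracket negative, so $F(H)<0$. Uniqueness then follows by differentiating: $F'(\phi)=\Pr\{L\mid H\}[2P'(L+\phi)+\phi P''(L+\phi)]+\Pr\{H\mid H\}[3P'(2\phi)+2\phi P''(2\phi)]<0$, so $F$ is strictly decreasing and has a single root.

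Finally I would close the loop by verifying that $q(L)=L$ is a best response for the low type when the opponent plays the candidate strategy. Its marginal profit at $q_i=L$ is $\Pr\{L\mid L\}[P(2L)+LP'(2L)]+\Pr\{H\mid L\}[P(L+\phi)+LP'(L+\phi)]$; the first bracket is positive by Assumption~1, and for the second I would argue from the high-type FOC that $P(L+\phi)+\phi P'(L+\phi)\ge 0$ (since by concavity $P(2\phi)+\phi P'(2\phi)\le P(L+\phi)+\phi P'(L+\phi)$, and a nonnegatively-weighted sum vanishing at the root forces the larger summand to be $\ge 0$), whence $P(L+\phi)+LP'(L+\phi)\ge P(L+\phi)+\phi P'(L+\phi)\ge 0$ using $L\le\phi$ and $P'\le 0$. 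Thus the low type strictly wants to raise output and is bound by its capacity, so $q(L)=L$, confirming the candidate. The main delicate step—and the one I would handle most carefully—is this last verification that the low-type capacity constraint binds; it is here that Assumption~1 and the concavity of $P$ must be combined with the high-type FOC to pin down the sign of the bracketed term at $L+\phi$.
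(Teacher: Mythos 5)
Your existence argument matches the paper's almost step for step: strict concavity of $q_i\mapsto q_iP(q_i+x)$ so that first-order conditions are necessary and sufficient, the intermediate value theorem on $F$ with $F(L)=P(2L)+LP'(2L)>0$ and $F(H)<0$ via Assumption~1 and the monotonicity of $y\mapsto P(y)+HP'(y)$, strict monotonicity of $F$ for uniqueness of the root, and the sign argument for the low type's corner. Your key observation for that last step --- that the high-type FOC is a positively weighted average of two brackets, that concavity orders them so the bracket at $L+\phi$ is the larger one and hence nonnegative, and that replacing $\phi$ by $L\le\phi$ in front of $P'<0$ only increases it --- is exactly the paper's equations (11)--(12). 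All of this is sound.

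However, there is a gap relative to the full claim: you prove that the candidate profile is an equilibrium and that $\phi$ is the unique root of the displayed equation, but the proposition asserts uniqueness of the \emph{symmetric BNE}, and your guess-and-verify does not rule out symmetric equilibria of a different form --- specifically, profiles in which the low type also withholds, $q(L)=\tilde L<L$, paired with the corresponding high-type output $\tilde\phi$ solving the analogous FOC with $L$ replaced by $\tilde L$. The paper devotes its Claim~3 to exactly this: assuming such an equilibrium exists, the low type's marginal expected profit at $q_i=\tilde L$ is
\[
\Pr\{H|L\}\bigl[P(\tilde\phi+\tilde L)+\tilde L P'(\tilde\phi+\tilde L)\bigr]+\Pr\{L|L\}\bigl[P(2\tilde L)+\tilde L P'(2\tilde L)\bigr]>0,
\]
because $P(2\tilde L)+\tilde LP'(2\tilde L)>P(2L)+LP'(2L)>0$ (monotonicity plus Assumption~1) and the second bracket is nonnegative by the same FOC-averaging argument you already used --- so the low type profitably deviates upward, a contradiction. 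The tools are precisely the ones in your verification step, so the gap is easy to close, but as written your proof establishes existence and uniqueness of $\phi$ given the conjectured form, not uniqueness of the symmetric equilibrium itself.
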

The proposition establishes the symmetric Bayesian Nash Equilibrium (BNE) for the benchmark model. In this equilibrium, firms produce $L$ in the low state and $\phi < H$ when they are in the high state. The proof for this proposition is provided in the Appendix; throughout the paper, all omitted proofs are in the Appendix. The intuition is that in the symmetric equilibria producers curtail based on the expected value of the first order condition, given the uncertain state of their competitor and their competitor's equivalent curtailment strategy.    

\noindent{{\color{Maroon}{\bf Example (Linear inverse demand):}}} \quad
To clarify understanding regarding Assumption \ref{assumption-11}, consider the case of linear inverse demand, i.e. $P(Q)=s-q_1-q_2$, where $Q$ denotes the sum of the firms' production $Q = q_1 + q_2$. Suppose there is no capacity constraint; then there exists a unique symmetric equilibrium $q_C$ (the Cournot level) in which the optimal production is given by
\begin{equation*}
q_1=q_2=q_C\equiv\frac{s}{3}<q_M\equiv \frac{s}{2},
\end{equation*}
where $q_M$ is the corresponding  Monopoly level.\footnote{Note that $q_C$ is the optimal strategy when $\pi_i=q_i(s-q_1-q_2)$ and $q_M$ is the optimal monopoly quantity maximizing $\pi=q(s-q)$.} 
Thus, with linear inverse demand, Assumption \ref{assumption-11} simply says that $L$ is lower than the Cournot level and $H$ is higher than the monopoly level, i.e. 
\[L<q_C<q_M<H.
\]
\highlighta{If the first part of the assumption is violated, wind producers always produce at the Cournot level $q_C$; the stochastic nature of the wind resource has no impact on the equilibrium strategy. If the second part of the assumption is violated, then wind producers would curtail in any situation, even absent a competitor.} Moreover, with linear inverse demand, the equilibrium can be explicitly characterized as follows. 
\begin{Corollary}\label{pro1}
Let the inverse demand be linear, i.e. $P(q_1+q_2)=s-q_1-q_2$. Then, there exists a unique symmetric pure-strategy Bayes Nash equilibrium such that
\begin{align*}
q_i(w_i)=q(w_i)=\min\left\{w_i, \phi\right\},  \quad w_i\in\{L,H\}, \  i=1,2, 
\end{align*} 
where $\phi \equiv \frac{s\beta+(s-L)(1-\beta)d}{3\beta+2(1-\beta)d}$.
\end{Corollary}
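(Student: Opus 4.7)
The plan is to invoke Proposition \ref{pro0} directly and then carry out the explicit algebraic solution of its first-order condition for the linear case. Since $P(Q)=s-Q$ is linear, it is concave and strictly decreasing, so the hypotheses $P'<0$ and $P''\le 0$ of Proposition \ref{pro0} are satisfied, and Assumption \ref{assumption-11}---which in this setting amounts to $L<q_C=s/3<q_M=s/2<H$ as already noted in the example---ensures that we are in the regime where producers withhold in the high state only. Under these conditions, Proposition \ref{pro0} yields existence and uniqueness of a symmetric BNE with $q(w_i)=\min\{w_i,\phi\}$, so the only remaining task is to compute the implicit root $\phi$ in closed form.

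First, I would substitute $P'\equiv -1$ into the two bracketed expressions of Proposition \ref{pro0}, obtaining
\begin{align*}
P(L+\phi)+\phi P'(L+\phi) &= s-L-2\phi, \\
P(2\phi)+\phi P'(2\phi) &= s-3\phi.
\end{align*}
Second, I would substitute the conditional probabilities from \eqref{prob}, namely
$\Pr\{H|H\}=\beta/[\beta+d(1-\beta)]$ and $\Pr\{L|H\}=d(1-\beta)/[\beta+d(1-\beta)]$, into the first-order equation of Proposition \ref{pro0}. Clearing the common denominator $\beta+d(1-\beta)>0$ reduces the equation to the linear identity
\begin{equation*}
d(1-\beta)(s-L-2\phi)+\beta(s-3\phi)=0.
\end{equation*}

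Third, I would solve this linear equation for $\phi$. Collecting the $\phi$-terms gives $[3\beta+2(1-\beta)d]\phi = s\beta+(s-L)(1-\beta)d$, which after dividing by the (strictly positive) coefficient of $\phi$ yields the claimed closed form
\begin{equation*}
\phi = \frac{s\beta + (s-L)(1-\beta)d}{3\beta + 2(1-\beta)d}.
\end{equation*}
Finally, to be complete I would check $L<\phi<H$, which is exactly the content of the first-order characterization in Proposition \ref{pro0} combined with Assumption \ref{assumption-11}: the lower bound $\phi>L$ follows from $L<q_C$ by comparing the numerator and denominator at $\phi=L$, and the upper bound $\phi<H$ follows since $\phi$ lies between the unconstrained Cournot quantity and a convex combination that does not exceed $H$. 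There is no genuine obstacle here; the only subtlety is making sure the denominator $3\beta+2(1-\beta)d$ is nonzero for all admissible $(\beta,d)$, which is immediate since $\beta>0$.
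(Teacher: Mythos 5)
Your proposal is correct and follows essentially the same route as the paper: both invoke Proposition \ref{pro0} for existence, uniqueness, and the form $q(w_i)=\min\{w_i,\phi\}$, and then solve the first-order condition explicitly in the linear case (the paper writes it in best-response form $\phi=\tfrac{1}{2}\bigl(s-\E_{w_2}[q_2\mid w_1=H]\bigr)$, which is algebraically identical to your cleared-denominator equation $d(1-\beta)(s-L-2\phi)+\beta(s-3\phi)=0$). Your closing remarks on $L<\phi<H$ are already guaranteed by Proposition \ref{pro0} and need no separate verification.
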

\noindent The subsequent sections introduce key effects that drive the impacts of $d$ on the equilibrium and its resulting impacts on welfare, price, and profits. 
\section{\highlight{Strategic Curtailment and Diversification}} \label{Sec::impacts}

This section explains useful Lemmas to help illustrate the two major impacts of dispersion $d$ in the strategic setting. Recall that $\phi=q(H)$ is the production when a firm is in the high state. When a firm is in the low state it produces $L$. Unless otherwise specified, all the following results hold for concave and downward inverse demand functions (i.e. $P'<0, P''\leq 0$) satisfying Assumption \ref{assumption-11}.
\begin{Lemma}\label{lem1}
As $d$ increases production in the high state increases, i.e. $\frac{\p \phi}{\p d}>0$.
\end{Lemma}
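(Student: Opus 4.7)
The plan is to apply the implicit function theorem to the equilibrium condition in Proposition~\ref{pro0}. Define
\[
F(\phi,d) \;=\; \Pr\{L|H\}\bigl[P(L+\phi)+\phi P'(\phi+L)\bigr]+\Pr\{H|H\}\bigl[P(2\phi)+\phi P'(2\phi)\bigr],
\]
where, from \eqref{prob}, $\Pr\{H|H\}=\beta/[\beta+d(1-\beta)]$ and $\Pr\{L|H\}=d(1-\beta)/[\beta+d(1-\beta)]$. Since $\phi=\phi(d)$ is defined by $F(\phi(d),d)=0$, it suffices to compute the signs of $\partial F/\partial\phi$ and $\partial F/\partial d$ and invoke $\partial\phi/\partial d=-(\partial F/\partial d)/(\partial F/\partial\phi)$.

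First I would establish $\partial F/\partial\phi<0$. Differentiating each bracket with respect to $\phi$ gives $2P'(L+\phi)+\phi P''(L+\phi)$ and $3P'(2\phi)+2\phi P''(2\phi)$ respectively. Both are strictly negative because $P'<0$ and $P''\le 0$, and $\Pr\{L|H\},\Pr\{H|H\}\ge 0$ with $\Pr\{H|H\}>0$, so their convex combination is strictly negative.

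Next I would compute $\partial F/\partial d$. The bracketed expressions do not depend on $d$, so only the conditional probabilities contribute. A direct calculation gives
\[
\frac{\partial \Pr\{H|H\}}{\partial d}=-\frac{\beta(1-\beta)}{(\beta+d(1-\beta))^{2}}<0,\qquad \frac{\partial \Pr\{L|H\}}{\partial d}=+\frac{\beta(1-\beta)}{(\beta+d(1-\beta))^{2}}>0,
\]
so the two probability weights shift in opposite directions as $d$ increases. Factoring out $\beta(1-\beta)/(\beta+d(1-\beta))^{2}$,
\[
\frac{\partial F}{\partial d}=\frac{\beta(1-\beta)}{(\beta+d(1-\beta))^{2}}\Bigl\{\bigl[P(L+\phi)+\phi P'(L+\phi)\bigr]-\bigl[P(2\phi)+\phi P'(2\phi)\bigr]\Bigr\}.
\]

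The key step is to sign the bracketed difference. Define $h(y):=P(y)+\phi P'(y)$; then $h'(y)=P'(y)+\phi P''(y)<0$ since $P'<0$, $P''\le 0$, and $\phi>0$. Because Proposition~\ref{pro0} ensures $\phi>L$, we have $L+\phi<2\phi$, so $h(L+\phi)>h(2\phi)$, which means the difference in braces is strictly positive. Combined with the positive prefactor, $\partial F/\partial d>0$. Therefore
\[
\frac{\partial\phi}{\partial d}=-\frac{\partial F/\partial d}{\partial F/\partial\phi}=-\frac{(+)}{(-)}>0,
\]
proving the lemma. The main obstacle is identifying that the comparison reduces to monotonicity of the single-variable function $h(y)=P(y)+\phi P'(y)$ at the two points $L+\phi$ and $2\phi$; once this is noticed, concavity of $P$ and the fact that $\phi>L$ (guaranteed by Assumption~\ref{assumption-11}) deliver the sign immediately.
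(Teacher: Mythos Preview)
Your proof is correct and follows essentially the same approach as the paper: implicit differentiation of the first-order condition, with the sign of $\partial\phi/\partial d$ determined by the same three ingredients (negative denominator from $P'<0,\,P''\le0$; positive probability shift $\partial\Pr\{L|H\}/\partial d>0$; and the bracket $[P(L+\phi)+\phi P'(L+\phi)]-[P(2\phi)+\phi P'(2\phi)]>0$ from $\phi>L$ with $P$ concave). Your packaging of the last step via the monotone function $h(y)=P(y)+\phi P'(y)$ is a slightly cleaner presentation of the paper's pointwise comparison $P(L+\phi)>P(2\phi)$ and $P'(L+\phi)\ge P'(2\phi)$, but the argument is the same.
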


The intuition derives from the fact that the outputs of the wind producers are strategic substitutes because $P'<0, P''\leq0$. \highlighta{Therefore, the best reply for firm $i$ is decreasing in firm $j$'s production, and firm $i$'s best response is a decreasing function of $\mathbb{E}[q_j | w_i = H]$.} When $d$ increases, the possibility that the wind producers are in different state increases. Thus,  the probability  that firm $j$ is in the low state increases, given that firm $i$ is in the the high state, and $\mathbb{E}[q_j | w_i = H]$ decreases, increasing $\phi$ which is firm $i$'s optimal production when it is in the high state.  

\begin{Lemma}\label{lem2}
Each firm's production in the high state (i.e. $\phi$) and the expected value of total production $\E_{w_1,w_2}(Q)$ both change in the same direction as a result of changes in dispersion. That is, 
$sign(\frac{\p}{\p d}\E_{w_1,w_2}(Q)) = sign(\frac{\p \phi}{\p d})$.
\end{Lemma}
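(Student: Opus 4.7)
The plan is to compute $\mathbb{E}_{w_1,w_2}(Q)$ in closed form, read off its $d$-dependence, and differentiate. Using Proposition 1, in the symmetric BNE each firm produces $L$ in the low state and $\phi$ in the high state, so
\[
\mathbb{E}_{w_1,w_2}(Q) \;=\; 2\,\mathbb{E}\bigl[q_i(w_i)\bigr] \;=\; 2\bigl[\Pr\{w_i=H\}\,\phi + \Pr\{w_i=L\}\,L\bigr].
\]
Hence the only place $d$ can enter this expression is through $\phi$ or through the marginal probability $\Pr\{w_i=H\}$. The key observation is that the latter is actually independent of $d$, even though the conditionals in \eqref{prob} explicitly depend on it.

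To verify this, I would apply the law of total probability to the conditionals in \eqref{prob} and exploit symmetry. Letting $\alpha := \Pr\{w_i=H\} = \Pr\{w_j=H\}$, one has
\[
\alpha \;=\; \frac{\beta}{\beta+d(1-\beta)}\,\alpha \;+\; \frac{d\beta}{\beta+d(1-\beta)}\,(1-\alpha),
\]
which after clearing denominators and simplifying collapses to $\alpha(1-\beta)=\beta(1-\alpha)$, i.e.\ $\alpha=\beta$. So the marginal distribution of each $w_i$ is $(\beta,1-\beta)$ regardless of $d$; the parameter $d$ only reshapes the joint distribution while preserving the marginals.

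Combining these two steps gives
\[
\mathbb{E}_{w_1,w_2}(Q) \;=\; 2\beta\,\phi + 2(1-\beta)\,L,
\]
in which $\beta$ and $L$ are independent of $d$. Differentiating yields
\[
\frac{\partial}{\partial d}\,\mathbb{E}_{w_1,w_2}(Q) \;=\; 2\beta\,\frac{\partial \phi}{\partial d},
\]
and since $\beta\in(0,1)$, we obtain $\operatorname{sign}\!\bigl(\partial \mathbb{E}(Q)/\partial d\bigr)=\operatorname{sign}(\partial \phi/\partial d)$, as claimed. The only non-routine ingredient is the invariance of the marginals under $d$; this is what makes the proof a one-line differentiation rather than a messier chain-rule computation involving how the probability weights themselves move with $d$. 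By Lemma \ref{lem1} both signs are in fact positive, but the statement as written needs nothing beyond the identity above.
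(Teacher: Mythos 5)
Your proof is correct and follows essentially the same route as the paper: write $\mathbb{E}_{w_1,w_2}(Q)=2\beta\phi+2(1-\beta)L$ using the $d$-invariant marginal $\Pr\{w_i=H\}=\beta$, then differentiate to get $2\beta\,\partial\phi/\partial d$. Your explicit law-of-total-probability check that the marginal stays $\beta$ under the conditionals in \eqref{prob} is a consistency verification the paper omits (it simply takes $\Pr\{w_i=H\}=\beta$ as given in the model setup), but it changes nothing substantive.
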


\highlighta{The \emph{a priori} expected value of firm $i$'s production is $\E_{w_1,w_2}(q_1) = \beta \phi + (1-\beta) L$, and the expected value of total production is just the sum of each firm's expected production: $\E_{w_1,w_2}(Q) = \E_{w_1,w_2}(q_1) + \E_{w_1,w_2}(q_2) = 2 \beta \phi + 2(1-\beta) L$. Only $\phi$ on the right-hand side is a function of $d$; the parameter $d$ has no effect on the prior probability of being in the high state $\beta$. Then $(\frac{\p}{\p \phi}\E_{w_1,w_2}(Q)) = 2 \beta \frac{\p \phi}{\p d}$, with $\beta > 0$, which concludes the proof.}

\bigskip

\highlight{\noindent{{\bf Introducing Strategic Curtailment (SC) and Wind Diversification (WD)}.}} These two features describe the effects of $d$ on, respectively, high state output $\phi$ and on the joint probability distribution of the resource availability amongst all producers. The main effects of $d$, for instance on welfare, are driven by its impacts on strategic curtailment and wind diversification.  

\begin{itemize}
\item {\bf Strategic Curtailment (SC)}: When $d$ increases it impacts the information available to the wind producers as strategic decision makers. As a result, as $d$ grows, the production of firm $i$ when they are in the high state increases (Lemma \ref{lem1}). \highlighta{Equivalently, this increases the expected value of production (Lemma \ref{lem2}), and decreases the level of {\bf strategic curtailment}, the difference between the expected value of availability and the expected value of production,} i.e. $\mathbb{E}[w_i - q_i]$,. Thus when $d$ grows the level of {\bf strategic curtailment} decreases because increasing $d$ reduces the information quality available to producers in the high state and therefore reduces their strategic withholding of available production. 

\item {\bf Wind Diversification (WD)}: When $d$ grows the probability of being in different states increases. Consequently, with increasing $d$ firms produce different quantities with a higher probability, improving diversification of the total portfolio of wind producer assets and reducing the variance of the total availability of wind energy $\Var(w_1+w_2)$. In cases where utility is strictly concave, diversification of wind assets increases welfare.
\end{itemize}
The impact of {\bf wind diversification } on a function is measured as follows. Let $f:R^2 \rightarrow R$.  The impact of diversification on $f$ (denoted  by $WD_f$) is given by the following expression.
Let $y>x>0$ then
\[
WD_f\equiv f(x,y)+f(y,x)-f(x,x)-f(y,y).
\]

If $f$ is linear, i.e. $\exists$ $a,b,c \in R$ such that $f(x,y)=ax+by+c$, then $WD_f=0$.
Thus, the effects of wind diversification are inactive for linear functions. Furthermore, if $f$ is a concave function of the sum of its arguments, i.e. if $f(x,y)=f(y,x)=g(x+y)$ for some $g:R\rightarrow R$ where $g''<0$ then $WD_f=2g(x+y)-g(2x)-g(2y)>0$. 

Many of the results presented here are due to the interplay between the effects of $d$ on strategic curtailment and diversification as introduced above. In general, increasing $d$ improves social welfare through its effects on both diversification and strategic curtailment. However, because increasing $d$ decreases strategic curtailment, it can sometimes reduce profits for wind producers. This suggests that the level of heterogeneity sought by profit-maximizing investors can be lower than the welfare-maximizing level.    

\section{Social welfare vs. Dispersion} \label{Sec::welfare}
\highlighta{Since the marginal cost of energy production from wind is negligible, welfare (i.e. firms' surplus plus consumers' surplus)  is equivalent to the consumers' net utility of consumption. Let $U(Q)$ be the consumer utility, where $U(0)=0$ and $\forall Q$, $U'(Q) >0, U''(Q)\leq 0$. Note that $U’(Q)$ defines the inverse demand $P(Q)$. The consumer surplus is given by $CS=U(Q)-Qp$, and welfare is $W=\pi_1+\pi_2+CS=U(Q)$.}

\begin{Proposition} \label{prop2}
The expected value of welfare increases in dispersion $d$. 
\end{Proposition}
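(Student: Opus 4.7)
The plan is to write the expected welfare explicitly as a function of $d$ using the equilibrium strategies from Proposition \ref{pro0}, differentiate once, and show that the derivative splits cleanly into two nonnegative contributions that match exactly the diversification (WD) and strategic curtailment (SC) channels introduced in Section \ref{Sec::impacts}.

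First I would record that in the unique symmetric BNE each firm produces $\phi=\phi(d)$ in the high state and $L$ in the low state, so
\begin{equation*}
\E[W]\;=\;\Pr\{HH\}\,U(2\phi)\;+\;2\Pr\{HL\}\,U(\phi+L)\;+\;\Pr\{LL\}\,U(2L).
\end{equation*}
Because the marginal $\Pr\{H\}=\beta$ does not depend on $d$, the identities $\Pr\{HH\}+\Pr\{HL\}=\beta$ and $\Pr\{LL\}+\Pr\{HL\}=1-\beta$ force
\begin{equation*}
\frac{\p\Pr\{HH\}}{\p d}\;=\;\frac{\p\Pr\{LL\}}{\p d}\;=\;-\frac{\p\Pr\{HL\}}{\p d},
\end{equation*}
and a short computation from \eqref{prob} gives $\p\Pr\{HL\}/\p d=\beta^{2}(1-\beta)/[\beta+d(1-\beta)]^{2}>0$. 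In other words, as $d$ grows, probability mass migrates from the diagonal states $\{HH,LL\}$ to the off-diagonal state $\{HL\}$, which is precisely the WD mechanism.

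The main step is then to differentiate $\E[W]$, substitute the identities above together with $\p\phi/\p d>0$ from Lemma \ref{lem1}, and collect terms into
\begin{equation*}
\frac{\p\,\E[W]}{\p d}\;=\;\frac{\p\Pr\{HL\}}{\p d}\,\bigl[\,2U(\phi+L)-U(2\phi)-U(2L)\,\bigr]\;+\;2\,\frac{\p\phi}{\p d}\,\bigl[\,\Pr\{HH\}\,U'(2\phi)+\Pr\{HL\}\,U'(\phi+L)\,\bigr].
\end{equation*}
The first bracket is the WD contribution and is nonnegative because $\phi+L$ is the midpoint of $2\phi$ and $2L$ and $U$ is concave. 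The second bracket is the SC contribution and is strictly positive because $\p\phi/\p d>0$ (Lemma \ref{lem1}), the marginal utility $U'=P$ is positive on the relevant range, and $\Pr\{HH\}=\beta^{2}/[\beta+d(1-\beta)]>0$ whenever $\beta>0$. Adding the two pieces proves the proposition.

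I do not expect a serious obstacle: once one notices that the $d$-dependence of the joint distribution reduces to the single scalar $\Pr\{HL\}$ thanks to the fixed marginal, the rest is bookkeeping. The only mild subtlety is orienting signs so that the SC piece reinforces rather than offsets the WD piece, and this is automatic: raising $\phi$ with probabilities held fixed increases $\E[W]$ because $U$ is increasing, and Lemma \ref{lem1} supplies exactly the required monotonicity of $\phi$ in $d$.
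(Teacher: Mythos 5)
Your proposal is correct and follows essentially the same route as the paper: the same expansion of $\E[W]$ over the four states, the same identity $\p\Pr\{HH\}/\p d=\p\Pr\{LL\}/\p d=-\p\Pr\{HL\}/\p d$ (which the paper computes directly and you derive more cleanly from the fixed marginal $\beta$), and the same split of the derivative into a nonnegative diversification term via concavity of $U$ and a positive strategic-curtailment term via Lemma \ref{lem1} and $U'=P>0$. No gaps.
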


\highlighta{The expected value of welfare is given by $\mathbf{E}_{w_1,w_2}[W]=\mathbf{E}_{w_1,w_2}[U(q_1+q_2)]$. By the product rule of differentiation, the total impact of $d$ on welfare is exactly the sum of its impacts on $\mathbb{E} [W]$ through strategic curtailment and wind diversification. Increasing $d$ increases the expected value of welfare because it decreases strategic curtailment and increases diversification, which {\bf both} increase $U$.}

\highlighta{Increasing $d$ reduces strategic curtailment: it increases $\phi$, as shown in Lemma \ref{lem1}. This increases $q_i$ whenever $w_i = H$, which also increases $U(Q)$ because $U'>0$. Increasing $d$ also increases wind diversification: it increases the probability that wind producers are in different states. This increases the probability that $Q$ takes on its middle value, and decreases the probability that it takes on an extreme value. Since $U$ is concave, the impact of diversification on $U$ is weakly positive, i.e. $WD_U\geq0$, as shown above. Figure 1 illustrates these effects.  This intuition becomes clear with the following proof.}
\begin{proof}
Since $W=U(q_1+q_2)$, the expected value of social welfare is given by:
\begin{align*}
\E_{w_1,w_2}[W]=\Pr\{L,H\}U(L+\phi)+\Pr\{L,L\}U(2L)+\Pr\{H,H\}U(2\phi)+\Pr\{L,H\}U(L+\phi).
\end{align*}
In addition, as shown in the appendix (see \eqref{chance}),  $-\frac{\p \Pr\{L,L\}}{\p d}=-\frac{\p \Pr\{H,H\}}{\p d}=\frac{\p \Pr\{L,H\}}{\p d}\equiv \zeta>0$. That is, when $d$ increases chances for being in different states increases.  So,
\begin{align*}
\frac{\p }{\p d}\E_{w_1,w_2}[W]&=\underbrace{\underbrace{\zeta}_{>0} \underbrace{(2U(L+\phi)-U(2L)-U(2\phi))}_{=WD_U \geq 0, \text{\ by concavity of $U$}}}_{\text{\bf wind diversification}}\\
&\quad + 2 \ \underbrace{\frac{\p \phi}{\p d}}_{\substack{>0, \ \text{\bf reduction of} \\ \textbf{\bf strategic curtailment}}} \ \left ( \underbrace{\Pr\{L,H\}P(L+\phi)+\Pr\{H,H\}P(2\phi)}_{>0}\right).
\end{align*}
 As shown in \eqref{chance}, $\zeta>0$ because  the chance of producers being in different states increases in $d$. Furthermore,  concavity of $U$ implies that $WD_U=2U(L+\phi)-U(2L)-U(2\phi)>0$. Thus, wind diversification has  a positive impact on welfare. 
 
In addition,  by increasing $d$, the production in the high state increases; $\frac{\p \phi}{\p d}>0$ by Lemma \ref{lem1}. Therefore, the reduction of strategic curtailment (due to increasing $d$), has a positive impact on welfare. Overall, increasing dispersion increases the expected value of social welfare. Figure \ref{fig :: welfare2} shows these effects.    

 \begin{figure}
 \begin{center}
\includegraphics[width=0.8\textwidth]{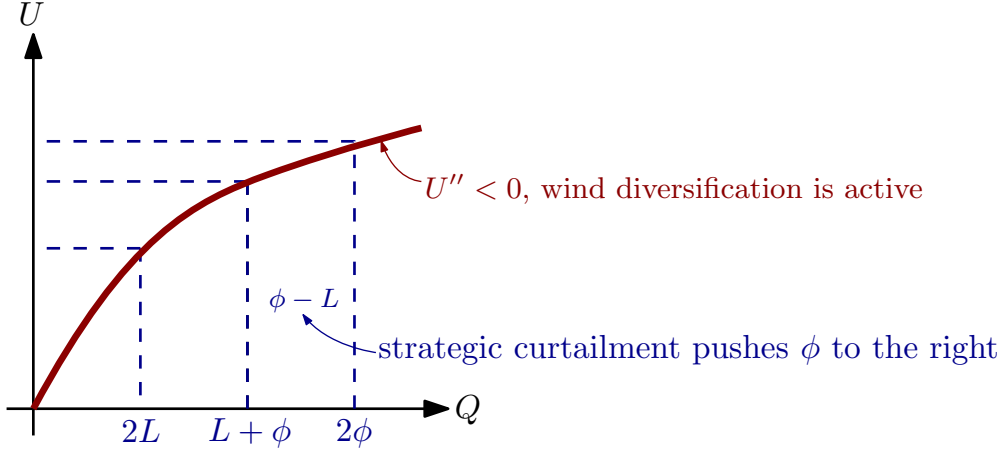}
\end{center}
\caption{Wind diversification increases $U$ and information effects decrease strategic curtailment which also increases $U$.}
\end{figure} \label{fig :: welfare2}

\end{proof}

\section{Price and Profit vs. Dispersion} \label{Sec::price}
How does extent of heterogeneity/dispersion affect average price and profit? We show the effect in general is {\it ambiguous} because the impacts of diversification and of changing levels of strategic curtailment on average price and profit are \emph{not} aligned. To understand this, we first analyze how average price responds to changes in dispersion. Figure \ref{fig :: price} shows these effects.  

\begin{Proposition}

\highlighta{The general impact of dispersion $d$ on the expected value of price is ambiguous. In the case of linear inverse demand, increasing $d$ decreases the expected value of the price.}
\end{Proposition}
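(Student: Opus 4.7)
The plan is to expand the expected price as a weighted sum over the four joint states and then differentiate with respect to $d$, reusing the same decomposition that was used in the proof of Proposition \ref{prop2}. Explicitly,
\begin{align*}
\E_{w_1,w_2}[P(Q)] &= \Pr\{L,L\}P(2L) + 2\Pr\{L,H\}P(L+\phi) + \Pr\{H,H\}P(2\phi),
\end{align*}
where $\phi = \phi(d)$ is the high-state output from Proposition \ref{pro0}. Using the identity $-\partial_d\Pr\{L,L\} = -\partial_d\Pr\{H,H\} = \partial_d\Pr\{L,H\} = \zeta > 0$ established in the appendix, differentiation produces two terms: a wind-diversification term of the form $\zeta\bigl(2P(L+\phi) - P(2L) - P(2\phi)\bigr)$, and a strategic-curtailment term of the form $2\,\partial_d\phi\cdot\bigl(\Pr\{L,H\}P'(L+\phi) + \Pr\{H,H\}P'(2\phi)\bigr)\phi$-like quantity (more precisely, $2\,\partial_d\phi\cdot(\Pr\{L,H\}P'(L+\phi)+\Pr\{H,H\}P'(2\phi))$, coming from differentiating $P(L+\phi)$ and $P(2\phi)$ in $\phi$).

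Next I would sign each piece. Since $P$ is concave, the same Jensen-type argument used for $U$ in Proposition \ref{prop2} gives $WD_P = 2P(L+\phi) - P(2L) - P(2\phi) \ge 0$, so the diversification term is nonnegative. On the other hand, by Lemma \ref{lem1} we have $\partial_d\phi > 0$, and $P' < 0$, so the strategic-curtailment term is nonpositive. The two effects therefore push in opposite directions, which is the content of the first (ambiguity) claim: the sign of $\partial_d\E[P(Q)]$ cannot in general be pinned down without knowing the relative magnitudes, and indeed one can exhibit concave $P$ where either sign is realized.

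For the linear case $P(Q) = s - Q$, the diversification term vanishes identically because $WD_P = 0$ whenever $P$ is affine (as noted just after the definition of $WD_f$). A much cleaner route, which I would actually use for the linear statement, is to bypass the case decomposition entirely: linearity gives $\E[P(Q)] = s - \E[Q] = s - 2\beta\phi - 2(1-\beta)L$, so that
\begin{align*}
\frac{\partial}{\partial d}\E[P(Q)] = -2\beta\,\frac{\partial \phi}{\partial d} < 0
\end{align*}
by Lemma \ref{lem1} together with $\beta > 0$. This yields the strict decrease of expected price in $d$.

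The main obstacle is precision in the ambiguity claim: merely showing two competing signs does not by itself prove ambiguity, so I would want to point to a small parametric family (for instance, a strictly concave $P$ close to linear, where curtailment dominates and price falls, versus a sharply concave $P$ near the switching thresholds, where diversification dominates and price rises). The rest is a direct computation mirroring Proposition \ref{prop2}, and the linear case reduces, via Lemma \ref{lem2}, to a one-line consequence of Lemma \ref{lem1}.
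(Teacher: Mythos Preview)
Your proposal is correct and follows essentially the same decomposition as the paper: the paper likewise differentiates $\E[P]$ into the wind-diversification term $\zeta\,WD_P\ge 0$ and the strategic-curtailment term $2\,\partial_d\phi\,(\Pr\{L,H\}P'(L+\phi)+\Pr\{H,H\}P'(2\phi))<0$, notes the opposing signs, and then observes that $WD_P=0$ under linearity so only the negative term survives. Your linearity shortcut $\E[P]=s-\E[Q]$ is a slightly cleaner alternative for that case, and your caveat that genuine ambiguity requires exhibiting examples is well taken---the paper does not supply them either, it simply declares the net effect ambiguous from the opposing signs.
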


\begin{proof}
Let $P''<0$. Since $$\E_{w_1,w_2}[P(q_1(w_1)+q_2(w_2))]=2\Pr\{L,H\}P(L+\phi)+\Pr\{L,L\}P(2L)+\Pr\{H,H\}
,$$ thus
\begin{align*}
\frac{\p }{\p d}\E_{w_1,w_2}[P]&=\underbrace{\underbrace{\zeta}_{>0} \underbrace{(2P(L+\phi)-P(2L)-P(2\phi))}_{WD_P>0, \text{\ by strict concavity of $P$}}}_{\text{\bf wind diversification}}\\
&\quad + 2 \ \underbrace{\frac{\p \phi}{\p d}}_{\substack{>0, \ \text{\bf reduction of} \\ \textbf{\bf strategic curtailment}}} \ \left ( \underbrace{\Pr\{L,H\}P'(L+\phi)+\Pr\{H,H\}P'(2\phi)}_{<0, \ \text{$P$ is downward}}\right).
\end{align*}

Higher dispersion reduces strategic curtailment, which decreases the average price because inverse demand is downward, i.e. $P'<0$. However, diversification increases the average price because of concavity in inverse demand, i.e. $WD_P=2P(L+\phi)-P(2L)-P(2\phi)>0$. \highlighta{The net effect is ambiguous.}  

\highlighta{Note that when inverse demand is linear, i.e. $P''=0$, then $WD_P=2P(L+\phi)-P(2L)-P(2\phi)=0$. Thus, the effect of diversification is completely inactive.  As a result, because of the impacts of $d$ on strategic curtailment, the expected value of the market price decreases in $d$ in the case of a linear inverse demand.}

\end{proof}

 \begin{figure}
 \begin{center}
\includegraphics[width=0.9\textwidth]{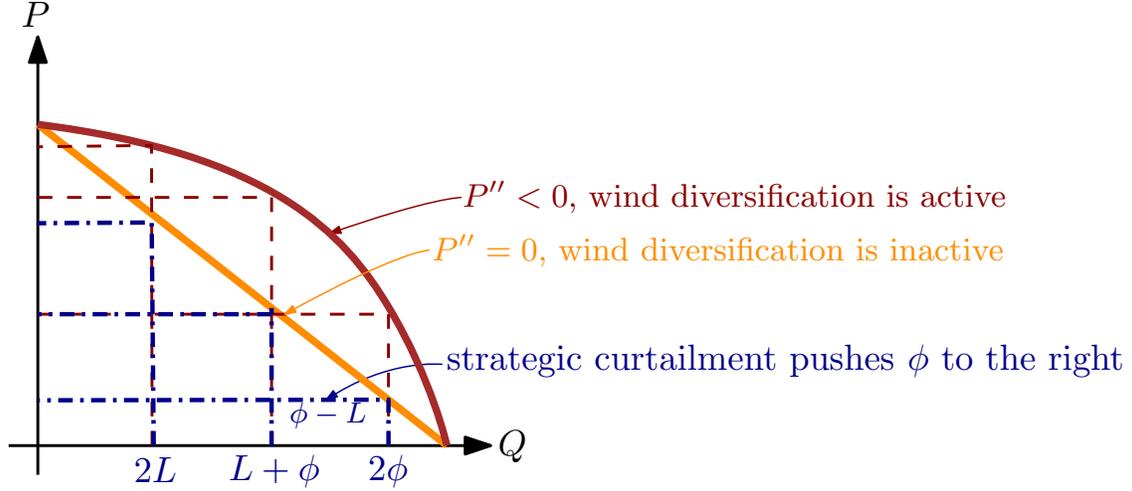}
\end{center}
\caption{Interplay between the effects of wind diversification and strategic curtailment on average price. Wind diversification increases the average price when $P''<0$ and is inactive when $P''=0$. The impacts of increasing $d$ on strategic curtailment always decrease the average price.}
\end{figure} \label{fig :: price}


\highlighta{Like the average price, the impact of increasing dispersion on profit is in general ambiguous. When $d$ increases, it decreases $\phi$. This increases profit when $w_1 = w_2 = H$ because $\phi$ is greater than the full-information Cournot output. However, it decreases the average profit when $w_1 \neq w_2$ because $L + \phi$ is less than the sum of full-information Cournot outputs. Increasing $d$ also increases the probability that the two producers have different resource availability, $\Pr\{w_1 \neq w_2\}$, which increases the expected value of profit because diversification has a positive effect on profit. We can again characterize the effect of $d$ on profit completely through its effects on strategic curtailment and diversification. The overall impact of dispersion on profit is ambiguous, as shown in the following Example.} 

\begin{Example}\label{th-33}
Let $P'<0$, $P''<0$. As $d$ increases, the expected value of producer profit increases due to diversification and decreases due to reduced strategic curtailment. Thus, in general, the impact of heterogeneity on profit is ambiguous.
\end{Example}

 \begin{figure}[h]
 \begin{center}
\includegraphics[width=0.85\textwidth]{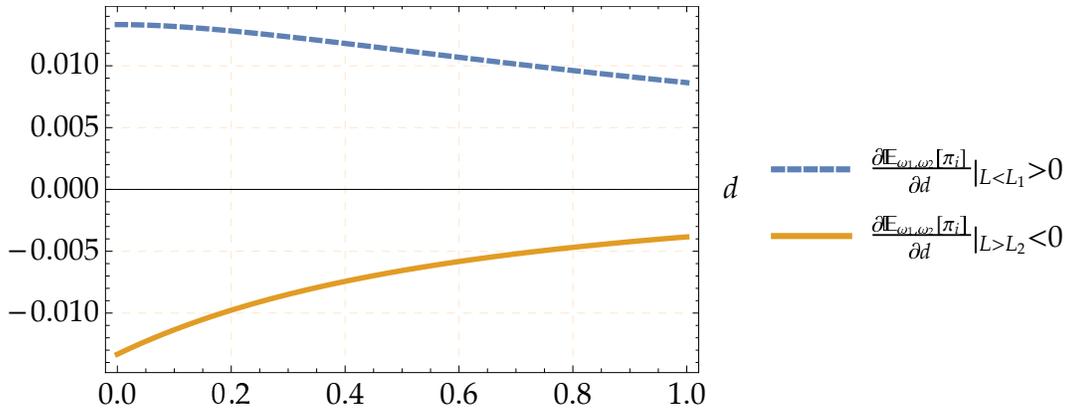} 
\end{center}
\caption{Wind diversification (heterogeneity) increases profit if $L$ is sufficiently small, and it  decreases profit if $L$ is sufficiently large. Plot parameters: $s=3, \beta=\frac{1}{2}$, for dashed line $L=0.6$ and for the solid line $L=0.8$.}
\end{figure} \label{fig:profit}

\highlighta{In general, increasing dispersion $d$ can increase or decrease the expected value of profit. However, in the case of linear inverse demand, we can obtain sharp insights based on the absolute value of the low state energy availability $L$. This is because the extent of $L$ affects the \emph{strength} of diversification and changing strategic curtailment levels on profit in \emph{opposite} directions. As such, for sufficiently high $L$, increasing heterogeneity $d$ reduces profits. See Figure \ref{fig:profit}. The following Proposition summarizes:}

\begin{Proposition}\label{th-firms}
\highlighta{Let $P(q_1+q_2)=s-q_1-q_2$, then there exists two thresholds $L_1$ and $L_2$, with $L_1 = \frac{2s}{9} < L_2 = \frac{2s}{8}$, such that}
\begin{itemize}
\item[(i)] When $L<L_1$ the impact of diversification dominates the strategic curtailment effects, thus $\frac{\p}{\p d}\E_{w_1,w_2}[\pi_i]>0$. Consequently,  it is beneficial for firms to place their wind farms \emph{far} from each other, i.e.  
\begin{align*}
\arg \max_{d\in [0,1]} \ \E_{w_1,w_2}[\pi_i]= 1.
\end{align*}
\item[(ii)] When $L>L_2$,  strategic curtailment dominates diversification, thus $\frac{\p}{\p d}\E_{w_1,w_2}[\pi_i]<0$. Consequently, it is beneficial for firms to place their plants \emph{close} to  each other, i.e.
\begin{align*} 
\arg \max_{d\in [0,1]} \ \E_{w_1,w_2}[\pi_i]= 0.
\end{align*}
\end{itemize}
\end{Proposition}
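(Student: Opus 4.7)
The plan is to follow the same decomposition pattern used in Propositions 2 and 3: write out the expected profit of firm $i$ under the equilibrium of Corollary \ref{pro1}, differentiate in $d$ by the product rule, and split the derivative into a diversification piece (from the moving joint probability weights) and a strategic-curtailment piece (from $\phi'$). Linearity of $P$ lets both pieces be evaluated in closed form, after which the explicit formulas from \eqref{prob} and Corollary \ref{pro1} reduce the sign question to the sign of a quadratic in $d$ whose coefficients can be read off directly to produce the two thresholds $L_1=2s/9$ and $L_2=2s/8$.

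For the diversification piece, firm $i$'s realized profit takes the four values $\pi_i(L,L)=L(s-2L)$, $\pi_i(H,H)=\phi(s-2\phi)$, $\pi_i(L,H)=L(s-L-\phi)$, and $\pi_i(H,L)=\phi(s-L-\phi)$. Using $\p_d\Pr\{L,H\}=\zeta=-\p_d\Pr\{L,L\}=-\p_d\Pr\{H,H\}$ from the appendix, the contribution of the probability derivatives to $\p_d\E[\pi_i]$ equals $\zeta\cdot WD_{\pi_i}$, and a short expansion of $(L+\phi)(s-L-\phi)-L(s-2L)-\phi(s-2\phi)$ collapses to $WD_{\pi_i}=(\phi-L)^2>0$. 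Hence the diversification piece is $\zeta(\phi-L)^2>0$.

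For the strategic-curtailment piece, differentiating each state's profit in $\phi$ and weighting by probabilities gives $\phi'\bigl[\Pr\{H,H\}(s-4\phi)+\Pr\{H,L\}(s-2L-2\phi)\bigr]$. The crucial simplification is to invoke the BNE first-order condition of Proposition \ref{pro0}, which for linear $P$ becomes (after multiplying through by $\beta$) $\Pr\{H,H\}(s-3\phi)+\Pr\{H,L\}(s-L-2\phi)=0$. Subtracting this identity from the bracket collapses it to $-(\Pr\{H,H\}\phi+\Pr\{H,L\}L)$, so the strategic-curtailment piece becomes $-\phi'\bigl(\Pr\{H,H\}\phi+\Pr\{H,L\}L\bigr)<0$ by Lemma \ref{lem1}.

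Plugging in the explicit expressions $\phi=\frac{s\beta+(s-L)(1-\beta)d}{3\beta+2(1-\beta)d}$, $\phi'=\frac{\beta(1-\beta)(s-3L)}{[3\beta+2(1-\beta)d]^2}$, $\zeta=\frac{\beta^2(1-\beta)}{[\beta+d(1-\beta)]^2}$, $\Pr\{H,H\}=\frac{\beta^2}{\beta+d(1-\beta)}$, $\Pr\{H,L\}=\frac{d\beta(1-\beta)}{\beta+d(1-\beta)}$, and factoring out a strictly positive common constant leaves
\[
\mathrm{sign}\,\frac{\p\E[\pi_i]}{\p d}\;=\;\mathrm{sign}\Bigl(\beta^2(2s-9L)+\beta(1-\beta)(4s-17L)\,d+2(1-\beta)^2(s-4L)\,d^2\Bigr).
\]
At $L=2s/9$ the constant coefficient vanishes while the other two coefficients are strictly positive (since $4s-17L=2s/9>0$ and $s-4L=s/9>0$); at $L=2s/8$ the $d^2$ coefficient vanishes while the other two are strictly negative (since $2s-9L=-s/4<0$ and $4s-17L=-s/4<0$). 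Hence for $L<L_1=2s/9$ every coefficient is positive and $\p_d\E[\pi_i]>0$ on $[0,1]$, giving $\arg\max\E[\pi_i]=1$; for $L>L_2=2s/8$ every coefficient is negative and $\p_d\E[\pi_i]<0$ on $[0,1]$, giving $\arg\max\E[\pi_i]=0$. The main obstacle is purely algebraic bookkeeping---carrying through the substitution cleanly and verifying that exactly the two clean thresholds $2s/9$ and $2s/8$ arise as the values of $L$ annihilating the extreme coefficients of this quadratic. The FOC trick in the strategic-curtailment step is what keeps this computation tractable rather than unwieldy.
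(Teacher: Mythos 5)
Your proof is correct, and every intermediate computation checks out: $WD_{\pi_i}=(\phi-L)^2$, the collapse of the $\phi'$-bracket to $-(\Pr\{H,H\}\phi+\Pr\{H,L\}L)$ via the linear-demand FOC, and the final quadratic $\beta^2(2s-9L)+\beta(1-\beta)(4s-17L)d+2(1-\beta)^2(s-4L)d^2$, which factors as $\bigl(\beta(2s-9L)+d(1-\beta)(2s-8L)\bigr)\bigl(\beta+d(1-\beta)\bigr)$ and is therefore the paper's sign-determining expression multiplied by the strictly positive factor $\beta+d(1-\beta)$. Your route is organized differently from the paper's: the paper brute-forces the closed form of $\E_{w_1,w_2}[\pi_i]$ as an explicit function of $d$ (its equation \eqref{PS}) and differentiates, landing directly on a bracket that is \emph{linear} in $d$ with coefficients $\beta(2s-9L)$ and $(1-\beta)(2s-8L)$, so the two thresholds are read off immediately. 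You instead carry the strategic-curtailment/diversification decomposition (which the paper uses only for the general concave-$P$ discussion in Example \ref{th-33}) all the way through the linear specialization, which is conceptually cleaner --- it makes visible \emph{which} effect each coefficient comes from --- at the cost of a slightly heavier quadratic whose sign analysis requires checking three coefficients rather than two. One small point worth making explicit: the positivity of your overall prefactor (and of $\phi'$) rests on $s-3L>0$, which holds because Assumption \ref{assumption-11} with linear demand forces $L<q_C=s/3$; you use this implicitly and should state it.
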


\highlighta{When $L$ is sufficiently high, Proposition \ref{th-firms} shows that the expected value of profit for each producer is decreasing in $d$. In this case, producers prefer lower $d$ even though higher $d$ improves overall welfare, as shown in Proposition \ref{prop2}. This suggests that profit and welfare motives may sometimes be misaligned, since dispersion uniformly improves social welfare but may not improve profit. For example, a regulator may propose policies to increase $d$ by encouraging investment far from existing sites, but firms might oppose such policies because they reduce the expected value of profits. }



\section{Multiple Wind Generators with a Generic Joint Distribution of Wind Availability}\label{Sec::multw}

This section shows that the main results of the paper extend to markets with multiple wind generators. We demonstrate a parsimonious way to extend the notion of dispersion $d$ to markets with an arbitrary number of wind producers, and we show that high state output and welfare are still increasing in $d$ due to its effects on strategic curtailment and diversification.

Consider a market with $N+1$ wind generators, each with probability $\beta$ of producing in the high state and separated by dispersion $d$. Here, $d$ gives a proxy for the level of correlation among the \highlighta{states of different producers, where as before high $d$ implies that the stochastic resource availabilities of different producers are more independent}. \highlighta{We define the state of producer $i$ as $s_i = \mathbb{1}_{\{w_i = H\}}$; this represents the state of producer $i$ as a 0 or a 1 (with $H$ corresponding to 1). Let $S_{-i} = \sum_{j \neq i} s_j$, the number of producers besides producer $i$ that are in the high state. Let $S = \sum_{i} s_i$, the total number of producers in the high state.} 

Then, consider the random vector $\vec{s}^d$ for $d \in (0,1]$ and assume $\beta > 0$. The probability distribution of $\vec{s}^d$ is the joint probability distribution $\Pr\{s_1, s_2, ..., s_{N+1}; d\}$.  Each of $s_i$ are random variables, as are the sums, and therefore both
\begin{align} 
& \Pr\{S = j ; d\} \quad j \in \{0,...,N+1\} \nonumber \\ 
& \Pr\{S_{-i} = k | w_i = H; d\} \quad k \in \{0,...,N\} \quad i \in \{1,....,N+1\} \label{cond}
\end{align}
are well defined. Moving forward, we use $S^d$ and $S_{-i}^{d}$ as the random variables of the sum of states generated by distributions parameterized by dispersion $d$. \highlighta{Again, we assume that distributions are symmetric; the probability law for $S_{-i}^{d}$ is equal to the probability distribution for $S_{-j}^{d}$ for all $i,j \in \{1,...N+1\}$.}

As before, we assume that $L$ is sufficiently small such that producers never curtail in the low state, i.e. $P((N+1)L) + L P'((N+1)L) > 0$. \highlighta{This is equivalent to the first part of Assumption \ref{assumption-11} in the duopoly case.}  
The first order optimality condition for $\phi$ is given by
\begin{equation} \label{multFOC}
\mathbb{E}_{S_{-i}}[P(\phi+S_{-i}\phi + (N-S_{-i})L) + \phi P'(\phi + S_{-i}\phi + (N-S_{-i})L)|w_i=H] = 0,
\end{equation}
\highlighta{where the expectation is evaluated using the conditional probability distribution in \eqref{cond}. Furthermore, we assume that there exists some value $v < H$ that solves \eqref{multFOC} when $\phi = v$. This corresponds to the second part of Assumption \ref{assumption-11} for the oligopoly case, but it is a weaker requirement. It simply ensures that the equilibrium is of interest; otherwise, $q_i(w_i) = w_i$ and players always produce their full energy availability.}   

\highlighta{Under these assumptions, there is a unique root $\phi$ that solves \eqref{multFOC}, with $L<\phi<H$, and the unique symmetric BNE is given by $q_i(w_i) = \min\{w_i,\phi\}$. We adopt these assumptions for the remainder of this section, and let $\phi$ refer to the unique root of \eqref{multFOC}.}


Next we characterize two sufficient conditions on the effect of the parameter $d$ on the joint and conditional distributions.\footnote{These conditions are based on first- and second-order stochastic dominance, see \cite{shaked2007stochastic}.}
These conditions allow for the extension of our results on strategic curtailment and welfare to any arbitrary inverse demand curve with $P' < 0$, $P'' \leq 0$ in a market with $N+1$ producers.  
\begin{Assumption}\label{assumption-FOSD}
	\highlighta{For all $d' > d$, for each $i$, conditional on $w_i = H$,  $S_{-i}^{d} \succeq_{FOSD} S_{-i}^{d'}$} That is, $\forall i$, $\forall j \in \{0,...,N\} $
	\begin{equation*}
	 \Pr\{S_{-i} > j | w_i = H; d\} \geq \Pr\{S_{-i} > j | w_i = H; d'\}.
	\end{equation*}
\end{Assumption}
\begin{Assumption}\label{assumption-SOSD}
	For all $d' > d$,  $S^{d'} \succeq_{SOSD} S^d$ That is, $\forall m$, 
	\begin{equation*}
	\sum_{j=0}^{m} \big(\Pr\{S > j ; d'\}  - \Pr\{S > j ; d\} \big) \geq 0.
	\end{equation*}
\end{Assumption}
From the perspective of a single producer $i$ in the high state, Assumption \ref{assumption-FOSD} requires that more competitors are likely to be in the high states when dispersion $d$ is lower. The idea is that when dispersion $d$ is small, producer $i$ being in the high state provides a stronger signal that competitors are also more likely to be in the high state.

Assumption \ref{assumption-SOSD} says that when $d$ is higher, the sum of wind availability has at least as high a mean and less weight in the tails of its distribution. When $d$ is high, the resource availabilities of different producers are nearly independent. When $d$ is low, there is high correlation between producers, so there is a greater chance that a large number of producers ($\gg N/2$) are either in the high state or the low state. Both Assumptions \ref{assumption-FOSD}  and \ref{assumption-SOSD} are satisfied by the duopoly model in Section \ref{Sec::Model}.\footnote{Consider the duopoly model in Section \ref{Sec::Model}, but with general probability distributions $\Pr\{w_i=H|w_j=H\} \equiv f(d,\beta)$ and $\Pr\{w_i=H|w_j=L\} \equiv g(d,\beta)$. In Section \ref{Sec::Model}, specific functional forms are provided in \eqref{prob} for $f(d,\beta)$ and $g(d,\beta)$ in order to motivate the exposition. For generic conditional probabilities in the duopoly model, Assumption \ref{assumption-FOSD} establishes that $f(d,\beta)$ is weakly decreasing in $d$. Assumptions \ref{assumption-FOSD} and \ref{assumption-SOSD} together establish that $g(d,\beta)$ is weakly increasing in $d$.}

\begin{Proposition} \label{th-gen_SC}
	For general $N\geq 1$, given Assumption \ref{assumption-FOSD}, the output of producers in the high state $\phi$ is \highlighta{(weakly)} increasing in $d$. Therefore, as in the duopoly case, increasing $d$ (weakly) decreases strategic curtailment.  
\end{Proposition}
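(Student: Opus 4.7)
My plan is to attack Proposition~\ref{th-gen_SC} by monotone comparative statics on the first-order condition~\eqref{multFOC}. Define
\begin{equation*}
h(\phi,k) \;\equiv\; P\bigl(\phi + k\phi + (N-k)L\bigr) + \phi\,P'\bigl(\phi + k\phi + (N-k)L\bigr),
\end{equation*}
and let $F(\phi,d) \equiv \mathbb{E}\bigl[h(\phi,S_{-i}^d)\,\big|\,w_i=H\bigr]$, so that \eqref{multFOC} reads $F(\phi,d)=0$. The idea is to show (a) $F$ is strictly decreasing in $\phi$ and (b) $F$ is weakly increasing in $d$ for each fixed $\phi$; together these imply that the unique root $\phi(d)$ is weakly increasing in $d$.

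For step (a), I differentiate $h$ in $\phi$, using $Q(\phi,k) = \phi(1+k) + (N-k)L$ with $\partial Q/\partial\phi = 1+k$:
\begin{equation*}
\frac{\partial h}{\partial \phi} \;=\; (2+k)\,P'(Q) \;+\; \phi(1+k)\,P''(Q).
\end{equation*}
Since $P'<0$, $P''\leq 0$, $\phi>0$, and $k\in\{0,\dots,N\}$, each term is non-positive and the first is strictly negative, so $\partial h/\partial\phi<0$ pointwise in $k$; taking expectations preserves the strict inequality, giving $\partial F/\partial\phi < 0$.

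For step (b), the key observation is that $h(\phi,\cdot)$ is weakly decreasing in $k$: increasing $k$ by $1$ replaces an $L$-producing competitor by a $\phi$-producing one, raising the argument $Q$ by $\phi-L>0$; then $P$ decreases because $P'<0$, and $\phi P'$ weakly decreases because $P''\leq 0$. Assumption~\ref{assumption-FOSD} says $S_{-i}^{d} \succeq_{\text{FOSD}} S_{-i}^{d'}$ for $d'>d$ (conditional on $w_i=H$), so for any decreasing function $h(\phi,\cdot)$ the standard FOSD characterization yields $\mathbb{E}[h(\phi,S_{-i}^{d})\mid w_i=H]\le \mathbb{E}[h(\phi,S_{-i}^{d'})\mid w_i=H]$, i.e.\ $F(\phi,d)\le F(\phi,d')$. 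Thus $F$ is weakly increasing in $d$ at every fixed $\phi$.

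Combining (a) and (b) via the implicit function theorem gives
\begin{equation*}
\frac{\partial \phi}{\partial d} \;=\; -\,\frac{\partial F/\partial d}{\partial F/\partial \phi} \;\geq\; 0,
\end{equation*}
which is the desired monotonicity; the conclusion about strategic curtailment follows because the high-state curtailment is $H-\phi$ and the (unconditional) expected production $\beta\phi + (1-\beta)L$ moves with $\phi$, exactly as in the duopoly Lemmas~\ref{lem1}--\ref{lem2}. The only delicate point I foresee is justifying the monotonicity of $h$ in $k$ rigorously when $P''=0$ on some region, but since we only need \emph{weak} monotonicity of $\phi$ in $d$, weak monotonicity of $h$ in $k$ (which follows immediately from $P'<0$ and $P''\leq 0$) is all that is required.
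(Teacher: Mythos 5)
Your proof is correct and follows essentially the same route as the paper's: both establish that the left-hand side of \eqref{multFOC} is strictly decreasing in $\phi$ and, via Assumption \ref{assumption-FOSD} applied to the decreasing-in-$k$ integrand, weakly increasing in $d$, which forces the unique root to be weakly increasing in $d$ (the paper packages this as a proof by contradiction rather than as monotone comparative statics). The only quibble is the closing appeal to the implicit function theorem, which presupposes that $F$ is differentiable in $d$ — not guaranteed when $d$ merely indexes an abstract FOSD-ordered family of distributions — but your steps (a) and (b) already yield the monotonicity of the root directly, so nothing substantive is lost.
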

The left hand side of the first order condition \eqref{multFOC} is in general decreasing in the output of other producers. The intuition is that the expected value of the output of other producers, with $\phi$ fixed, is decreasing in $d$, from the perspective of a producer whose output is high. Therefore, higher $d$ increases the left hand side of \eqref{multFOC}. Lower $\phi$ also increases the left hand side. Thus, as $d$ increases, a lower $\phi$ cannot possibly solve the first order condition because both higher $d$ and lower $\phi$ increase the left hand side of \eqref{multFOC}.  
 
\begin{Proposition} \label{th-gen_welfare}
	For general $N\geq 1$, given Assumptions \ref{assumption-FOSD}, \ref{assumption-SOSD}, and $P(\phi^1 (N+1)) \geq 0$,\footnote{The variable $\phi^1$ represents the high state production when $d = 1$. This assumption implies that equilibrium prices will not become negative. In practice, we see negative prices arise in markets with high penetrations of wind resources, but this is due to the presence of subsidies, and non-convexities associated with traditional generation, not a result of producer strategy in the face of uncertainty.} the expected value of welfare $\mathbb{E}_{\vec{s}^d}[W]$ is increasing in $d$.  
\end{Proposition}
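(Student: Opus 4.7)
The plan is to write expected welfare as a function of two ingredients—the equilibrium high-state output $\phi(d)$ and the (conditional) distribution of the number $S$ of firms in the high state—then separate the effect of raising $d$ into a strategic-curtailment piece, handled by Proposition \ref{th-gen_SC}, and a diversification piece, handled by Assumption \ref{assumption-SOSD}. In the symmetric BNE each firm produces $L$ in the low state and $\phi(d)$ in the high state, so total output is $Q=(N+1)L+S(\phi(d)-L)$ and welfare is $W=U(Q)=h(S,\phi(d))$, where $h(S,\phi)\equiv U\bigl((N+1)L+S(\phi-L)\bigr)$. Thus $\mathbb{E}_{\vec s^d}[W]=\mathbb{E}_{S^d}[h(S,\phi(d))]$.

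For $d'>d$ in $(0,1]$ I will add and subtract $\mathbb{E}_{S^{d'}}[h(S,\phi(d))]$ to get
\begin{align*}
\mathbb{E}_{S^{d'}}[h(S,\phi(d'))]-\mathbb{E}_{S^{d}}[h(S,\phi(d))]
&=\underbrace{\mathbb{E}_{S^{d'}}\bigl[h(S,\phi(d'))-h(S,\phi(d))\bigr]}_{\text{strategic curtailment}}\\
&\quad+\underbrace{\mathbb{E}_{S^{d'}}[h(S,\phi(d))]-\mathbb{E}_{S^{d}}[h(S,\phi(d))]}_{\text{diversification}}.
\end{align*}
I will then argue that each bracket is non-negative, which gives the monotonicity claim.

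For the strategic-curtailment bracket, $\partial h/\partial \phi=S\,U'(Q)=S\,P(Q)$. Proposition \ref{th-gen_SC} gives $\phi(d)\leq\phi(d')\leq\phi^{1}$, so $Q\leq(N+1)\phi^{1}$, and by monotonicity of $P$ together with the hypothesis $P((N+1)\phi^{1})\geq 0$ we obtain $P(Q)\geq 0$ uniformly over the relevant range of $\phi$ and $S$. Hence $\phi\mapsto h(S,\phi)$ is non-decreasing for every $S\geq 0$, and integrating against $S^{d'}$ keeps the sign.

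For the diversification bracket, fix $\phi=\phi(d)$. Since $\phi(d)>L$, the map $S\mapsto (N+1)L+S(\phi-L)$ is linear and increasing in $S$; composing with the concave, increasing $U$ makes $S\mapsto h(S,\phi)$ concave and increasing on $\{0,1,\dots,N+1\}$. Assumption \ref{assumption-SOSD}, rewritten via $\Pr\{S>j\}=1-\Pr\{S\leq j\}$, is exactly the standard discrete SOSD condition $S^{d'}\succeq_{SOSD}S^{d}$, so the Abel-summation characterization of SOSD for integer-valued random variables yields $\mathbb{E}_{S^{d'}}[h(\cdot,\phi(d))]\geq\mathbb{E}_{S^{d}}[h(\cdot,\phi(d))]$. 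Adding the two non-negative pieces concludes the argument. The only real subtlety is the uniform price-non-negativity step: one must verify $P(Q)\geq 0$ not only at a particular equilibrium but over the entire range of $\phi$ encountered as $d$ moves to $d'$, which is exactly what $P((N+1)\phi^{1})\geq 0$ together with the monotonicity of $\phi$ from Proposition \ref{th-gen_SC} buys; everything else is a routine decomposition plus a standard application of SOSD.
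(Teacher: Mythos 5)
Your proposal is correct and follows essentially the same route as the paper: it writes $W$ as an increasing concave function of $S$ through $Q=(N+1)L+S(\phi-L)$, splits the change into the same two brackets (a pointwise increase in $\phi$ handled by Proposition \ref{th-gen_SC} together with $P(Q)\geq 0$, and a distributional shift handled by Assumption \ref{assumption-SOSD} via second-order stochastic dominance). Your treatment of the price-non-negativity step is, if anything, slightly more explicit than the paper's about why $P((N+1)\phi^1)\geq 0$ covers the whole range of $\phi$ values encountered.
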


\highlighta{Consider $d' > d$. We aim to show that $\mathbb{E}_{\vec{s}^{d'}}[W] > \mathbb{E}_{\vec{s}^d}[W]$. Let $\phi^d$ refer to the equilibrium curtailment level as described by \eqref{multFOC}, for the random availability vector $\vec{s}^d$ indexed by $d$. Consider a given realization of the resource availability for each producer, and let $S = \tilde{s}$, for some $\tilde{s}$ with $0 \leq \tilde{s} \leq N+1$. We can describe welfare as a function $W(\tilde{s},\phi)$. The full proof in the Appendix explains that under the first-order conditions of the equilibrium described by \eqref{multFOC}, $W$ is concave and increasing in $\tilde{s}$. Welfare $W$ is also increasing in $\phi$.} 

\highlighta{The distributions of $S$ satisfy Assumption \ref{assumption-SOSD}, so the distribution with higher $d$ has total wind availability $S$ that second-order stochastically dominates the original distribution. The definition of second-order stochastic dominance implies that the dominating random variable leads to higher expected value for increasing concave functions. Therefore, holding $\phi$ constant, higher $d$ increases the expected value of welfare: $\mathbb{E}_{\vec{s}^{d'}}[W(\cdot, \phi^{d'})] > \mathbb{E}_{\vec{s}^{d}}[W(\cdot, \phi^{d'})]$. Furthermore, using Assumption \ref{assumption-FOSD}, Proposition \ref{th-gen_SC} shows that $\phi$ is increasing in $d$. Since $W$ is increasing in $\phi$, $\mathbb{E}_{\vec{s}^{d}}[W(\cdot, \phi^{d'})] > \mathbb{E}_{\vec{s}^{d}}[W(\cdot, \phi^{d})]$; together, the two inequalities establish that $W$ is increasing in $d$.}

\section{\highlight{Competition with Traditional Generation}} \label{sec :: trad}
This section considers Cournot competition between two wind producers and a traditional generator. The wind producers with dispersion $d$ and availability $\beta$ have identical characteristics to those described in Section \ref{Sec::Model}. The traditional producer can output any quantity $x \in \mathbb{R}^+$ with constant marginal cost $c \geq 0$; it has no information regarding the availability of the wind generators. As before, the inverse demand function $P(q_1,q_2,q_3)$ is a function of the sum of its arguments; with slight abuse of notation, we also write this as $P(q_1 + q_2 + q_3)$. 

\highlighta{This section extends the existing results on the impact of $d$ on welfare. As before, welfare is increasing in $d$. The models used in this section and Section \ref{Sec::multw} could be used to analyze markets with multiple wind producers and multiple traditional generators, but the analysis in this section is sufficient to highlight the main insights.} 
	
\highlighta{Towards our equilibrium assumptions, let $\ubar{x} \geq 0$ be the solution to $\mathbb{E}[P(w_1,w_2,\ubar{x}) + \ubar{x} P'(w_1,w_2,\ubar{x})] = c $. This value represents the lower bound for the total energy production by the traditional generator in an equilibrium.}  

\begin{Assumption} \label{Assumption_22}
\highlighta{Let $P(\cdot)$ be inverse demand and $c$ be the marginal cost of using traditional generation. 
Assume $c < P(2H)$, which guarantees that the traditional generator produces a positive quantity of energy. Assume $P(3 L) + LP'(3 L)>0$ and $P(H + L + \ubar{x})+HP'(H + L + \ubar{x})<0$.
} 

\end{Assumption}
\highlighta{The assumptions extend Assumption \ref{assumption-11} to the case of three players with a traditional generator. They guarantee that we have a solution of interest, so we avoid explaining the cases whereby $L$ is sufficiently high that wind producers might always curtail, where $H$ is too low so that wind producers might never curtail, or where $c$ is sufficiently high that the traditional producer will never produce.}\footnote{The assumption establishes an upper limit on $c$. When $c$ is lower, the output of the traditional generator increases because they have lower marginal costs of production.}



\begin{Proposition} The Cournot equilibrium for generic inverse demand $P(\cdot)$, with $P'<0$, $P''<0$ satisfies the following first order conditions, where \eqref{A} is the first order condition for wind producers and \eqref{B} is the first order condition for the traditional producer.
\begin{equation} \label{A}
\Pr\{L|H\} (P(L+\phi+x) +\phi P'(L + \phi + x)) + \Pr\{H|H\} (P(2 \phi+x) +\phi P'(2 \phi + x)) = 0 
\end{equation}  
\begin{multline} \label{B}
\Pr\{L,L\} (P(2 L+x) +x P'(2 L + x)) + 2 \Pr\{L,H\} (P(L+ \phi+x) + x P'(L+\phi + x)) \\ + \Pr\{H,H\} (P(2 \phi+x) + x P'(2 \phi + x)) - c'(x) = 0 
\end{multline}
\end{Proposition}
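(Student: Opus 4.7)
The plan is to follow the template of Proposition 1, but with an extra strategic player (the traditional generator) whose production $x$ is deterministic because it has no private information. First, I would argue that under Assumption 4, any symmetric Bayesian Nash equilibrium has wind producer $i$ playing $q_i(L) = L$ (a corner) and $q_i(H) = \phi$ with $L < \phi < H$ (interior), and the traditional generator playing a constant $x > 0$. The condition $c < P(2H)$ ensures $x > 0$; the condition $P(3L) + L P'(3L) > 0$ combined with concavity of $P$ ensures that no wind producer curtails in the low state; the condition $P(H + L + \ubar{x}) + H P'(H + L + \ubar{x}) < 0$ ensures strict curtailment in the high state, hence $\phi < H$.

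Next, I would derive the FOC for the traditional generator. Because it has no private information, its objective is the unconditional expected profit
$$\mathbb{E}_{w_1,w_2}\bigl[x\cdot P(q_1(w_1)+q_2(w_2)+x)\bigr]-c(x),$$
evaluated over the four joint realizations $(w_1,w_2)\in\{L,H\}^2$, with the two off-diagonal events pooled by symmetry into the coefficient $2\Pr\{L,H\}$. Substituting the equilibrium strategy $q_j(L)=L$, $q_j(H)=\phi$, differentiating with respect to $x$, and setting the derivative to zero yields equation (B).

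For the wind producer in the high state, I would write the conditional expected profit as $q_i\cdot\mathbb{E}[P(q_i + q_j(w_j)+x)\mid w_i=H]$, noting that $x$ is deterministic while $q_j(w_j)$ is random only through $w_j$. Expanding the expectation over $w_j\in\{L,H\}$ using the conditional probabilities $\Pr\{L\mid H\}$ and $\Pr\{H\mid H\}$ from \eqref{prob}, substituting $q_j(L)=L$, $q_j(H)=\phi$, differentiating with respect to $q_i$ and evaluating at the interior optimum $q_i=\phi$ yields equation (A). A parallel computation for the low state verifies that under Assumption 4 the derivative at $q_i=L$ is strictly positive, confirming the corner.

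The main obstacle is not the mechanical derivation of (A)--(B), but verifying that the simultaneous system has a solution $(\phi,x)$ with $\phi\in(L,H)$ and $x>0$. One needs a two-dimensional fixed-point argument analogous to the one-dimensional argument in Proposition 1: the strategic-substitutes structure induced by $P'<0$ and $P''\leq 0$ makes each best-response map monotone, continuity ensures a fixed point by Brouwer, and the boundary inequalities of Assumption 4 keep that fixed point in the interior of the relevant region. Uniqueness would follow from a standard diagonal-dominance or contraction argument using strict concavity $P''<0$; this is the step most likely to require care, but the conceptual content is a routine extension of the duopoly proof rather than a new idea.
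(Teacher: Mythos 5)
Your derivation is correct and follows essentially the same route as the paper, which offers only a one-line justification: the wind producers' first-order condition is obtained exactly as in Proposition 1 (conditional expected profit given $w_i=H$, with the deterministic $x$ added to total quantity), and the traditional generator's condition comes from differentiating its unconditional expected profit over the four joint states. Your closing discussion of existence and uniqueness of the joint system $(\phi,x)$ goes beyond what the Proposition actually claims and beyond what the paper establishes in the general concave case (the paper only exhibits the unique solution explicitly for linear inverse demand in the subsequent Example), so flagging it as the delicate step is fair, but it is not required to prove the statement as written.
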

The result follows exactly from Proposition \ref{pro0} with the addition of the traditional generator whose output satisfies the first order condition described in \eqref{B}. \highlighta{Equation \eqref{A} describes the equilibrium high state output $\phi$ for wind producers to maximize their profit, contingent on the equilibrium behavior of the other wind producer and the traditional generator. Equation \eqref{B} describes the equilibrium output $x$ of the traditional generator, with $c'(x) = c$ in our example.}  
\begin{Example} \label{prop :: trad_gen eq}
	Consider a market with linear inverse demand, $P(q_1,q_2,q_3) = s - q_1 - q_2 - q_3$. Then the unique high state output of the wind generators $\phi$ and the production output of the traditional generator $x$ are given by:
	\begin{equation} \label{phi}
	\phi = \frac{\frac{1}{2}(s+c)(\beta + d(1-\beta))+ L \beta(1-\beta)(1-d)}{3\beta + 2d(1-\beta) - \beta^2 - \beta d (1-\beta)} 
	\end{equation}  
	\begin{equation} \label{wphi}
	x = \frac{1}{2}(s-c) - \phi \beta - L (1-\beta).  
	\end{equation}
\end{Example}
\noindent The Example is explained in the Appendix. It is obtained by solving the equations \eqref{A} and \eqref{B} in terms of $\phi$ and $x$ in the case of linear inverse demand.  

Next we consider the impact of heterogeneity on strategic curtailment by wind producers $\frac{\partial \phi}{\partial d}$ and quantity withholding by the traditional producer $\frac{\partial x}{\partial d}$ in the case of linear inverse demand. 


\begin{Corollary} \label{trad_linearchanges}
	\highlighta{Consider a market with two wind producers indexed by $i=\{1,2\}$ and one traditional producer. Let inverse demand $P$ be linear with $P(q_1,q_2,q_3) = s - q_1 - q_2 - q_3$. Then $\phi$ and $\mathbb{E}[q_i]$ are increasing in $d$ for and the output $x$ of the traditional generator is decreasing in $d$.}
\end{Corollary}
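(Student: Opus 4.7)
The plan is to exploit the closed-form expressions for $\phi$ and $x$ derived in the previous Example, namely \eqref{phi} and \eqref{wphi}, so that all three monotonicity claims reduce to a single calculus question about the sign of $\partial \phi / \partial d$.

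First I would observe that the other two quantities are affine in $\phi$ with coefficients of opposite sign. Since low-state producers always output $L$, $\mathbb{E}[q_i] = \beta \phi + (1-\beta) L$; and \eqref{wphi} reads $x = \tfrac{1}{2}(s-c) - \beta \phi - (1-\beta) L$. Consequently $\partial \mathbb{E}[q_i]/\partial d = \beta\, \partial \phi / \partial d$ and $\partial x / \partial d = -\beta\, \partial \phi / \partial d$, so all three statements of the Corollary follow once $\partial \phi / \partial d > 0$ is established.

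To analyze $\partial \phi / \partial d$, I would write \eqref{phi} as $\phi(d) = A(d)/B(d)$ with $A(d) = \tfrac{1}{2}(s+c)(\beta + (1-\beta) d) + L \beta (1-\beta)(1-d)$ and $B(d) = \beta(3-\beta) + (1-\beta)(2-\beta) d$. Both $A$ and $B$ are affine in $d$, so the quotient rule yields $\partial \phi / \partial d = [A'(d) B(d) - A(d) B'(d)]/B(d)^2$, whose numerator is a constant independent of $d$ (a standard feature of M\"obius-type expressions). The bookkeeping splits cleanly into contributions proportional to $\tfrac{s+c}{2}$ and to $L$; after using the identity $\beta(3-\beta) + (1-\beta)(2-\beta) = 2$ to collapse the $L$-coefficient, I expect the constant numerator to simplify to $\beta(1-\beta)\bigl[\tfrac{s+c}{2} - 2L\bigr]$.

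Finally, to sign this expression I would invoke Assumption \ref{Assumption_22}. For linear inverse demand the condition $P(3L) + L P'(3L) > 0$ becomes $s - 4L > 0$; combined with $c \geq 0$, this yields $s + c - 4L > 0$, hence $\partial \phi / \partial d > 0$ and the Corollary follows. The main obstacle is purely algebraic: careful tracking of the four-term products in $A'B - AB'$, but the affine structure guarantees a clean cancellation that leaves the anticipated closed-form sign expression.
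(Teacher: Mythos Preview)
Your proposal is correct and follows essentially the same route as the paper: both compute $\partial\phi/\partial d$ directly from the closed form \eqref{phi}, obtain the numerator $\beta(1-\beta)(s+c-4L)/2$, sign it via the linear version of Assumption~\ref{Assumption_22} (which gives $s-4L>0$, hence $s+c-4L>0$), and then read off $\partial x/\partial d = -\beta\,\partial\phi/\partial d$ from \eqref{wphi}. Your treatment is slightly more explicit in two respects---you spell out the quotient-rule bookkeeping via the affine $A(d)/B(d)$ decomposition, and you explicitly handle the $\mathbb{E}[q_i]$ claim that the paper's proof leaves implicit---but the argument is the same.
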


\begin{proof}
	We can take the derivative of $\phi$ with respect to $d$, using the form of the equation in \eqref{phi}. 
	\begin{equation} \label{dphidd}
	\frac{\partial \phi}{\partial d} = \frac{(s+c-4 L)\beta(1-\beta)}{2 (3\beta + 2d(1-\beta) - \beta^2 - \beta d (1-\beta))^2} 
	\end{equation}
	Under our assumptions, this is always positive. Equation \eqref{dphidd} is always positive when $s+c-4L >0$, which is always satisfied by Assumption \ref{Assumption_22}. Therefore, the output of the wind generators is increasing in $d$, $ \frac{\partial \phi}{\partial d} > 0$, as in the original market.     
	
	Then, taking the derivative of $x$ using the first order condition in \eqref{wphi}, $\frac{\partial x}{\partial d} = -\beta \frac{\partial \phi}{\partial d} < 0$. Therefore, the output of the traditional generator is decreasing in $d$, so the traditional generator withholds more when the wind generators have less information about the other wind producer's state.
\end{proof}

 \begin{figure}[h]
 \begin{center}
\includegraphics[width=0.6\textwidth]{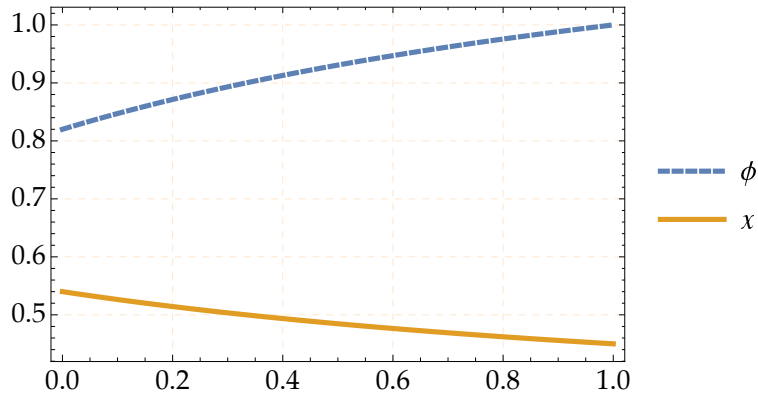}
\end{center}
\caption{This chart shows that the traditional firm's output $x$ decreases in the diversification $d$, but  the average output (and the high state production $\phi$) of the wind generators is increasing in $d$.   Plot parameters: $s=3, \beta=\frac{1}{2}, L=0.1$ and $c=1$.}\label{fig:phix}
\end{figure}

\subsection{Welfare vs. Dispersion in a Market with Fossil-Fuel Generation}

Next, we consider the effects of heterogeneity on welfare. Increasing dispersion $d$ improves welfare in the market that includes a traditional generator.

\begin{Proposition} \label{prop :: trad_gen welfare}
	\highlighta{In the three player market with two wind producers and a traditional producer, and a linear inverse demand $P(q_1,q_2,q_3) = s - q_1 - q_2 - q_3$, the expected value of welfare is increasing in dispersion $d$.}
\end{Proposition}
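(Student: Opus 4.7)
The plan is to express the expected welfare as a closed-form function of $d$ via the explicit equilibrium \eqref{phi}--\eqref{wphi}, then reduce $\p \E[W]/\p d$ to a rational expression whose sign is verified algebraically. Since wind has zero marginal cost, $\E[W] = \E[U(q_1+q_2+x)] - cx$. For linear inverse demand we have $U(Q) = sQ - Q^2/2$, and the traditional producer's first-order condition \eqref{B} gives $\E[P(Q)] = c + x$. Together with $x = (s-c)/2 - M$ from \eqref{wphi}, where $M \equiv \E[q_i] = \beta\phi + (1-\beta)L$, this collapses welfare to
\[
\E[W] \;=\; \tfrac{3}{8}(s-c)^2 + \tfrac{1}{2}(s+3c)M - \tfrac{1}{2}M^2 - \tfrac{1}{2}\Var(q_1+q_2).
\]
Only $M$ and the variance depend on $d$. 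Using $\p M/\p d = \beta\,\p\phi/\p d$ and the identity $(s+3c)/2 - M = 2c + x$, differentiating yields
\[
\frac{\p \E[W]}{\p d} \;=\; \beta(2c+x)\frac{\p\phi}{\p d} - \tfrac{1}{2}\frac{\p \Var(q_1+q_2)}{\p d}.
\]

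Next I would derive compact closed forms for the quantities appearing above. Let $A \equiv \beta + d(1-\beta)$ and $D \equiv \beta(3-\beta) + d(1-\beta)(2-\beta)$ (noting the useful identity $D = \beta + (2-\beta)A$). A short manipulation of the wind first-order condition \eqref{phi} gives $\phi - L = AK/(2D)$ with $K \equiv s+c-4L > 0$ (positive by Assumption \ref{Assumption_22}). Writing $q_i = L + (\phi-L)\,\mathbb{1}\{w_i=H\}$ and using $\rho_s = \beta(1-d)/A$, the variance collapses to
\[
\Var(q_1+q_2) \;=\; \tfrac{1}{2}\beta K^2\, r\bigl[1 - (1+\beta)r\bigr], \qquad r \equiv A/D.
\]
Since $\p r/\p d = \beta(1-\beta)/D^2$ and $\p\phi/\p d = K\beta(1-\beta)/(2D^2)$, collecting terms produces
\[
\frac{\p \E[W]}{\p d} \;=\; \frac{\beta^2(1-\beta)K}{4D^2}\,\Bigl[\, 2(2c+x) - K\bigl(1 - 2(1+\beta)r\bigr) \,\Bigr].
\]

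The final step is to check that the bracketed factor is strictly positive. Using the identities $1 - 2(1+\beta)r = \beta(1-3A)/D$ and $2c + x = (s+3c)/2 - L - \beta AK/(2D)$, and multiplying through by $D$, the bracket reduces cleanly to
\[
2\beta(c + L) \;+\; A\Bigl[(2+\beta)s + (6-\beta)c - 2(2+3\beta)L\Bigr].
\]
The first summand is non-negative. For the second, Assumption \ref{Assumption_22} gives $s > 4L$, so $(2+\beta)s - 2(2+3\beta)L > 4L(2+\beta) - 2L(2+3\beta) = 2L(2-\beta) > 0$; combined with $(6-\beta)c \geq 0$, the bracketed factor is strictly positive, so $\p \E[W]/\p d > 0$.

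The principal obstacle is that the naive FOC-based decomposition---analogous to the proof of Proposition \ref{prop2}---splits $\p\E[W]/\p d$ into a diversification piece $\zeta(\phi-L)^2 \ge 0$ and a strategic-curtailment/substitution piece $\beta(\p\phi/\p d)(2\phi-x)$; the latter can be negative, because the traditional producer contracts as $d$ rises ($\p x/\p d < 0$ by Corollary \ref{trad_linearchanges}) and $x$ may exceed $2\phi$. The algebraic reduction above is what lets one see that the diversification and strategic-curtailment gains precisely dominate the substitution loss, using only the capacity assumption $s > 4L$.
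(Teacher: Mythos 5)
Your proof is correct, and I checked the key algebra: $\phi-L=AK/(2D)$ follows from \eqref{phi}, your variance formula $\Var(q_1+q_2)=\tfrac12\beta K^2 r[1-(1+\beta)r]$ is right, and the bracket does reduce to $2\beta(c+L)+A\bigl[(2+\beta)s+(6-\beta)c-2(2+3\beta)L\bigr]$, which is strictly positive since Assumption \ref{Assumption_22} gives $s>4L$. Your route is genuinely different from the paper's, and your closing remark about the ``principal obstacle'' in fact pinpoints a real gap in the paper's own argument. The paper uses the two first-order conditions to collapse the derivative to $\Gamma+\beta\frac{\p\phi}{\p d}(2\phi-x)$ with $\Gamma=\zeta(\phi-L)^2\geq 0$, and then asserts $2\phi-x\geq 0$ on the grounds that the expression is minimized at $c=d=0$; but at $c=d=0$ one computes $2\phi-x=\bigl((2\beta-1)\tfrac{s}{2}+5L(1-\beta)\bigr)/(3-\beta)$, which is negative for small $\beta$ and small $L$ (e.g.\ $s=3$, $\beta=0.1$, $L=0.01$ gives $2\phi-x\approx-0.40$, and it remains negative for small positive $d$, with choices of $H$ that satisfy Assumption \ref{Assumption_22}). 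So the paper's final chain $\geq\Gamma\geq 0$ does not hold in general, whereas your closed-form computation shows that the diversification gain always strictly dominates the possibly negative curtailment/substitution term. The trade-off is the usual one: the paper's decomposition is shorter and more interpretable when $2\phi-x\geq 0$ (which does hold for, say, $\beta\geq 1/2$), while your explicit reduction is what actually certifies the sign over the whole admissible parameter range and should be regarded as the complete proof.
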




\highlighta{In this model, as before, increasing $d$ still reduces the strategic curtailment of wind producers $\frac{\partial \phi}{\partial d} > 0$, and improves wind diversification. However, when a fossil fuel generator has market power, the fossil fuel generator responds by withholding more of their own output due to strategic substitutability with $\mathbb{E}[q_1 + q_2]$, which increases; thus $\frac{\partial x}{\partial d} < 0$. With a linear inverse demand, the FOCs \eqref{A} and \eqref{B} imply that the sum of the welfare impacts, due to the changes in the equilibrium values of $\phi$ and $x$, is 0. Thus, increasing $d$ only impacts the expected value of welfare through the change in wind diversification, which positively impacts welfare. } 


\subsection{Price vs. Dispersion in a Market with Fossil-Fuel Generation}
Finally, we show that in a market with traditional generation and linear inverse demand, $d$ decreases the \highlighta{expected value of price}. This result extends earlier results to the case of a market with traditional generators and highlights the potential benefits of increased heterogeneity for reducing market prices. 

\begin{Proposition} \label{prop :: trad_gen price}
	The \highlighta{expected value of the market price, given by $\mathbb{E}_{w_1,w_2}[P(q_1(w_1)+q_2(w_2) + x)]$, satisfies $\frac{\partial \mathbb{E}_{w_1,w_2}[P]}{\partial d} = -\beta \frac{\partial \phi}{\partial d} <0$.}
\end{Proposition}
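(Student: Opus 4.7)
The plan is to compute $\mathbb{E}_{w_1,w_2}[P]$ explicitly in the linear inverse demand case by substituting the equilibrium quantities from Example \ref{prop :: trad_gen eq}, differentiate with respect to $d$, and then invoke Corollary \ref{trad_linearchanges} to sign the result.

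First I would write $P(q_1+q_2+x) = s - q_1 - q_2 - x$, so that
\begin{equation*}
\mathbb{E}_{w_1,w_2}[P] \;=\; s \;-\; \mathbb{E}_{w_1,w_2}[q_1 + q_2] \;-\; x.
\end{equation*}
Because $q_i(w_i) = \min\{w_i, \phi\}$ with $\Pr\{w_i=H\}=\beta$, the prior expectation of each wind producer's output is $\beta \phi + (1-\beta)L$, yielding $\mathbb{E}_{w_1,w_2}[q_1+q_2] = 2\beta\phi + 2(1-\beta)L$. Substituting the traditional generator's equilibrium output from \eqref{wphi}, $x = \tfrac{1}{2}(s-c) - \beta\phi - (1-\beta)L$, I obtain after cancellation
\begin{equation*}
\mathbb{E}_{w_1,w_2}[P] \;=\; \tfrac{1}{2}(s+c) \;-\; \beta\phi \;-\; (1-\beta)L.
\end{equation*}

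Next, note that only $\phi$ depends on $d$ on the right-hand side (the parameters $s,c,\beta,L$ are independent of $d$). Differentiating gives $\frac{\partial \mathbb{E}_{w_1,w_2}[P]}{\partial d} = -\beta \frac{\partial \phi}{\partial d}$, which is the stated identity. Finally, Corollary \ref{trad_linearchanges} establishes $\frac{\partial \phi}{\partial d} > 0$ under Assumption \ref{Assumption_22}, and since $\beta > 0$ by assumption, the derivative of expected price with respect to dispersion is strictly negative.

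The proof is essentially a direct computation, so there is no serious obstacle; the only thing to be careful about is using \eqref{wphi} to eliminate $x$ before differentiating, which makes the algebra collapse cleanly and shows transparently why the traditional generator's compensating withholding ($\frac{\partial x}{\partial d} = -\beta \frac{\partial \phi}{\partial d}$ from Corollary \ref{trad_linearchanges}) only partially offsets the price-reducing effect of increased wind output, leaving a net price decrease of magnitude $\beta \frac{\partial \phi}{\partial d}$.
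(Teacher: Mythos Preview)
Your proof is correct. Both your argument and the paper's are direct computations that exploit the linearity of $P$, and they arrive at exactly the same identity; the only difference is the order of operations. You use linearity immediately to write $\mathbb{E}[P]=s-\mathbb{E}[q_1+q_2]-x$, substitute the closed form \eqref{wphi} for $x$ to eliminate it, and then differentiate a single expression in $\phi$. The paper instead keeps the state-by-state expansion $\mathbb{E}[P]=\sum_{w_1,w_2}\Pr\{w_1,w_2\}P(\cdot)$, differentiates in $d$, observes that the ``diversification'' term $\zeta(2P(L+\phi+x)-P(2L+x)-P(2\phi+x))$ vanishes for linear $P$, and then combines the remaining $\partial\phi/\partial d$ and $\partial x/\partial d$ pieces via the relation $\partial x/\partial d=-\beta\,\partial\phi/\partial d$. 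Your route is a bit more streamlined for this linear case; the paper's route makes explicit where a nonzero diversification effect would enter if $P$ were strictly concave, which connects the computation to the earlier decomposition in Section~\ref{Sec::price}.
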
 

\highlighta{ The expected value of total energy production is increasing in $d$. Its derivative with respect to $d$ is given by $2 \Pr\{H\} \frac{\partial \phi}{\partial d} + \frac{\partial x}{\partial d} = (2\beta  - \beta) \frac{\partial \phi}{\partial d} > 0$, where the equality is because $\frac{\partial x}{\partial d} = \beta \frac{\partial \phi}{\partial d}$, as explained in the proof of Corollary \ref{trad_linearchanges}. Under linear inverse demand, the expected value of the market price is decreasing in $d$. }

Since the production by the traditional generator is uniformly decreasing in $d$, increased dispersion reduces profits for the traditional generator. On the other hand, the effects of $d$ on wind producer profits are ambiguous, as was the case in the original model. 

\section{\highlight{Collusion, Prevention, and Dispersion}} \label{Sec:collusion}
This section investigates potential collusion between wind producers and studies the effect of increased heterogeneity on the presence of collusion. It focuses on linear inverse demand for simplicity, and shows that collusion is always possible (incentive compatible) among wind producers when there are no penalties for collusion. The section also examines the case where firms may face random penalties for engaging in collusion, so the threat of sanctions poses a random cost on the decision to collude.

Consider two wind producers that are willing to collude in order to increase profits. They set up a contract to produce at the monopoly level when possible and share profits depending on their stated availability.  The true availability of wind is private information, so a wind producer in the high state can lie about their state and produce the amount of wind appropriate for a producer in the low state.\footnote{\highlighta{We assume the the contract is enforceable with regards to production quantities, which are publicly verifiable. Therefore, if the producer announces that they are in the $H$ (or $L$) state, then in any equilibrium they will produce the agreed upon amount for a producer in that state, regardless of their true state. However, it is not possible for a firm to verify the true state of their competitor (which is private information); out of equilibrium, a producer could choose to lie about its production constraint.}}

Let $\pi_M$ be the combined monopoly profits and $\pi_L$ be the profits when both producers are in the low state. In the case of linear inverse demand, where $P(q_1,q_2) = s - q_1 - q_2$, the optimal output for a monopoly producer is $q_M = \frac{s}{2}$.\footnote{The implication is that colluding producers will produce such that their combined output is at the monopoly level when possible, i.e. $q_1 + q_2 = q_M$, and they will devise a mechanism to share the monopoly profits. } Then

\begin{equation*}
\pi_M = P(q_M) q_M = \frac{s^2}{4} \quad \quad \quad \pi_L = P(L,L) L = (s-2L) L
\end{equation*}

There is an exogenous cost to collusion $\gamma \geq 0$, to explain a situation where the government tries to identify and penalize collusion. \highlighta{We can think of $\gamma$ as being the government's penalty for a firm engaged in collusion, times the probability of detection. We can imagine that the government might undertake various efforts to identify collusion, for instance by reviewing price trends, measuring the difference between wind forecasts and outputs, or monitoring information exchange between competing firms.} 

\highlighta{Producers who collude jointly produce at the monopoly level and share the monopoly profits $\pi_M$ when they are both in the high state; in this case, they each receive $\frac{\pi_M}{2}$. They each produce $L$ and receive $\pi_L$ when they are both in the low state. Additionally, the producers set up a transfer scheme whereby producers that are in the high state pay $t \pi_M$ to producers that are in the low state.} The transfer fraction $t$ represents the fraction of monopoly profits given to the low state producer; since arbitrary $ t \in \mathbb{R}$, and $\pi_M > 0$, any real number is a feasible transfer; the total transfer is written as a product of $t$ and $\pi_M$ (as opposed to a single parameter) because it simplifies the exposition. 

Collusion is possible whenever there exists a monetary transfer $t \pi_M$ satisfying the incentive compatibility constraint (IC), which implies that high state producers will not lie and pretend they are in the low state, and which satisfies the \highlighta{individual} rationality (IR) constraints, which implies that producers will know \emph{ex-ante} that they would like to participate regardless of their unknown state. The incentive compatibility constraint (IC) is given by \eqref{IC}.
\begin{equation} \label{IC} 
\Pr\{H|H\}\frac{\pi_M}{2} + \Pr\{L|H\} (\pi_M - t\pi_M) \geq \Pr\{H|H\} t \pi_M + \Pr\{L|H\} \pi_L
\end{equation}
\highlighta{The IC constraint says that the expected value of the profit for a colluding producer $i$ in the high state is greater than the expected value of the profit they would receive if they lied and declared that they were in the low state, where the expected value of the profit is given using the conditional probabilities for their competitor $j$, given that $i$ is in the high state.}
The individual rationality constraints (IR) for high and low state producers are given by \eqref{IRH} and \eqref{IRL}, respectively.  
\begin{equation} \label{IRH}
\Pr\{H|H\}\frac{\pi_M}{2} + \Pr\{L|H\} (\pi_M-t \pi_M)  - \gamma \geq \Pr\{H|H\} \phi (s-2\phi) + \Pr\{L|H\} \phi (s - L - \phi)
\end{equation}
\begin{equation}\label{IRL}
\Pr\{H|L\}t \pi_M + \Pr\{L|L\} \pi_L - \gamma \geq \Pr\{H|L\} L (s-\phi - L) + \Pr\{L|L\} \pi_L
\end{equation}
The variable $\gamma$ is the expected cost from collusion, due, for instance to attempted prosecution by the government. \highlighta{As before, the conditional probability $\Pr\{H|L\}$ refers to $\Pr\{w_j = H| w_i = L\}$, the probability that the competitor $j$ is in the high state given that a wind producer $i$'s own availability is the low state; this is the same for other combinations of the states $H$ and $L$. Equation \eqref{IRH} explains that a producer in the high state would prefer to collude than to participate in the strategic equilibrium from Section \ref{sec :: equil}. Equation \eqref{IRL} explains that a producer in the low state would prefer to collude than to participate in the strategic equilibrium from Section \ref{sec :: equil}. Both of these constraints must hold; otherwise, a firm would not participate ex-ante because they would recognize that they would terminate the collusion agreement if they were in one state, revealing their availability to their competitor and reducing their profit.}  

\begin{Proposition} \label{collusion_poss}
	If there is no cost to collusion, i.e. $\gamma=0$, then there is always an available transfer satisfying the IC and IR constraints. \highlighta{That is, when $\gamma = 0$, $\exists t \in \mathbb{R}$ that satisfies \eqref{IC}, \eqref{IRH}, and \eqref{IRL}. Therefore, when $\gamma = 0$, producers can always increase profits by colluding.}
\end{Proposition}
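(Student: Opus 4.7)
The plan is to produce a single explicit transfer $t^{\ast}$ and verify that it satisfies all three constraints simultaneously. Because IRL is the only constraint placing a lower bound on $t$, while IC and IRH place upper bounds, the natural candidate is the smallest $t$ making IRL bind, namely $t^{\ast} = L(s-L-\phi)/\pi_M$. When $d=0$ one has $\Pr\{H|L\}=0$, so IRL is trivial and any $t\in[0,1/2]$ works by inspection; I focus on the case $d>0$.

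First I would record four elementary bounds that do all the work: (i) the equilibrium ordering $L\leq\phi$ from Proposition \ref{pro0}; (ii) the identity $\pi_M/2-\pi_L=(s-4L)^2/8\geq 0$, so $\pi_L\leq\pi_M/2$; (iii) $\phi(s-2\phi)\leq s^2/8=\pi_M/2$, because $q\mapsto q(s-2q)$ is maximized at $q=s/4$; and (iv) $(L+\phi)(s-L-\phi)\leq s^2/4=\pi_M$, because $y\mapsto y(s-y)$ is maximized at $y=s/2$. Bounds (iii)-(iv) are nothing more than the optimality of the monopoly output $q_M=s/2$.

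Substituting $t\pi_M = L(s-L-\phi)$, the IC inequality rearranges to
\[
\Pr\{H|H\}\bigl[\pi_M/2-L(s-L-\phi)\bigr]+\Pr\{L|H\}\bigl[\pi_M-L(s-L-\phi)-\pi_L\bigr]\geq 0.
\]
The first bracket is non-negative since (i) gives $L(s-L-\phi)\leq L(s-2L)=\pi_L$ and (ii) gives $\pi_L\leq\pi_M/2$; the second bracket is at least $\pi_M-2\pi_L\geq 0$ by (ii) once more. The IRH inequality rearranges analogously to
\[
\Pr\{H|H\}\bigl[\pi_M/2-\phi(s-2\phi)\bigr]+\Pr\{L|H\}\bigl[\pi_M-(L+\phi)(s-L-\phi)\bigr]\geq 0,
\]
whose two brackets are non-negative by (iii) and (iv) respectively. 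Since IRL holds with equality at $t^{\ast}$ by construction, all three constraints are satisfied at $t^{\ast}$.

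The main obstacle is really just bookkeeping: arranging each side of IC and IRH so that the coefficients multiplying $\Pr\{H|H\}$ and $\Pr\{L|H\}$ are visibly non-negative. Once the monopoly-optimality bounds (iii)-(iv) and the equilibrium ordering $L\leq\phi$ are in hand, the algebra is routine; no machinery beyond what is already established in Section \ref{sec :: equil} is needed.
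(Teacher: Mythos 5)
Your proof is correct and is essentially the paper's argument in a different packaging: the paper rearranges \eqref{IC}, \eqref{IRH}, \eqref{IRL} into two upper bounds and one lower bound on $t$ and shows the lower bound does not exceed either upper bound, which is exactly what you do by instantiating $t$ at the IRL-binding value $L(s-L-\phi)/\pi_M$ and verifying \eqref{IC} and \eqref{IRH} there. The key inequalities are the same in both (the ordering $L\leq\phi$, $\pi_L\leq\pi_M/2$, and the monopoly-optimality bounds $\phi(s-2\phi)\leq\pi_M/2$ and $(L+\phi)(s-L-\phi)\leq\pi_M$), so no further comment is needed.
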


The intuition is that a transfer is always possible when $\gamma = 0$ because the sum of profits from the generators strictly improves when they collude and when one producer is in the high state, so the benefit to producers in the high state is larger than the cost to producers in the low state, and thus there is a feasible transfer that allows collusion to be beneficial for producers \emph{ex-ante}. Next, we examine the effect of $d$ on various features of collusion.   

\noindent{{\color{Maroon}{\bf How does dispersion $d$ impact collusion?}}} \quad

\highlighta{In general, we find that dispersion $d$ does not have generalizable impacts on collusion in our model. Dispersion does not have monotonic impacts on the value of collusion to producers. It also does not monotonically impact the change in welfare due to collusion by producers. Figure \ref{fig :: collusion} summarizes these two effects.}  

\begin{figure}[h]
	\centering
	\begin{tabular}{c c c}
		\hbox
		{\subfloat[]{\includegraphics[scale = 0.4]{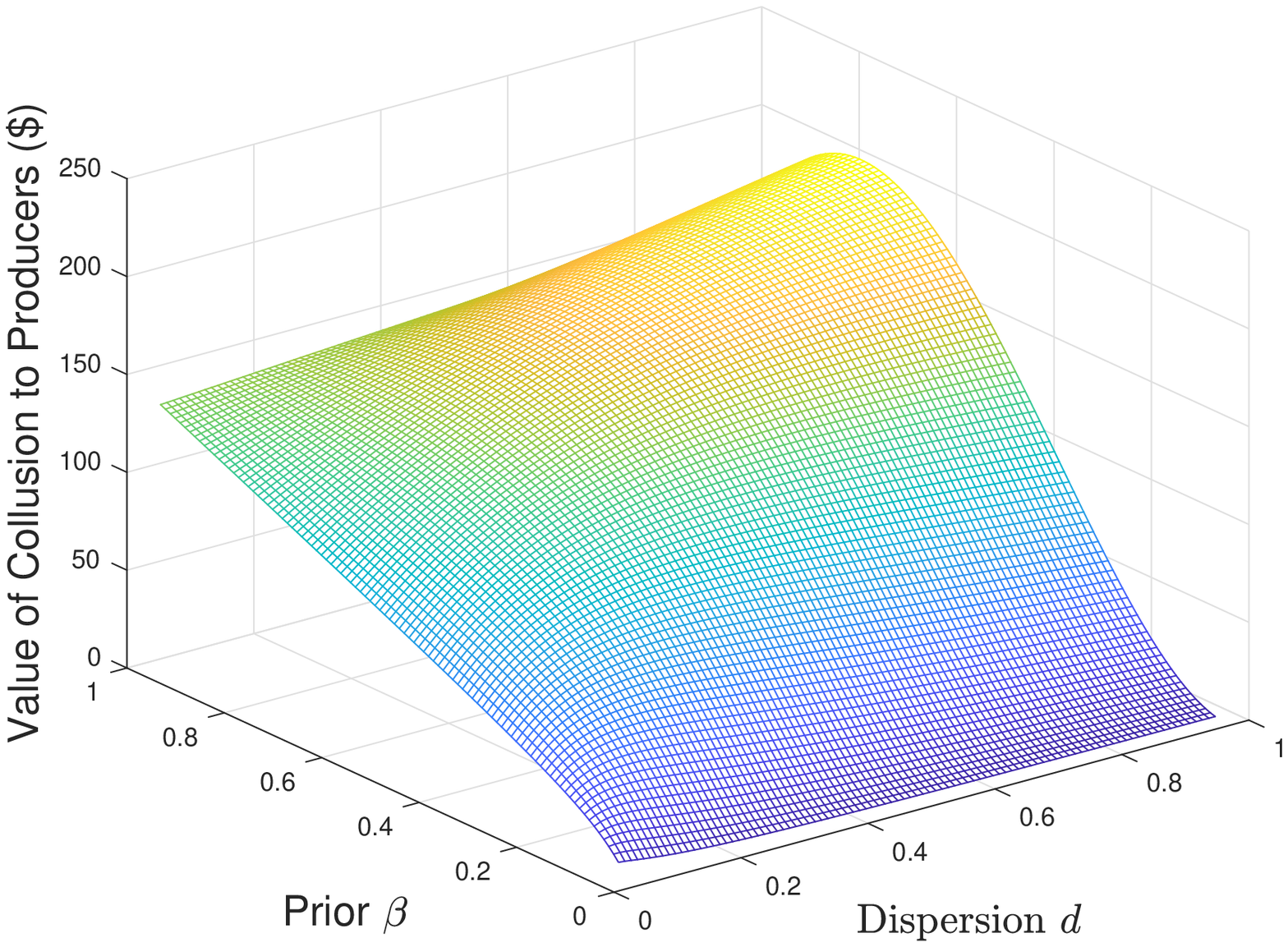}}} &
		{\subfloat[]{\includegraphics[scale = 0.4]{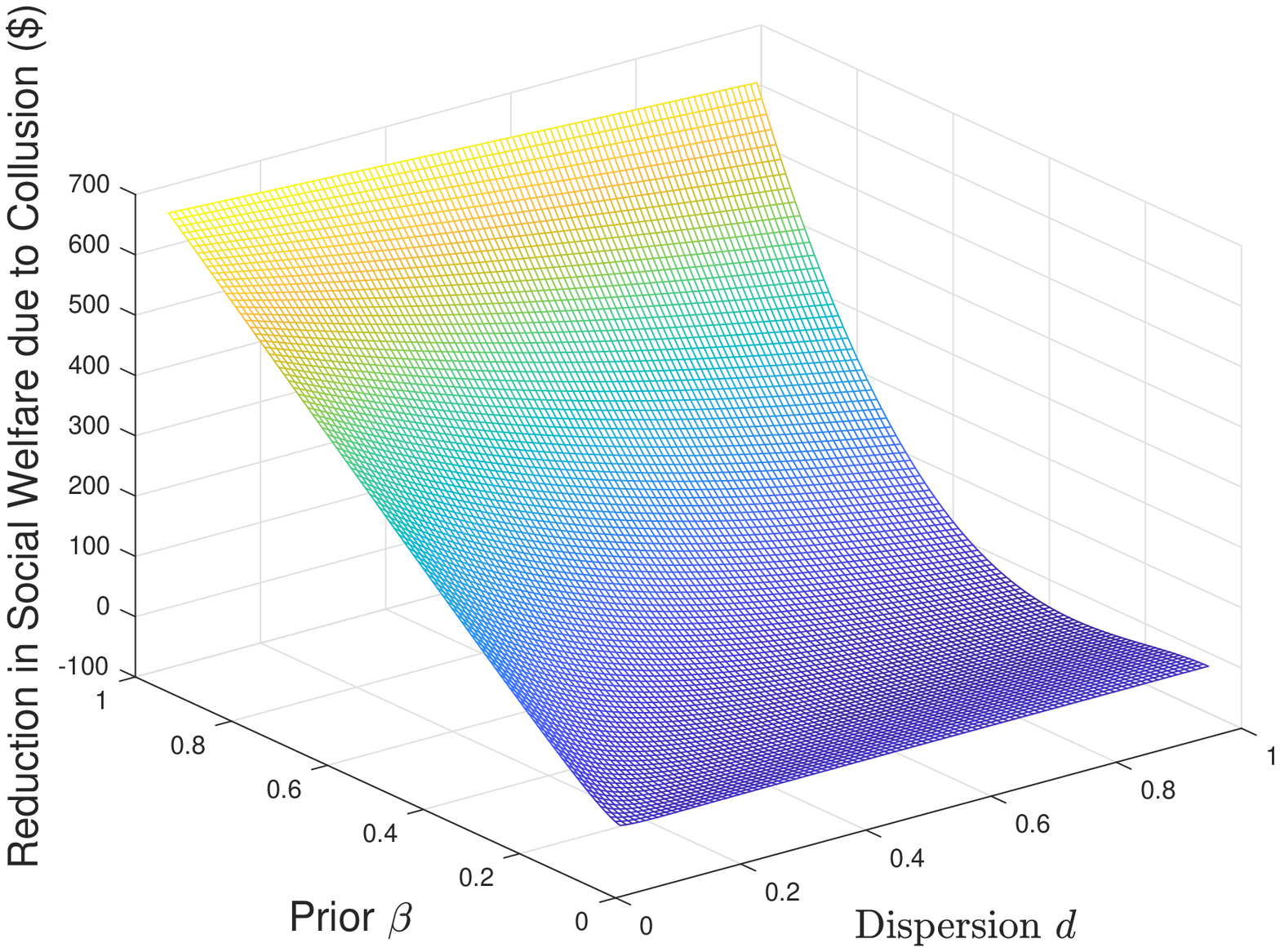}}} \\
	\end{tabular}
	\caption{The impact of dispersion on various features of collusion. (a) shows the impact of dispersion on the value of collusion to producers. (b) shows the impact of dispersion on the costs of collusion in terms of a reduction of social welfare.}
	\label{fig :: collusion}
\end{figure}

\highlighta{We can also estimate the impact of $d$ on policies intended to prevent collusion. Let $\hat{\gamma}$ represent the minimum $\gamma$ such that \eqref{IC}, \eqref{IRH}, and \eqref{IRL} imply a contradiction. We can say that $\hat{\gamma}$ represents the minimum expected value of a collusion penalty such that enforcement is sufficient to prevent collusion; if $\hat{\gamma}$ is very high, this implies that collusion must have a high probability of being punished and/or that the punishment must be very severe in order to prevent collusion. We find that dispersion does not have monotonic impacts on $\hat{\gamma}$. Figure \ref{fig :: collusion2} displays this effect.} 

\begin{figure}[h]
	\centering
	\includegraphics[scale = 0.6]{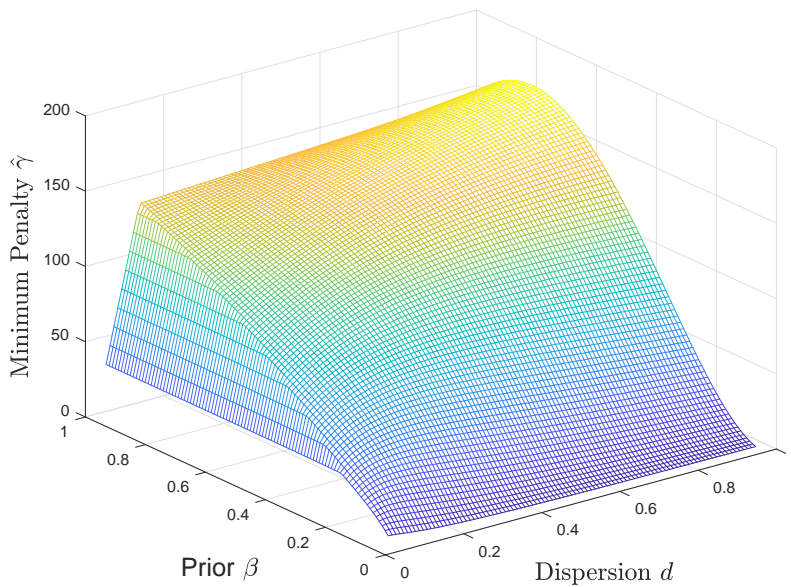}
	\caption{The impact of dispersion on the level of enforcement required to prevent collusion.}
	\label{fig :: collusion2}
\end{figure}

\section{Public Forecasting: Who Benefits?} \label{sec :: infosharing}

\highlight{This section focuses on the benefits of public sharing of information under the assumption that wind producers do not collude. It investigates the benefits of publicly providing high-quality short-term weather forecasts or real-time wind speeds for all wind-generating locations. It suggests that public forecasting always improves welfare, but it does not always benefit producers. This suggests that producers will not provide public forecasting, but that policy makers should fund forecasting efforts to improve the quality of public information.     
	
The results in this section show that information sharing always improves social welfare. However, we also show that when $L$ is sufficiently large (as a function of dispersion $d$), wind producers do not choose to share information. The limit on $L$ is increasing as a function of dispersion $d$. The results suggest that policies to implement public weather forecasting can improve welfare, because profit-maximizing producers will not always share weather information even though doing so always improves social welfare.} 

\highlight{In order to understand the effects of information sharing on social welfare and producer profit, we compare the baseline model (see Section \ref{Sec::Model}), where wind availability (i.e. state) is private information, to the case where wind energy producers {\bf ex-ante} commit\footnote{We assume wind producers are committed and there is no room for adverse selection. For instance, there could be automatic equipment for weather monitoring that shares information publicly.} to share their private information about their available energy, given that the extent of wind producer heterogeneity is $d\in (0,1]$. We assume inverse demand is linear, i.e. $P(q_1+q_2)=1-q_1-q_2$.} Under this assumption, the net welfare obtained by consuming $Q$ units of energy $U(Q) = \int_{0}^{Q} P(q) dq = \int_{0}^{Q} (1-q) dq = Q - \frac{1}{2} Q^2$.

\paragraph{{\color{Maroon}{\bf Is sharing information between wind producers \emph{socially} beneficial?}}}   Information sharing has both positive and negative effects on welfare. It helps prevent producers in the high state from inefficiently withholding their output when the other producer is in the low state, but it also introduces additional costs to welfare due to the reduction in welfare when producers producer at the Cournot output when they are both in the high state. In general, however, these impacts are in favor of the benefits of sharing information, as the following proposition summarizes. 

\begin{Proposition}\label{th-6}
	Sharing information between wind producers  is  \emph{socially} ex-ante beneficial.
\end{Proposition}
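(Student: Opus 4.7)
The plan is to compute the ex-ante expected welfare under both information regimes in closed form for the linear inverse demand $P(Q)=1-Q$, and to show the difference is non-negative by factoring it through the Bayesian first-order condition of Corollary \ref{pro1}.

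First I would enumerate equilibrium outputs. Under private information (Corollary \ref{pro1} with $s=1$) a high-state producer outputs $\phi$ and a low-state producer outputs $L$. Under information sharing, producers condition on each other's realized state and play the complete-information Cournot outcome: both produce $q_C=1/3$ in $(H,H)$; in $(H,L)$ the high producer produces $(1-L)/2$ while the low produces $L$; and both produce $L$ in $(L,L)$. Assumption \ref{assumption-11} ($L<1/3<1/2<H$) guarantees these outputs respect capacity, and an easy check shows $\phi\in[q_C,(1-L)/2]$.

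Forming the ex-ante welfare difference $\mathbb{E}[W^s]-\mathbb{E}[W^p]$, the $(L,L)$ terms cancel. Using $U(Q)=Q-Q^2/2$ with the identity $U(a)-U(b)=(a-b)[1-(a+b)/2]$ and the non-negative gaps $\Delta_2:=\phi-1/3$ and $\Delta_1:=(1-L)/2-\phi$, the $(H,H)$ bracket reduces to $-2\Delta_2(1/3-\Delta_2)\le 0$ (shared info loses welfare here because private-info producers are more aggressive than $q_C$) while the $(H,L)$ bracket reduces to $\Delta_1\bigl[(1-L)+\Delta_1\bigr]/2\ge 0$ (shared info gains because the informed high producer expands toward the best response to $L$). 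Invoking the Bayesian FOC from Corollary \ref{pro1}, which under $P(Q)=1-Q$ simplifies to $2(1-p)\Delta_1=3p\Delta_2$ with $p:=\Pr\{H|H\}$, and substituting $p\Delta_2=\tfrac{2}{3}(1-p)\Delta_1$ into the welfare difference (after pulling out $\beta$ from $\Pr\{H,H\}=\beta p$ and $\Pr\{H,L\}=\beta(1-p)$) yields the clean factorization
\begin{equation*}
\mathbb{E}[W^s]-\mathbb{E}[W^p] \;=\; \beta\,(1-p)\,\Delta_1 \left[ (1-L) - \tfrac{4}{9} + \Delta_1 + \tfrac{4}{3}\Delta_2 \right].
\end{equation*}
Since Assumption \ref{assumption-11} forces $L<1/3$, we get $(1-L)-4/9>2/9>0$, while $\Delta_1,\Delta_2\ge 0$, so every factor is non-negative. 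Hence $\mathbb{E}[W^s]\ge\mathbb{E}[W^p]$, with strict inequality whenever $(1-p)\Delta_1>0$, i.e., whenever the Bayesian problem has non-trivial uncertainty and producers actively curtail.

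The main obstacle is the sign cancellation: the $(H,H)$ and $(H,L)$ welfare effects pull in opposite directions, and only the equilibrium relation tying $\Delta_1$ to $\Delta_2$ via the Bayesian FOC makes the $(H,L)$ gain strictly dominate the $(H,H)$ loss. That the leftover constant $(1-L)-4/9$ is bounded below by $2/9$ under Assumption \ref{assumption-11} is what makes the argument uniform across admissible parameters, and is what ultimately drives the Proposition.
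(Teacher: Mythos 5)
Your proposal is correct, and its skeleton is the same as the paper's: enumerate the four joint states, identify the complete-information Cournot outputs under sharing ($q_C=1/3$ in $(H,H)$, best reply $(1-L)/2$ against $L$ in mixed states, no change in $(L,L)$), and compare the resulting quadratic welfare $U(Q)=Q-Q^2/2$ against the Bayesian outcome with high-state output $\phi$. Where you diverge is in how the sign of the net difference is established. The paper substitutes the closed form $\phi=\frac{\beta+(1-L)(1-\beta)d}{3\beta+2(1-\beta)d}$ and reduces the whole expression to an explicit product $\Gamma(\beta,d,L)\bigl(39\beta+28d(1-\beta)-60Ld(1-\beta)-81\beta L\bigr)$, then checks each factor is positive using $L<1/3$. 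You instead keep $\phi$ implicit, parametrize the two competing effects by the gaps $\Delta_1=(1-L)/2-\phi$ and $\Delta_2=\phi-1/3$, and use the Bayesian first-order condition in the form $2(1-p)\Delta_1=3p\Delta_2$ to substitute out the $(H,H)$ loss term, arriving at the factorization $\beta(1-p)\Delta_1\bigl[(1-L)-\tfrac{4}{9}+\Delta_1+\tfrac{4}{3}\Delta_2\bigr]$; I verified this identity and the bound $(1-L)-\tfrac49>\tfrac29$, and both $\Delta_1,\Delta_2\ge 0$ follow from $L<1/3$ exactly as you claim. Your route buys two things: it makes transparent that the equilibrium FOC is precisely what forces the mixed-state gain to dominate the $(H,H)$ loss (the paper's polynomial obscures this), and it avoids ever writing down the explicit $\phi$, which suggests the argument could survive under more general conditional probabilities than the specific parametrization in \eqref{prob}. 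The paper's brute-force computation, in exchange, delivers an exact closed-form expression for the welfare gain itself, which is reused to discuss comparative statics in $d$. One small point worth making explicit in a final write-up: the non-negativity of $\Delta_1$ and $\Delta_2$ is itself a consequence of $L<s/3$ (it follows in one line from the closed form, or from the FOC plus $\Delta_1\ge0$), and it is needed before the signs of the two brackets can be read off.
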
 
Throughout this section, we let $K$ denote the equilibrium outcomes when wind producer share private information (or that information is made public), and we let $K^c$ denote equilibrium outcomes when producers compete without sharing information, as in Section \ref{sec :: equil}. 

To understand this result, consider the following. 
Let $W=\pi_1+\pi_2+CS = U(Q)$ denote welfare. Then, consider the expected value of the welfare benefits of information sharing, as follows 
\begin{align*}
\E_{w_1,w_2}[W(K,K^c)]&= \Pr\{L,H\} W_{L,H}(K,K^c)+\Pr\{H,L\} W_{H,L}(K,K^c)\nonumber\\
&\quad +\Pr\{L,L\} W_{L,L}(K,K^c)+\Pr\{H,H\} W_{H,H}(K,K^c)
\end{align*}
where the benefit to social welfare of sharing information between  wind producers at  state $\{w_1,w_2\}\in\{H,L\}^2$ is
\begin{align} \label{info_sharing}
W_{w_1,w_2}(K,K^c) \equiv W_{w_1,w_2}^K-W_{w_1,w_2}^{K^c}= Q_{w_1,w_2}^K - \frac{1}{2} \left(Q_{w_1,w_2}^K\right)^2- \bigg( Q_{w_1,w_2}^{K^c} - \frac{1}{2} \left(Q_{w_1,w_2}^{K^c}\right)^2 \bigg),
\end{align}
where $Q_{w_1,w_2}^K - \frac{1}{2} \left(Q_{w_1,w_2}^K\right)^2$ is  the social welfare at state $\{w_1,w_2\}$ when wind producers share their private information. Similarly, $Q_{w_1,w_2}^{K^c} - \frac{1}{2} \left(Q_{w_1,w_2}^{K^c}\right)^2$  denotes the social welfare when wind producers compete without sharing information. 

Equation \eqref{info_sharing} highlights the fact that information sharing has mixed effects on social welfare in different states. In particular, it increases total output quantity (and welfare) when only one producer is in the high state, but it decreases output quantity and social welfare when both producers are in the high state. However, since total production is relatively lower (and therefore $U'(Q))$ is relatively higher, when the producers are in opposite states, the net expected effect of information sharing is in favor of the benefits gained when producers are in opposite states. 

As $d$ increases, the benefits in the $\{H,L\}$ and $\{L,H\}$ states weakens and the costs incurred in the $\{H,H\}$ state increase, but the probability of being in the same state also decreases, so the proportional impact of the costs in state $\{H,H\}$ declines. Overall, information sharing improves social welfare for any $\beta, d \in (0,1)$ where Assumption \ref{assumption-11} is satisfied. 

Next, we consider the benefits of information sharing for producers' profits and show that in general they are not always aligned with the benefits for social welfare.

 \paragraph{{\color{Maroon}{\bf Is sharing information  beneficial for \emph{wind producers}?}}}
 While sharing information always improves social welfare, it is not always beneficial for wind producers. We show the answer depends on the extent of wind energy at the low state, which directly affects the variance in the aggregate output. When wind in the low state is sufficiently small, sharing information is extremely beneficial for a generator that is in its high state. As a result, ex-ante wind producers prefer to share information when $L$ is sufficiently small.   
 
 \begin{Proposition}\label{th-3}
There exists a threshold $L^*(d,\beta)$ that is increasing in $d$ (the dispersion between wind producers) and decreasing in the prior  $\beta$ so that sharing information is ex-ante beneficial for wind energy producers if only if $L<L^*(d,\beta)$.
\end{Proposition}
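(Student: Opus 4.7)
The plan is to compute the ex-ante expected profit gap $\Delta(L,d,\beta) \equiv \mathbb{E}[\pi^K_i] - \mathbb{E}[\pi^{K^c}_i]$ in closed form, show that it factors as a strictly positive prefactor times a quantity that is linear and strictly decreasing in $L$, read off $L^*$ as the unique zero of that linear factor, and then differentiate the resulting explicit expression for $L^*$ to obtain the comparative statics.

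First I would enumerate the per-state equilibrium outputs. Under $K$, producers know each other's availabilities and play the full-information Cournot equilibrium in every state: both produce $L$ in $(L,L)$, the high-state producer plays its unconstrained best response $\tfrac{1-L}{2}$ against $L$ in the two mixed states, and both play $q_C=\tfrac13$ in $(H,H)$ (the upper capacity constraint is slack by Assumption \ref{assumption-11}). Under $K^c$, producers follow $q(w_i)=\min(w_i,\phi)$ with $\phi$ from Corollary \ref{pro1}. Subtracting state-by-state, the $(L,L)$ contribution vanishes, and the symmetry $\Pr\{L,H\}=\Pr\{H,L\}$ collapses the difference to
\begin{equation*}
\Delta = \Pr\{L,H\}\,\Bigl(\phi-\tfrac{1-L}{2}\Bigr)\Bigl(\phi-\tfrac{1-3L}{2}\Bigr)+\Pr\{H,H\}\,\tfrac{(3\phi-1)(6\phi-1)}{9}.
\end{equation*}

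Next I would substitute the closed form $\phi=\frac{\beta+d(1-\beta)(1-L)}{3\beta+2d(1-\beta)}$ from Corollary \ref{pro1}. A direct calculation yields $\phi-\tfrac{1-L}{2} = -\frac{\beta(1-3L)}{2[3\beta+2d(1-\beta)]}$ and $3\phi-1 = \frac{d(1-\beta)(1-3L)}{3\beta+2d(1-\beta)}$, so both pieces of $\Delta$ carry a common factor $(1-3L)$. Pulling it out along with the probabilities, one obtains
\begin{equation*}
\Delta = \frac{d\,\beta^2(1-\beta)(1-3L)}{[\beta+d(1-\beta)]\,[3\beta+2d(1-\beta)]^2}\,G(L,d,\beta),
\end{equation*}
where $G$ reduces after collecting terms to the affine expression $36\,G(L,d,\beta) = 21\beta+16d(1-\beta)-L\bigl[81\beta+60d(1-\beta)\bigr]$. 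Since Assumption \ref{assumption-11} forces $L<q_C=\tfrac13$, the prefactor is strictly positive, so $\operatorname{sign}(\Delta)=\operatorname{sign}(G)$. Because $G$ is affine and strictly decreasing in $L$, it has the unique zero
\begin{equation*}
L^*(d,\beta) = \frac{21\beta+16d(1-\beta)}{3\bigl[27\beta+20d(1-\beta)\bigr]},
\end{equation*}
with $\Delta > 0 \iff L < L^*$. Comparing the numerator to one-third of the denominator verifies $L^*<\tfrac13$, so the threshold lies in the admissible range.

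The comparative statics then follow from the quotient rule. Writing $L^*=A/B$ with $A=21\beta+16d(1-\beta)$ and $B=81\beta+60d(1-\beta)$, a short computation gives $A_d B - A B_d = 36\beta(1-\beta)$ (the $960\,d(1-\beta)^2$ cross-terms cancel) and $A_\beta B - A B_\beta = -36 d$ (again a clean cancellation), so $\partial L^*/\partial d>0$ and $\partial L^*/\partial\beta<0$, as claimed. The main obstacle is the algebraic simplification in the second step: the fact that $(1-3L)$ divides both pieces of $\Delta$, producing a linear $G$, is the crucial structural fact. Without it, one would only know that $\Delta$ is strictly positive at $L=0$ (where information sharing strictly helps in every mixed-state realization) and vanishes at $L=\tfrac13$ (where the no-sharing equilibrium collapses to the full-information Cournot outcome), and would need separate arguments for uniqueness of the zero and for the signs of the comparative statics.
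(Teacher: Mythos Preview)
Your proof is correct and follows essentially the same route as the paper: compute the state-by-state profit differences under $K$ versus $K^c$, aggregate with the joint probabilities, factor out $(1-3L)$ times a positive prefactor, and read off $L^*=\dfrac{21\beta+16d(1-\beta)}{81\beta+60d(1-\beta)}$. Your presentation is slightly tighter than the paper's---writing the mixed-state and $(H,H)$ contributions as products before substituting $\phi$ makes the $(1-3L)$ factorization transparent---and you go further by explicitly carrying out the quotient-rule computation for $\partial L^*/\partial d$ and $\partial L^*/\partial\beta$, which the paper asserts but does not verify.
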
 


Let $D(K,K^c)$ represent the change in profits due to information sharing. The result aims to characterize the sign of \eqref{eq:D}, which represents the expected value of the benefits of information sharing for producer profits.  
\begin{align}\label{eq:D}
\E_{w_1,w_2}[D(K,K^c)]&=\Pr\{L,H\} D_{L,H}(K,K^c)+\Pr\{H,L\} D_{H,L}(K,K^c)\nonumber\\
&\quad +\Pr\{L,L\} D_{L,L}(K,K^c)+\Pr\{H,H\} D_{H,H}(K,K^c)
\end{align}
The benefit of sharing information at  state $\{w_1,w_2\}\in\{H,L\}^2$ is
 \begin{align*}
 D_{w_1,w_2}(K,K^c)=\pi_{1_{w_1,w_2}}^K+\pi_{2_{w_1,w_2}}^K-\pi_{1_{w_1,w_2}}^{K^c}-\pi_{2_{w_1,w_2}}^{K^c},
 \end{align*}
  where $\pi_{i_{w_1,w_2}}^K$ denotes  $i$'s profit, $i\in \{1,2\}$, at state $\{w_1,w_2\}$ when firms  share their private information. Similarly, $\pi_{i_{w_1,w_2}}^{K^c}$  denotes $i$'s profit when firms compete with no information sharing. 

To understand the effects, first consider the benefits of information sharing in the $\{H,H\}$ and $\{L,L\}$ states. In the $\{H,H\}$ state sharing information is always beneficial because of improved coordination. In the $\{L,L\}$ state the benefit of sharing information is always zero because both firms produce  at the low level and thus there is no benefit for having better information. 

  \begin{figure}
 \begin{center}
\includegraphics[width=0.6\textwidth]{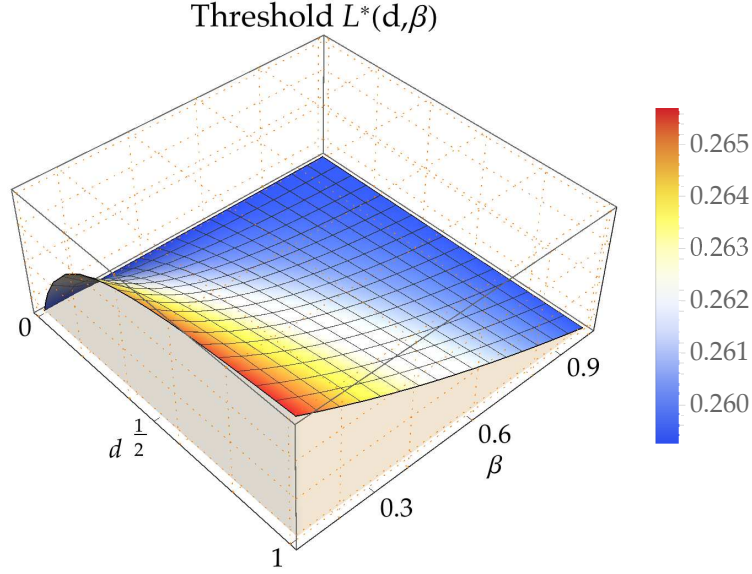}
\end{center}
\caption{This chart shows sharing information is ex-ante beneficial for wind energy producers if only if $L<L^*(d,\beta)$. Moreover, the threshold $L^*(d,\beta)$  is increasing in the dispersion $d$ and decreasing in the prior $\beta$. The gray area is for $L<L^*(d,\beta)$.}\label{fig:lstar}
\end{figure}

Now, suppose wind producer (WP) 1 is in the low state and WP 2 is in the high state. With information sharing, WP 2 achieves a best response to $w_1 = L$ and produces more energy compared to the case in which they do not share information. This benefits WP 2, because they achieve a best response based on improved information, but it hurts WP 1 because the price is reduced since WP 2 increases its output quantity. These effects favor information sharing when $L$ is relatively lower. Low $L$ improves the value of information sharing to WP 2 (because its overall adjustment is larger). Furthermore, low $L$ decreases the cost of information sharing to WP 1, because the price effect impacts a lower quantity of production since $L$ is small. The average benefit of information sharing, when producers are in different states $\{H,L\}$ or $\{L,H\}$, is decreasing in $L$. 

Overall, considering all the cases together  implies that the expected benefit of sharing information is controlled  by a threshold on the amount of wind energy in the low state. Therefore, sharing information is ex-ante beneficial for producers when wind energy at the low state is sufficiently small. This suggests that when the variance of wind availability for individual generators is high, they tend to benefit individually from information sharing; when the variance of their energy availability is low, information sharing reduces profits even though it improves social welfare. Furthermore, by increasing heterogeneity (i.e. the dispersion between the wind producers) this threshold increases, which incentivizes more wind energy producers to share their information.

\section{Conclusion} \label{Sec::conclusion}
This research links the heterogeneity in wind producer availability to social welfare in electricity markets with strategic behavior. It introduces the idea that the level of heterogeneity impacts the information signal quality that each wind producer's own availability provides; it shows that increasing heterogeneity decreases the strategic incentive of individual wind producers and increases their average output. This impact could become increasingly important as renewable energy penetration grows, especially because of the difficulties associated with monitoring market power when resource availability is not deterministic. 

The results show that increasing heterogeneity is generally beneficial because of its positive impacts on increasing diversification and on decreasing strategic curtailment. Some existing policies and subsidy models for wind energy, like state-level renewable portfolio standards, have been shown to impact the optimal investment locations for new projects; these effects should be further reviewed in light of these results. The research also highlights the benefits of publicly sharing high-quality real-time weather information, even when it is not in the best interest of producers. As such, policy makers should consider the potential benefits of improved public forecasting and publicly sharing real-time energy output data, understanding that welfare-improving policies may be opposed by electricity generators.       

 \newpage

\bibliographystyle{ormsv080}
\bibliography{dispersion_bib3_noURL}

\newpage 
 
%
%
\begin{appendix}


	
%

\section{Proofs omitted from main text} \label{Sec::Appendix}
%

\begin{proof}[{\color{Maroon}{\bf Proof of Proposition \ref{pro0}}}]
Since $P'<0, P''\leq 0$, firm $i$'s profit $\pi_i(q_i,q_j) = q_i P(q_i,q_j)$ is concave in $q_i$ regardless of the production $q_j$ by its competitor. Let firm $i$ be in the high state, i.e. $w_i=H$. By Assumption \ref{assumption-11}, $P(H)+HP'(H)<0$. Furthermore, $P(x)+xP'(x)$ is decreasing in $x$. Therefore, the resource availability in the high state, i.e. $q_i(H)=\phi \leq H$, does not bind. The optimal output $q_i(H) = \phi$ belongs to  $\arg \max_{q_i \in \mathbb{R}}\E_{w_j}[\pi_i|w_i=H]$.   Due to concavity of $\pi_i(q_i,q_j)$ in $q_i$, the first order optimality condition (the necessary and sufficient condition for optimality) implies that  $\phi$ should satisfy the following 
\begin{align}\label{star}
\Pr\{L|H\}\big[P(L+\phi)+\phi P'(\phi+L)\big]+\Pr\{H|H\}\big[P(2\phi)+\phi P'(2\phi)\big]=0,
\end{align}
given firm $j$ strategy is $q_j(w_j)=\min\{w_j,\phi\}$. 

Next, with the following Claims we show $\phi$ indeed satisfies \eqref{star} and verify that $q(L)=L$. Subsequently, we prove the symmetric equilibrium is unique. 

\noindent{{\bf Claim 1} \quad There exists a unique $\phi$  satisfying \eqref{star}. Moreover, $L<\phi<H$. }

\noindent{{\bf Proof}} \quad Let us define $f(x)\equiv \Pr\{L|H\}\big[P(L+x)+x P'(L+x)\big]+\Pr\{H|H\}\big[P(2x)+x P'(2x)\big]$. Taking a derivative of $f(x)$ with respect to $x$ implies
\begin{align*}
\Pr\{L|H\}\big[2P'(L+x)+ x P''(x+L)\big]+\Pr\{H|H\}\big[3P'(2x)+2 x P''(2x)\big]<0
\end{align*}
where the inequality follows by $P'<0, P''\leq 0$, $x \geq 0$. Thus, $f(x)$ is monotonically decreasing in $x$. 
Moreover, $f(L)=(P(2L)+LP'(2L))[\Pr\{H|H\}+\Pr\{L|H\}]>0$, which follows from Assumption \ref{assumption-11}. Furthermore, $f(H) <0 $ is bounded above by $0$:
\begin{align*}
f(H)&=\Pr\{L|H\}\big[P(L+H)+H P'(H+L)\big]+\Pr\{H|H\}\big[P(2H)+H P'(2H)\big]\\
&<(P(H)+HP'(H))[\Pr\{H|H\}+\Pr\{L|H\}]\\
&<0
\end{align*}
where the first inequality follows since $P(x+y)+xP'(x+y)$ is decreasing in $y$, and the second inequality follows by Assumption \ref{assumption-11}. Since $f(L)>0$, $f(H) < 0$, and $f'(x)<0$ thus there exists a unique $\phi$ for which $f(\phi)=0$, with $L < \phi < H$. 

\noindent{{\bf Claim 2}} \quad When $w_i=L$ then $q_i(L)=L$, given that  firm $j$'s strategy is $q_j(w_j)=\min\{w_j,\phi\}$.

\noindent{{\bf Proof}}\quad Let $g(x)\equiv \Pr\{H|L\}x P(\phi+x)+\Pr\{L|L\}x P(L+x)$.  We aim to show that $x=L$ maximizes $g(x)$ when $x\in [0,L]$.  The proof follows by contradiction. Suppose, by contrary, that the maximizer is $q_l<L$. Thus, first order optimality condition implies
that $q_l$ satisfies the following
\begin{align}\label{ql0}
\Pr\{H|L\}\big[P(\phi+q_l)+q_l P'(\phi+q_l)\big]+\Pr\{L|L\}\big[P(L+q_l)+q_l P'(L+q_l)\big]=0.
\end{align}
Since $P'<0, P''\leq 0$ thus $h(x)\equiv P(x+y)+xP'(x+y)$ is decreasing in $x\geq 0$,  for any $y$. Thus, $q_l<L$ implies that
\begin{align}\label{ql1}
h(q_l)|_{y=L}&=P(L+q_l)+q_lP'(L+q_l) \nonumber \\ 
&>P(2L)+LP'(2L) \nonumber \\
&>0
\end{align}
where the last inequality follows by Assumption \ref{assumption-11}. Furthermore, 
\begin{align}\label{ql2}
h(q_l)|_{y=\phi}=P(\phi+q_l)+q_lP'(\phi+q_l)&>\max \{P(L+\phi)+LP'(\phi+L), P(2\phi)+\phi P'(2\phi)\}\nonumber \\
&> \max \{P(L+\phi)+\phi P'(\phi+L), P(2\phi)+\phi P'(2\phi)\}\nonumber \\
& \geq 0
\end{align}
where the first inequality follows by $q_l<L<\phi$, the second follows since $\phi>L$ and $P'<0$, and the last inequality follows because \eqref{star} implies that it is impossible to have 
$0>\max \{P(L+\phi)+\phi P'(\phi+L), P(2\phi)+\phi P'(2\phi)\}$ (In fact, $P(L+\phi)+\phi P'(\phi+L)>0> P(2\phi)+\phi P'(2\phi)$).

Therefore, according to \eqref{ql1} and \eqref{ql2}, it is not possible for $q_l$ to satisfy  \eqref{ql0} when $q_l < L$. Thus, $q_i(L) \geq L$. But the constraint on production implies that $q_i(L) \leq L$, so $q_i(L) = L$.

\noindent{{\bf Claim 3}} \quad  The equilibrium described in Proposition \ref{pro0} is the unique symmetric equilibria.

The poof follows by contradiction. Suppose, by contrary, firm $j$ produces $q_j(L)=\tilde{L}$, where $\tilde{L}<L$. We show firm $i$ has incentive to deviate by producing more than $\tilde{L}$ in the low state. Suppose $q_j(H) = q_i(H)=\tilde{\phi}$; thus, $\tilde{\phi}$ (according to first order optimality condition) solves the following: 
\begin{align}\label{star-11}
\Pr\{L|H\}\big[P(\tilde{L}+\tilde{\phi})+\tilde{\phi} P'(\tilde{\phi}+\tilde{L})\big]+\Pr\{H|H\}\big[P(2\tilde{\phi})+\tilde{\phi} P'(2\tilde{\phi})\big]=0.
\end{align}
By following the arguments from Claim 1, there is indeed a unique $\tilde{\phi}$, where $\tilde{L}<\tilde{\phi}<H$, satisfying \eqref{star-11}. 

Now, let $w_i=L$. Then, evaluating  firm $i$'s marginal expected profit when $w_i=L$ and $q_i(L)=\tilde{L}$, given firm $j$'s strategy, implies
\begin{align}\label{contradiction}
\frac{\p}{\p q_i}\E_{w_j}[\pi_i(q_i,q_j)|w_i=L]|_{q_i=\tilde{L}}&=\Pr\{H|L\}\big[P(\tilde{\phi}+\tilde{L})+\tilde{L} P'(\tilde{\phi}+\tilde{L})\big]\nonumber \\
&\quad +\Pr\{L|L\}\big[P(2\tilde{L})+\tilde{L} P'(2\tilde{L})\big]\nonumber\\
&>0,
\end{align}
where the last inequality follows because of the following.  By Assumption \ref{assumption-11}, $P(2{L})+{L} P'(2{L})>0$. Also $\tilde{L}<L$ (by the above assumption) and $P(2x)+xP'(2x)$ is decreasing in $x\geq 0$. Thus $P(2\tilde{L})+\tilde{L} P'(2\tilde{L})>P(2{L})+{L} P'(2{L})>0$. 
In addition, since $P'<0,$ and $\tilde{\phi}>\tilde{L}$, thus $P(\tilde{\phi}+\tilde{L})+\tilde{L} P'(\tilde{\phi}+\tilde{L})>P(\tilde{\phi}+\tilde{L})+\tilde{\phi} P'(\tilde{\phi}+\tilde{L})>0$, where the last inequality holds because otherwise it is impossible for \eqref{star-11} to hold. Notice that
$P(\tilde{L}+\tilde{\phi})+\tilde{\phi} P'(\tilde{\phi}+\tilde{L})>P(2\tilde{\phi})+\tilde{\phi} P'(2\tilde{\phi})$. 

Finally, \eqref{contradiction} establishes a contradiction, because firm $i$ has incentive to deviate, and produce more than $\tilde{L}$ when $w_i = L$. This completes the proof.
\end{proof}

\begin{proof}[{\color{Maroon}{\bf Proof of Corollary \ref{pro1}}}]
Let $q_i(w_i)=\min\{w_i,\phi\}$. Consider $i=1$. The objective is to find $\phi$. Thus, writing the first order optimality condition implies that $\phi$ satisfies the following equality
\begin{align*}
\phi&=\frac{s-\E_{w_2}[q_2| w_1=H]}{2}\\
&=\frac{s-[L\Pr\{L|H\}+ \phi \Pr\{H|H\}]}{2}\\
&=\frac{s-[L(1-\frac{\beta}{\beta+d(1-\beta)})+\phi(\frac{\beta}{\beta+d(1-\beta)})]}{2}
\end{align*}
where $\Pr\{L|H\}=\Pr\{w_2=L|w_1=H\}=\frac{(1-\beta)d}{\beta+d(1-\beta)}$ and $\Pr\{H|H\}=\Pr\{w_2=H|w_1=H\}=\frac{\beta}{\beta+d(1-\beta)}$. The above equality gives $\phi \equiv \frac{s\beta+(s-L)(1-\beta)d}{3\beta+2(1-\beta)d}$,   completing the proof.
\end{proof}


\begin{proof}[{\color{Maroon}{\bf Proof of Lemma \ref{lem1}}}]
As shown in Proposition \ref{pro0}, production in the high state, i.e. $\phi$, solves 
\begin{align}\label{star-1}
\Pr\{L|H\}\big[P(L+\phi)+\phi P'(\phi+L)\big]+\Pr\{H|H\}\big[P(2\phi)+\phi P'(2\phi)\big]=0.
\end{align}
Furthermore, according to \eqref{prob}, since $\Pr\{L|H\}=\frac{d(1-\beta)}{\beta+d(1-\beta)}$ and $\Pr\{H|H\}=\frac{\beta}{\beta+d(1-\beta)}$ thus
\begin{align*}
\frac{\p}{\p d} \Pr\{L|H\}=\frac{\beta(1-\beta)}{(\beta+d(1-\beta))^2}&>0 \\
\frac{\p}{\p d} \Pr\{H|H\}=\frac{-\beta(1-\beta)}{(\beta+d(1-\beta))^2}&<0.
\end{align*}
Now, taking a derivative from \eqref{star-1} with respect to $d$ and taking into account that $\frac{\p}{\p d} \Pr\{H|H\}=-\frac{\p}{\p d} \Pr\{L|H\}<0$ gives
\begin{align*}
0=&\frac{\p \Pr\{L|H\}}{\p d}[P(L+\phi)+\phi P'(\phi+L)]+\Pr\{L|H\}\left [2\frac{\p \phi}{\p d}P'(L+\phi)+\phi \frac{\p \phi}{\p d}P''(\phi+L)\right]\\
&+\Pr\{H|H\}\left[3\frac{\p \phi}{\p d} P'(2\phi)+2\phi \frac{\p \phi}{\p d} P''(2\phi) \right]+\frac{\p \Pr\{H|H\}}{\p d}[P(2\phi)+\phi P'(2\phi)]\\
=&\frac{\p \phi}{\p d}\left\{\Pr\{L|H\}\big[2P'(L+\phi)+\phi P''(\phi+L)\big]+\Pr\{H|H\}\big[3P'(2\phi)+2\phi P''(2\phi)\big] \right\}\\
&+\frac{\p \Pr\{L|H\}}{\p d} [P(L+\phi)+\phi P'(L+\phi)-P(2\phi)-\phi P'(2\phi)].
\end{align*}
Therefore, 
\begin{align*}
\frac{\p \phi}{\p d}&=-\frac{\frac{\p \Pr\{L|H\}}{\p d} [P(L+\phi)+\phi P'(L+\phi)-P(2\phi)-\phi P'(2\phi)]}{\Pr\{L|H\}\big[2P'(L+\phi)+\phi P''(\phi+L)\big]+\Pr\{H|H\}\big[3P'(2\phi)+2\phi P''(2\phi)\big]}\\
&>0,
\end{align*}
where the inequality follows because: (i) $\frac{\p \Pr\{L|H\}}{\p d}>0$, (ii) $P'<0, P''\leq 0$, implying the denominator is negative, (iii) $P'<0, P''\leq 0$ and $L<\phi$, implying that $P(L+\phi)>P(2\phi), P'(L+\phi)\geq P'(2\phi)$.     
\end{proof}


\begin{proof}[{\color{Maroon}{\bf Proof of Example \ref{th-33}}}]
From \eqref{prob}, with prior probability $\Pr\{H\} = \beta$, we have 
	\begin{align}\label{chance}
		-\frac{\p \Pr\{L,L\}}{\p d}=-\frac{\p \Pr\{H,H\}}{\p d}=\frac{\p \Pr\{L,H\}}{\p d}\equiv {\color{Maroon}{\zeta}}=\frac{\beta^2(1-\beta)}{(\beta+d(1-\beta))^2}>0.
	\end{align}

The derivatives of the respective outcome probabilities are labeled as $\zeta$ and $-\zeta$ according to \eqref{chance}. By definition $\pi_i=q_iP(q_1+q_2)$. Therefore $\E_{w_1,w_2}[\pi_i]=\Pr\{L,H\}[LP(L+\phi)+\phi P(L+\phi)]+\Pr\{H,H\}\phi P(2\phi)+\Pr\{L,L\}LP(2L)$. Taking the derivative of average profit with respect to $d$ implies
 \begin{align*}
 \frac{\p}{\p d}\E_{w_1,w_2}[\pi_i]&=\underbrace{\zeta}_{>0} \underbrace{\big[LP(L+\phi)+\phi P(L+\phi)-LP(2L)-\phi P(2\phi) \big]}_{\equiv WD_{\pi}, \ \text{wind diversification}} \\ 
 &\quad+\underbrace{\frac{\p \phi}{\p d}}_{>0}\bigg\{\underbrace{\Pr\{L,H\}[LP'(\phi+L)+P'(\phi+L)\phi] +\Pr\{H,H\}[2P'(2\phi)\phi]}_{\equiv T_2, \ \text{effects of $d$ on price through its impact on strategic curtailment}}   \bigg\}  \\ 
 &\quad+\underbrace{\frac{\p \phi}{\p d}}_{>0}\bigg\{\underbrace{\Pr\{L,H\}[P(\phi+L)] +\Pr\{H,H\}[P(2\phi)]}_{\equiv T_3\ \text{the value of additional production due to reduced strategic curtailment}}   \bigg\}.  
 \end{align*}
$WD_{\pi}$ represents the effects of wind diversification, which is positive because
\begin{align*}
WD_{\pi}&=L[P(L+\phi)-P(2L)]+\phi[P(L+\phi)-P(2\phi)]\\
&>L\big[2P(L+\phi)-P(2L)-P(2\phi) \big] \\
&\geq 0
\end{align*}
where the first inequality follows because $\phi>L$ and $P(L+\phi)-P(2\phi)>0$ and the second inequality follows because of concavity in $P$, i.e. $P''\leq 0$. Thus, profit increases due to increased diversification. Note that unlike the effect of diversification on average price, which is inactive when $P''=0$, diversification improves profit even when the inverse demand curve is linear. 

Furthermore, the impacts of $d$ on strategic curtailment has two effects on profit, because reducing strategic curtailment lowers the average price but also increases the aggregate quantity; these impacts are labeled as 
$T_2$ and $T_3$, respectively. Since inverse demand is downward, i.e. $P'<0,$ the impact of increasing $d$ on markup through its effects on strategic curtailment is negative, i.e. $T_2<0$. The impact of reducing strategic curtailment on quantity is, expectedly, positive, i.e. $\frac{\p \phi}{\p d} > 0$ and $T_3>0$, because higher $d$ results in lower information and less extensive strategic curtailment. However, the overall impact of $d$, through its impacts on strategic curtailment, is to reduce price. This is because its effect on average price is greater than its effect on average quantity; i.e. $T_2+T_3<0$ because
\begin{align}
T_2+T_3&=\Pr\{H,L\}\big[P(\phi+L)+(L+\phi)P'(\phi+L) \big]+\Pr\{H,H\}\big[P(2\phi)+2\phi P'(2\phi) \big]\nonumber \\
&=\Pr\{H,L\}\big[P(\phi+L)+\phi P'(\phi+L)\big]+\Pr\{H,H\}\big[P(2\phi)+\phi P'(2\phi)\big]\nonumber \\
& \quad + \Pr\{H,L\} L P'(\phi+L)+ \Pr\{H,H\} \phi P'(2\phi) \nonumber \\ 
&= \Pr\{H,L\} L P'(\phi+L)+ \Pr\{H,H\} \phi P'(2\phi) \label{t3}\\
&<0  \label{t4}
\end{align} 
where \eqref{t3} follows from the first order condition \eqref{star}, and \eqref{t4} follows because inverse demand is downward, i.e. $P'<0$. Therefore, the effects od $d$ on diversification increase profits, and the effects of $d$ on strategic curtailment decrease profits. The overall impact of heterogeneity is ambiguous. Figure \ref{fig:profit} provides examples showing that profit can be increasing or decreasing in $d$.  
\end{proof}

\begin{proof}[{\color{Maroon}{\bf Proof of Proposition \ref{th-firms}}}]

By definition $\pi_i(w_1,w_2)=q_i(w_i)(s-q_1(w_1)-q_2(w_2))$ where $q_i(w_i)$ is explicitly given by Corollary  \ref{pro1}, for $w_i\in \{L,H\}$ and $i\in \{1,2\}$. The expected value of profit for producer $i$ is given \eqref{eq:30}.
\begin{align}\label{eq:30}
E_{w_1,w_2}[\pi_i]=\Pr\{L,H\} [\pi_i(L,H)+\pi_i(H,L)]+\Pr\{H,H\} \pi_i(H,H)+\Pr\{L,L\} \pi_i(L,L)
\end{align}
As before, from \eqref{prob}, $\Pr\{L,L\}=(1-\frac{d\beta}{\beta+d(1-\beta)})(1-\beta)$, $\Pr\{L,H\}= (1-\beta)\frac{d\beta}{\beta+d(1-\beta)}$, and $\Pr\{H,H\}=\beta\frac{\beta}{\beta+d(1-\beta)}$. In addition, 
\begin{align}\label{eq:31}
\pi_i(L,H)&=L(s-L-\phi)\\
\pi_i(H,L)&=\phi(s-L-\phi)\\
\pi_i(L,L)&=L(s-2L)\\
\pi_i(H,H)&=\phi(s-2\phi)\label{eq:34}
\end{align}
 where, as shown in Corollary \ref{pro1}, $\phi=\frac{s\beta+(s-L)(1-\beta)d}{3\beta+2(1-\beta)d}$. By plugging \eqref{eq:31}-\eqref{eq:34} into \eqref{eq:30}, the total (ex-ante) wind producers' surplus becomes
\begin{align}\label{PS}
\E_{w_1,w_2}[\pi_i]=\frac{\beta}{4}+L(1-2\beta)+L^2(\frac{15}{4}\beta-2)-\frac{\beta^2(s-3L)(s-4L)}{2(3\beta+2d(1-\beta))}+\frac{\beta^3(s-3L)^2}{4(3\beta+2d(1-\beta))^2}.
\end{align} 
Next, we characterize how $d$ (extend of heterogeneity) affects profits. Taking a derivative with respect to $d$ from \eqref{PS} implies 
\begin{align}\label{eq:t1}
\frac{\p }{\p d}\E_{w_1,w_2}[\pi_i]&=\frac{-\beta^3(s-3L)^2(1-\beta)}{(3\beta+2d(1-\beta))^3}+\frac{\beta^2(1-\beta)(s-3L)(s-4L)}{(3\beta+2d(1-\beta))^2}\nonumber \\
&=\frac{\beta^2(1-\beta)(s-3L)}{(3\beta+2d(1-\beta))^2}\ \left[\frac{-\beta(s-3L)}{3\beta+2d(1-\beta)}+s-4L\right]\nonumber\\
&=\underbrace{\frac{\beta^2(1-\beta)(s-3L)}{(3\beta+2d(1-\beta))^3}}_{>0}\ \left[\beta(2s-9L)+d(1-\beta)(2s-8L)\right].
\end{align} 
From the last equality we obtain: If $L<\frac{2s}{9}\equiv L_1$, then $2s-9L>0$ and $2s-8L>0$; thus, $\frac{\p }{\p d}\E_{w_1,w_2}[\pi_i]>0$ and, consequently, $\arg \max_{d\in [0,1]} \ \E_{w_1,w_2}[\pi_i]= 1$. If $L>\frac{2s}{8}\equiv L_2$ then $2s-8L<0$ and $2s-9L<0$; thus $\frac{\p }{\p d}\E_{w_1,w_2}[\pi_i]<0$ and, consequently, $\arg \max_{d\in [0,1]} \ \E_{w_1,w_2}[\pi_i]= 0.$

 
In sum, \eqref{eq:t1} implies that if $L<L_1$ then $\frac{\p }{\p d}\E_{w_1,w_2}[\pi_i]>0$ and thus $\max_{d\in [0,1]} \E_{w_1,w_2}[\pi_i]$ happens at $d=1$. Similarly, if $L>L_2$ then $\frac{\p }{\p d}\E_{w_1,w_2}[\pi_i]<0$ and thus $\max_{d\in [0,1]} \E_{w_1,w_2}[\pi_i]$ occurs  at $d=0$.
For the sake of completeness, we further note that $\arg \max_{d\in [0,1]} \ \E_{w_1,w_2}[\pi_i]\in\{0,1\}$ for \emph{any} $L<\frac{s}{3}$.\footnote{This is because any interior $\tilde{d} \in (0,1)$ such that $\frac{\p }{\p d}\E_{w_1,w_2}[\pi_i]=0$ implies that $\frac{\p }{\p d}\E_{w_1,w_2}[\pi_i] \big|_{d>\tilde{d}}>0$. Thus any $d \in \{0,1\}$ for any $d$ that maximizes profits.}
\end{proof}

\begin{proof}[{\color{Maroon}{\bf Proof of Proposition \ref{th-gen_SC}}}]
		Let $\phi^d$ be the equilibrium high state production that for each $i \in \{1,...,N+1\}$ satisfies 
		\begin{equation} \label{multFOC_d}
		\mathbb{E}_{S^d_{-i}}[P(\phi^d+S^d_{-i}\phi^d + (N-S^d_{-i})L) + \phi^d P'(\phi^d + S^d_{-i}\phi^d + (N-S^d_{-i})L)|w_i=H] = 0.
		\end{equation}		
		The left hand side of \eqref{multFOC_d} is decreasing in $\phi^d$. Its derivative with respect to $\phi^d$ is 		
		\begin{equation} \label{multFOC_dd}
		\mathbb{E}_{S^d_{-i}}[(2+ S^d_{-i}) P'(\phi^d+S^d_{-i}\phi^d + (N-S^d_{-i})L) + \phi^d P''(\phi^d + S^d_{-i}\phi^d + (N-S^d_{-i})L)|w_i=H] < 0
		\end{equation}
		which follows because $2 + S^d_{-i} >0$, $\phi^d >0$, $P'<0$, and $P'' \leq 0$. 
		
		Now, consider $d' > d$ and assume towards a contradiction that $\phi^{d'} < \phi^d$, where $\phi^d$ and $\phi^{d'}$ satisfy the first order condition in \eqref{multFOC_d}, with the expectations taken according to their respective random variables $S^d_{-i}$ and $S^{d'}_{-i}$: 
		\begin{align*} \label{dprime_fails}
		0 & = \mathbb{E}_{S^d_{-i}}[P(\phi^d+S^d_{-i}\phi^d + (N-S^d_{-i})L) + \phi^d P'(\phi^d + S^d_{-i}\phi^d + (N-S^d_{-i})L)|w_i=H] \\
		& < \mathbb{E}_{S^d_{-i}}[P(\phi^{d'}+S^d_{-i}\phi^{d'} + (N-S^d_{-i})L) + \phi^{d'} P'(\phi^{d'} + S^d_{-i}\phi^{d'} + (N-S^d_{-i})L)|w_i=H] \\
		& \leq \mathbb{E}_{S^{d'}_{-i}}[P(\phi^{d'}+S^{d'}_{-i}\phi^{d'} + (N-S^{d'}_{-i})L) + \phi^{d'} P'(\phi^{d'} + S^{d'}_{-i}\phi^{d'} + (N-S^{d'}_{-i})L)|w_i=H].
		\end{align*}  
		The first inequality is due to \eqref{multFOC_dd}, with $\phi^{d'} < \phi^d$. The second inequality is due to Assumption \ref{assumption-FOSD} with $F(x) = P(\phi^{d'}+x \phi^{d'} + (N-x)L) + \phi^{d'} P'(\phi^{d'} + x\phi^{d'} + (N-x)L)$ decreasing in $x$. But the result implies that $\phi^{d'}$ does not satisfy the first order condition of the equilibrium, so we have a contradiction. Therefore, $\phi^d$ is (weakly) increasing in $d$. 
\end{proof}

\begin{proof}[{\color{Maroon}{\bf Proof of Proposition \ref{th-gen_welfare}}}]
	
By definition, $W = U(Q)$ where $Q = \sum_{i=1}^{N+1} q_i$. Note that $U'(Q) = P(Q) \geq 0$ for any equilibrium $Q$, and $P' < 0$, $\phi > 0$. Furthermore, note that we can write $Q$ as a function of $d$ given any realization of availability, $Q(d) = \sum_{i} s_i (\phi^d-L) + (N+1) L = (\phi^d - L)S + (N+1) L$, which is increasing and linear in $S$. This implies that the expectation $\mathbb{E}[W]$, which is taken over the random states of all of the producers, is fully defined by the probability distribution of $S$. Let $\vec{s}$ be the random vector of states of each of the producers, i.e. $\vec{s} = [s_1,s_2,...s_{N+1}]$. Then for all $i$, $s_i$, with $S = \sum_{i} s_i$, we  have that $\mathbb{E}_{\vec{s}}[W(Q(d))] = \mathbb{E}_{S}[W(Q(d))]$. Then for $d' > d$: 
	\begin{align}
	\mathbb{E}_{\vec{s}^{d'}}[W] - \mathbb{E}_{\vec{s}^d}[W] & = \mathbb{E}_{S^{d'}}[W(Q(d'))] - \mathbb{E}_{S^d}[W(Q(d))]  \nonumber\\
	& = \sum_{k=1}^{N+1} U(Q(d')) \Pr\{S^{d'} = k\} - \sum_{k=1}^{N+1} U(Q(d)) \Pr\{S^d = k\}  \nonumber\\
	& \geq \sum_{k=1}^{N+1} U(Q(d)) (\Pr\{S^{d'} = k\} - \Pr\{S^d = k\})   \nonumber\\
	& = \mathbb{E}_{\vec{s}^{d'}}[U(Q(d))] - \mathbb{E}_{\vec{s}^d}[U(Q(d))] \geq 0. \nonumber
	\end{align}  
	The first line is because $S$ provides equivalent information for the expectation as explained above. The second line is an expansion of the expectations for the discrete random variables, and the third line is because of Proposition \ref{th-gen_SC} with $Q(d)$ increasing in $d$ and $U(Q)$ increasing in $Q$. The fourth line rewrites the third as a difference of expectations, which is non-negative by the definition of second-order stochastic dominance with $U$ increasing and concave in $S$ since $U$ is increasing and concave and $Q(S) = (\phi - L)S + (N+1) L $ is linear and increasing in $S$. 
\end{proof}

\begin{proof} [{\color{Maroon}{\bf Proof of Example \ref{prop :: trad_gen eq}}}]
	For the linear inverse demand, conditions \eqref{A} and \eqref{B} are represented by \eqref{Alin} and \eqref{Blin}.
	\begin{equation} \label{Alin}
	\Pr\{L|H\} (s - L - 2 \phi - x) + \Pr\{H|H\} (s - 3 \phi - x) = 0 
	\end{equation} 
	\begin{equation} \label{Blin}
	\Pr\{L,L\} (s - 2 L - 2 x) + 2 \Pr\{L,H\} (s - L - \phi - 2 x) + \Pr\{H,H\} (s - 2 \phi - 2 x) - c = 0 
	\end{equation} 
	Under linear inverse demand, \eqref{Alin} is the first order condition for the wind generators, and \eqref{Blin} is the first order condition for the traditional generator with constant marginal cost $c$. Equation \eqref{Alin} allows us to write the expression for $\phi$ in equilibrium, analogous to the result from Corollary \ref{pro1} but including the effect of $x$.
	\begin{equation} \label{phiw}
	\phi = \frac{(s-x)\beta + (s-x-L)d(1-\beta)}{3\beta + 2d(1-\beta)} = \frac{s\beta + (s-L)d(1-\beta) - x(\beta + d(1-\beta))}{3\beta + 2d(1-\beta)} 
	\end{equation} 
	We can rearrange the generator's first order condition \eqref{Blin} to obtain equation \eqref{wphi}. 
	By the assumption, with $\phi < H$, the traditional generator chooses to participate in the market, i.e. \eqref{wphi} is solved by some $x \geq 0$. Combining equations \eqref{phiw} and \eqref{wphi}, we have the result in \eqref{phi}. The uniqueness of the symmetric\footnote{The equilibrium is symmetric in the sense that the wind producers have identical strategies; the traditional producer has a different objective and a different strategy from the wind producers.} equilibrium is clear from the fact that the result for $\phi$ in \eqref{phi} does not depend on $x$. 
\end{proof}

\begin{proof} [{\color{Maroon}{\bf Proof of Proposition \ref{prop :: trad_gen welfare}}}]
	Average output is given by 
	\begin{align*}
	\mathbb{E}[q_1(w_1)+q_2(w_2)+x] & = 2 \Pr\{L,H\}(L+\phi + x) + \Pr\{L,L\}(2L + x) + \Pr\{H,H\}(2\phi + x)  \\ &= x + 2 \beta \phi + 2 (1-\beta) L.
	\end{align*}
	Taking the derivative with respect to $d$, we have that
	\begin{equation*}
	\frac{\partial \mathbb{E}[q_1(w_1)+q_2(w_2)+x]}{\partial d} = \beta \frac{\partial \phi}{\partial d} > 0,
	\end{equation*} 
	due to the linearity of expectation and because $\frac{\partial \mathbf{E}[q_i(w_i)]}{\partial d} = \beta \frac{\partial \phi}{\partial d}$, for $i \in \{1,2\}$, and $\frac{\partial x}{\partial d} = -\beta \frac{\partial \phi}{\partial d}$.	Now, average welfare is given by  
	\begin{equation*}
	\mathbb{E}_{w_1,w_2}[W] = 2 \Pr\{L,H\}U(L+\phi + x) + \Pr\{L,L\}U(2L + x) + \Pr\{H,H\}U(2\phi + x) - c(x).
	\end{equation*}
	Taking the derivative with respect to $d$,
	\begin{align*}
	\begin{split}
	\frac{\partial \mathbb{E}_{w_1,w_2}[W]}{\partial d} & = \zeta(2U(L+\phi + x) - U(2L + x) - U(2\phi + x)) + \frac{\partial x}{\partial d} \Pr\{L,L\}P(2L + x) \\ & + 2 (\frac{\partial \phi}{\partial d} + \frac{\partial x}{\partial d})  \Pr\{L,H\}P(L+\phi + x) +  (2 \frac{\partial \phi}{\partial d} + \frac{\partial x}{\partial d})\Pr\{H,H\}P(2\phi + x) - c \frac{\partial x}{\partial d} 
	\end{split} \nonumber \\
	\begin{split} &= \Gamma + 2 \frac{\partial \phi}{\partial d}  \Pr\{L,H\}P(L+\phi + x) +  2 \frac{\partial \phi}{\partial d} \Pr\{H,H\}P(2\phi + x) + \frac{\partial x}{\partial d} x  \end{split} \nonumber \\
	\begin{split}
	& = \Gamma + \beta \frac{\partial \phi}{\partial d} (2 \phi - x) \geq \Gamma \geq 0.
	\end{split}
	\end{align*}
	
	\highlighta{The second and third lines replace the first term with $\Gamma = \zeta(2U(L+\phi + x) - U(2L + x) - U(2\phi + x))$ to concatenate the expression; $\Gamma$ is the impact of wind diversification on welfare, and it is weakly positive due to the concavity of $U$, as explained in Proposition \ref{prop2}. The second equality uses the first order condition from \eqref{Blin}. The third equality uses the conditional probabilities $\Pr\{L,H\} = \Pr\{L | H\} \beta$ and $\Pr\{H,H\} = \Pr\{H | H\} \beta$, along with the first order condition from \eqref{Alin}. The expression $2 \phi - x$ is minimized when $c = 0$ and when $d = 0$. Therefore, by using  \eqref{phi} and \eqref{wphi},  with $c,d = 0$, we confirm that $2 \phi - x \geq 0$. This fact and $\Gamma \geq 0$ establish the inequalities in the final line and complete the proof. }
\end{proof}


\begin{proof} [{\color{Maroon}{\bf Proof of Proposition \ref{prop :: trad_gen price}}}]
	Average price can be expressed as
	\begin{equation*}
	\mathbb{E}_{w_1,w_2}[P] = 2 \Pr\{L,H\}P(L+\phi + x) + \Pr\{L,L\}P(2L + x) + \Pr\{H,H\}P(2\phi + x).
	\end{equation*}
	Taking the derivative with respect to $d$ gives: 
	\begin{equation}
	\begin{aligned}
	\frac{\partial \mathbb{E}_{w_1,w_2}[P]}{\partial d} &= \zeta (2P(L+\phi + x) - P(2 L + x) - P(2 \phi + x)) - 2 \frac{\partial \phi}{\partial d} (\Pr\{L,H\} + \Pr\{H,H\}) - \frac{\partial x}{\partial d} \nonumber \\
	&= \zeta 0 - 2\beta \frac{\partial \phi}{\partial d} + \beta \frac{\partial \phi}{\partial d} = -\beta \frac{\partial \phi}{\partial d}.  \nonumber
	\end{aligned}
	\end{equation}
	This is due to the fact that $P(x)$ represents linear inverse demand, so the first term sums to 0 and so $\forall x$ $P'(x)=-1$, and also due to the fact that $\Pr\{L,H\} + \Pr\{H,H\} = \beta$. This completes the proof. As in the two-producer case, for a linear inverse demand curve, average price is monotonically decreasing in $d$.
	
	
	%
	
\end{proof}


\begin{proof} [{\color{Maroon}{\bf Proof of Proposition \ref{collusion_poss}}}]
	We seek to prove that there always exists a suitable $t$ that satisfies \eqref{IC}, \eqref{IRH}, and \eqref{IRL} when $\gamma=0$. Let $\gamma=0$ by assumption. Rearranging the IC condition gives
	\begin{equation}\label{IC2}
	t \leq \frac{1}{2}\Pr\{H|H\} + \Pr\{L|H\}(1- \frac{\pi_L}{\pi_M}).
	\end{equation}
	Rearranging the IR-H condition provides another upper bound on $t$:
	\begin{equation} \label{IRH2}
	t \leq 1+ \frac{\beta}{2 d(1-\beta)} - \frac{\beta}{d(1-\beta)} \frac{\phi (s-2\phi)}{\pi_M} - \frac{\phi (s-\phi - L)}{\pi_M} - \frac{\gamma}{\Pr\{L|H\} \pi_M}.
	\end{equation}
	Rearranging the IR-L condition provides a lower bound for $t$:
	\begin{equation} \label{IRL2}
	t \geq \frac{L(s-\phi - L)}{\pi_M} + \frac{\gamma}{\Pr\{H|L\} \pi_M}.
	\end{equation}
	The proof follows by showing that the lower bound for $t$, the right-hand side (RHS) of \eqref{IRL2} is always less than or equal to the upper bounds for $t$ from the RHS of \eqref{IC2} and \eqref{IRH2} when $\gamma = 0$. Thus, there is always a nonempty feasible subset of $\mathbb{R}$ from which a transfer $t$ can be selected that satisfies the criteria for collusion.  
	 
	First, with $\gamma = 0$, the RHS of \eqref{IRL2} is always less than the RHS of \eqref{IC2}. The RHS of \eqref{IRL2},
	\begin{equation*}
	\frac{L(s-\phi - L)}{\pi_M} < \frac{L(s-2L)}{\pi_M} \leq 1/2,
	\end{equation*} 
	where the first inequality is due to $\phi > L$ and the second is because $L(s-2L)$ is maximized at $\frac{s^2}{8}$ when $L = s/4$, and because $\pi_M = \frac{s^2}{4}$. Furthermore, from the RHS of \eqref{IC2},
	\begin{equation*}
	\frac{1}{2}\Pr\{H|H\} + \Pr\{L|H\}(1- \frac{\pi_L}{\pi_M}) \geq \frac{1}{2}\Pr\{H|H\} + \Pr\{L|H\}(1- \frac{4}{8}) = 1/2
	\end{equation*} 
	where the inequality is because $\pi_L \leq \frac{s^2}{8}$ and the equality is because the expression is a weighted probabilistic sum of two values equal to $1/2$. Therefore, the lower bound on $t$, \eqref{IRL2} is at most $\frac{1}{2}$, and one upper bound on $t$, \eqref{IC2}, is at least $\frac{1}{2}$.
	
	Now, it remains to be shown that the RHS of \eqref{IRH2} (the other upper bound on $t$) is at least as great as the RHS of \eqref{IRL2}. Equivalently, their difference $T$ is greater than or equal to zero:
	\begin{equation*} 
	\begin{aligned} 
	T & = 1+ \frac{\beta}{2 d(1-\beta)} - \frac{\beta}{d(1-\beta)} \frac{\phi (s-2\phi)}{\pi_M} - \frac{\phi (s-\phi - L)}{\pi_M} - \frac{L(s-\phi - L)}{\pi_M} \\ 
	& = \frac{\beta}{d(1-\beta)}(\frac{1}{2} - \frac{\phi (s-2\phi)}{\pi_M}) + \frac{\pi_M - (\phi+L)(s-\phi-L)}{\pi_M} \geq 0.
	\end{aligned} 
	\end{equation*}
	\highlighta{The second line is a rearrangement of the first. The first term in the second line is positive because $\pi_M = \max_\phi 2 \phi (s - 2 \phi)$ by the definition of monopoly profits; therefore, $\frac{\phi (s - 2 \phi)}{\pi_M} \leq \frac{1}{2}$. The second term is positive since $\pi_M = \max_{\phi,L} (\phi + L)(s-\phi - L)$, again by the definition of Monopoly profits. Therefore, none of the aforementioned constraints contradict, and there always exists some $t$ that allows the producers to collude.}     
\end{proof}

%

\begin{proof} [{\color{Maroon}{\bf Proof of Proposition \ref{th-6}}}]

\quad Let $BR_i(\zeta)=\frac{1- \zeta}{2}$ denote $i$'s best reply  when $q_j=\zeta$. We aim to characterize the expected value of information sharing, which is given by 
\begin{align}\label{eq:W}
\E_{w_1,w_2}[W(K,K^c)]&=\Pr\{L,H\} W_{L,H}(K,K^c)+\Pr\{H,L\} W_{H,L}(K,K^c)\nonumber\\
&\quad +\Pr\{L,L\} W_{L,L}(K,K^c)+\Pr\{H,H\} W_{H,H}(K,K^c),
\end{align}
where (according to \eqref{prob}), $\Pr\{L,L\}=(1-\beta)(1-\frac{d\beta}{\beta+d(1-\beta)})$, $\Pr\{H,L\}=\Pr\{L,H\}= (1-\beta)\frac{d\beta}{\beta+d(1-\beta)}$, and $\Pr\{H,H\}=\beta\frac{\beta}{\beta+d(1-\beta)}$.
 
 The benefit of cooperation/sharing information at  state $\{w_1,w_2\}\in\{H,L\}^2$, denoted by $W_{w_1,w_2}(K,K^c)$, is
given by \eqref{info_sharing},
and $Q_{w_1,w_2}^K$ denotes total output at state $(w_1,w_2)$ when firms cooperate and share their private information. Similarly, $Q_{w_1,w_2}^{K^c}$  denotes total output when firms compete with no shared information. We consider four separate cases as follows:

\noindent{\bf{Case 1: $\{L,H\}$}.} \quad In this case WP 1 is in the low state and WP 2 is in the high state. Information sharing increases total output because WP 2 can produce more energy, knowing for certain that WP 1 can only produce $L$ units rather than $\mathbb{E}_{w_1}[q_1 | w_2 = H] = L\Pr\{L|H\}+ \phi \Pr\{H|H\} > L$. Therefore, 
\begin{equation*}
	Q_{L,H}^K = L + BR_2(L)= L + \frac{1- L}{2} = \frac{1 + L}{2}
\end{equation*}  
whereas $Q_{L,H}^{K^c} = L + \phi$. 

\noindent{\bf{Case 2: $\{H,L\}$}} \quad This case by symmetry is identical to Case 1. 
\medskip
\medskip

\noindent{\bf{Case 3: $\{H,H\}$}} 
\quad In this case, information sharing reduces total output because the producers learn that the opposing producers have the ability to produce at the Cournot level, since information sharing eliminates the possibility that the other producer is in the low state (which causes them to overproduce, given that the other producer is in the high state). Under information sharing, both producer produce at the Cournot level. Therefore, $Q_{L,H}^K = 2 q_C = \frac{2}{3}$. In the absence of information sharing $Q_{L,H}^{K^c} = 2 \phi$. 

\medskip
\medskip

\noindent{\bf{Case 3: $\{L,L\}$}}   
\quad In this case WP 1 and WP 2 are both in the low state. Thus there is no difference between cooperation and competition since both produce at the $L$ level, meaning that $W_{L,L}(K,K^c)=0.$ 

Plugging these results into \eqref{eq:W} and \eqref{info_sharing}, we have that
\begin{align*}
	\E_{w_1,w_2}[W(K,K^c)]&= \frac{d\beta (1-\beta) }{\beta+d(1-\beta)} \bigg( 2 \frac{1+L}{2} - \left( \frac{1+L}{2}\right)^2 - 2 \left(L + \phi \right) + \left(L + \phi \right)^2 \bigg) \\ 
	& + \frac{\beta^2 }{\beta+d(1-\beta)} \bigg(\frac{2}{3} - \frac{1}{2} \left(\frac{2}{3}\right)^2 - 2\phi + \frac{1}{2} \left(2 \phi \right)^2 \bigg).  
\end{align*}
By rearranging the above equation, we have that
\begin{equation} \label{W_expanded} 
\E_{w_1,w_2}[W(K,K^c)]= \Gamma(\beta,d,L) \bigg( 39\beta + 28 d (1-\beta) - 60 L d (1-\beta) - 81 \beta L \bigg).  
\end{equation} 
The common factor 
\begin{equation*}
	\Gamma(\beta,d,L) = \frac{\beta^2 d (1-3 L) (1-\beta) }{36\left(\beta+d(1-\beta)\right)\left(3 \beta + 2 d (1-\beta)\right)^2}
\end{equation*}
is positive because $L < 1/3$ (equivalently, $L < s/3$ for general $s$) by Assumption \ref{assumption-11}, $(1-\beta) \in (0,1)$. Similarly, since $L < 1/3$, the additive terms of \eqref{W_expanded} 
\begin{align*}
	39\beta + 28 d (1-\beta) - 60 L d (1-\beta) - 81 \beta L & > 	39\beta + 28 d (1-\beta) - 20 d (1-\beta) - 27 \beta \\ &= 12 \beta + 8 d (1-\beta) > 0.
\end{align*}
The social welfare benefit of information sharing $\E_{w_1,w_2}[W(K,K^c)]$ is the product of two positive terms, and therefore $E_{w_1,w_2}[W(K,K^c)]>0$. 
\end{proof}

\begin{proof}[{\color{Maroon}{\bf Proof of Proposition \ref{th-3}}}]
 \quad Let $BR_i(\zeta)=\frac{1- \zeta}{2}$ denote $i$'s best reply  when $q_j=\zeta$. We aim to characterize the following 
\begin{align*}\label{eq:D:pr}
\E_{w_1,w_2}[D(K,K^c)]&=\Pr\{L,H\} D_{L,H}(K,K^c)+\Pr\{H,L\} D_{H,L}(K,K^c)\nonumber\\
&\quad +\Pr\{L,L\} D_{L,L}(K,K^c)+\Pr\{H,H\} D_{H,H}(K,K^c)
\end{align*}
 where the benefit of cooperation/sharing information at  state $\{w_1,w_2\}\in\{H,L\}^2$ is
 \begin{equation*}
 D_{w_1,w_2}(K,K^c)=\pi_{1_{w_1,w_2}}^K+\pi_{2_{w_1,w_2}}^K-\pi_{1_{w_1,w_2}}^{K^c}-\pi_{2_{w_1,w_2}}^{K^c}
 \end{equation*}
  and $\pi_{i_{w_1,w_2}}^K$ denotes  $i$'s profit at state $(w_1,w_2)$ when firms cooperate and share their private information. Similarly, $\pi_{i_{w_1,w_2}}^{K^c}$  denotes $i$'s profit when firms compete with no information sharing (no cooperation). We consider four possible cases separately as follows.

\noindent{\bf{Case 1: $\{L,H\}$}.} \quad 
  In this case WP 1 is in the low state and WP 2 is in the high state. Thus, information sharing is highly beneficial for WP 2 (and detrimental for WP 1). This is due to the fact that 
  producer 2 will strategically overproduce and thus price goes down, hurting producer 1. 

This overproduction is beneficial for producer 2, even though it results in a decrease in the equilibrium  price. In this case, cooperation is beneficial for WP 2 and detrimental for WP 1, compared to competition with no  information sharing. It is intuitive and important to note that the extra  benefit to WP 2 from information sharing is {\bf particularly high}   when $L$  is small.   More precisely, 
  \begin{align*}
  \pi_{1_{L,H}}^K&=L[1-(L+BR_2(L))]\\
  \pi_{2_{L,H}}^K&=BR_2(L)[1-(L+BR_2(L))],
  \end{align*}
  where $BR_2(L)=\frac{1- L}{2}$. 
  With no cooperation, each WP supplies according to the original equilibrium. Thus,
    \begin{align*}
  \pi_{1_{L,H}}^{K^c}&=L[1-(L+\phi)]\\
  \pi_{2_{L,H}}^{K^c}&=\phi [1-(L+\phi)]
  \end{align*}
  with $\phi=\frac{\beta+(1-L)(1-\beta)d}{3\beta+2(1-\beta)d}$ by Corollary \ref{pro1}.  With algebra we can show
    \begin{align*}
  D_{H,L}(K,K^c)&=\pi_{1_{H,L}}^K+\pi_{2_{H,L}}^K-\pi_{1_{H,L}}^{K^c}-\pi_{2_{H,L}}^{K^c}\\
  &=\left(\frac{\beta(1-3L)}{2(2d(1-\beta)-3\beta)}\right)^2- L\left(\frac{\beta(1-3L)}{2(2d(1-\beta)-3\beta)}\right).
  \end{align*}
  
  \medskip
  \medskip
  
  \noindent{\bf{Case 2: $\{H,L\}$}} \quad This case by symmetry is similar to Case 1. Thus, 
    \begin{align*}
  D_{H,L}(K,K^c)&=\pi_{1_{H,L}}^K+\pi_{2_{H,L}}^K-\pi_{1_{H,L}}^{K^c}-\pi_{2_{H,L}}^{K^c}\\
  &=\left(\frac{\beta(1-3L)}{2(2d(1-\beta)-3\beta)}\right)^2- L\left(\frac{\beta(1-3L)}{2(2d(1-\beta)-3\beta)}\right).
  \end{align*}
  
  \medskip
  \medskip
  
  \noindent{\bf{Case 3: $\{H,H\}$}} \quad  In this case WP 1 and WP 2 are both in the high state. Thus, cooperation is highly beneficial for both of them because by reducing uncertainty  they both optimally coordinate and produce at the corresponding Cournot level, i.e.  $q_C=\frac{1}{3}$. Thus the profit with information sharing is characterized as follows:
  \begin{align*}
  \pi_{1_{H,H}}^K=\pi_{2_{H,H}}^K=q_C[1-2q_C]
  \end{align*}
With no cooperation each WP best replies to her belief; thus,
  \begin{align*}
  \pi_{1_{H,H}}^{K^c}= \pi_{1_{H,H}}^{K^c}= \phi [1-2\phi]
  \end{align*}
  where (as specified above) $\phi=\frac{\beta+(1-L)(1-\beta)d}{3\beta+2(1-\beta)d}$. Using algebra, we can show
    \begin{align*}
  D_{H,H}(K,K^c)&=\pi_{1_{H,H}}^K+\pi_{2_{H,H}}^K-\pi_{1_{H,H}}^{K^c}-\pi_{2_{H,L}}^{K^c}\\
  &=2q_C[1-2q_C]-2 \phi [1-2\phi]\\
  &=\frac{2}{9}+\frac{\left(\frac{2(\beta+d(1-L)(1-\beta))}{3\beta+2d(1-\beta)}-1\right)(\beta+d(1-L)(1-\beta))}{3\beta+2d(1-\beta)}\\
  &\geq 0
  \end{align*}
where the last inequality follows because $\phi>q_C=\frac{1}{3}$ (because of the overproduction of each WP producing $\phi$ in this state due to uncertainty and mis-coordination).\footnote{Note that $f(x)=x(1-2x)$ is concave in $x$ and is maximized at $x=\frac{1}{4}$, thus $f(q_C)>f(\phi)$, because $\phi>q_C=\frac{1}{3}>\frac{1}{4}$. }
 \medskip
  \medskip
  
  \noindent{\bf{Case 4: $\{L,L\}$}} \quad  In this case WP 1 and WP 2 are both in the Low state. Thus there is no difference between cooperation and competition since both produce at the $L$ level, meaning that $D_{L,L}(K,K^c)=0.$ 
   
   \medskip
  \medskip
  
  \noindent Plugging the results of the above cases into \eqref{eq:D} implies that
  \begin{align*}\label{eq:D1}
 \E_{w_1,w_2}[D(K,K^c)]= \frac{\beta^2d(1-3L)(1-\beta)}{(3\beta+2d(1-\beta))^2(\beta+d(1-\beta))}\big[21\beta+16d(1-\beta)-L(81\beta+60d(1-\beta))\big].
  \end{align*}
  As a result there exists a \emph{unique} $L^*(d,\beta)$ such that 
  $\E_{w_1,w_2}[D(K,K^c)]>0$ if and only if 
  \begin{equation*}
  L<L^*(d,\beta)=\frac{21\beta+16d(1-\beta)}{81\beta+60d(1-\beta)}.
  \end{equation*}
  In the above expression, note that $L^*(d,\beta)<\frac{1}{3}$.
\end{proof}

\end{appendix}

\end{document}